
\documentclass[10pt,english]{article}
\usepackage[utf8]{inputenc}
\usepackage[left=3cm,right=3cm,top=2cm,bottom=2.5cm]{geometry}






\usepackage{amsmath}
\usepackage{amssymb}
\usepackage{mathtools}
\usepackage{amsthm}
\usepackage{natbib}
\usepackage{hyperref}
\usepackage{cleveref}

\usepackage{algorithm}
\usepackage{algorithmic}
\usepackage{booktabs}



\def\jie#1{\textcolor{red}{Jie: #1}}

\usepackage{physics}
\usepackage[T1]{fontenc}
\usepackage{caption}
\usepackage{graphicx}
\usepackage{multirow}
\usepackage{makecell}
\captionsetup[figure]{font=small}
\captionsetup[table]{font=small}

\DeclareMathOperator*{\Ex}{\mathbb{E}}
\DeclareMathOperator*{\Prob}{\mathbb{P}}

\newcommand{\algo}{$\mathsf{EgalAlg}$}
\newcommand{\appalgo}{$\mathsf{EgalAlg}^{(\rm appx)}$}
\newcommand{\y}{y}
\newcommand{\hy}{\hat{y}}
\newcommand{\ty}{\tilde{y}}
\newcommand{\ihy}{\hy^{\rm indv}}
\newcommand{\ghy}{\hy^{\rm gr}}
\newcommand{\z}{z}
\newcommand{\vz}{\mathbf{z}}
\newcommand{\err}{\mathsf{err}}
\newcommand{\cor}{\mathsf{tru}}
\newcommand{\N}{\mathcal{N}}
\newcommand{\lW}{\bar{W}}
\newcommand{\hW}{\widehat{\bar{W}}}
\newcommand{\imp}{\varphi}
\newcommand{\znew}{\z_{\rm new}}
\newcommand{\Gagg}{\mathcal{G}\strut^{({\rm agg})}}
\newcommand{\Gegal}{\mathcal{G}\strut^{({\rm egal})}}
\newcommand{\B}{\mathcal{Z}}
\newcommand{\Bnew}{\mathcal{\B}_{\rm new}}
\newcommand{\OPT}[1]{{\rm OPT^{({\rm #1})}}}
\newcommand{\Gain}{\Delta{\mathcal G}}
\newcommand{\val}{\Psi}
\newcommand{\Greedy}{{\mathsf{gr}}}

\newcommand{\R}{{\mathcal{R}}}
\newcommand{\Bl}{{\mathcal{B}}}
\newcommand{\W}{{\mathcal{W}}}
\newcommand{\T}{{\mathcal{T}}}
\newcommand{\F}{{\mathcal{F}}}
\newcommand{\C}{{\mathcal{C}}}
\newcommand{\bC}{{\bar{\mathcal{C}}}}

\newcommand{\cvratio}{{\rm Acc}}


\newcommand{\shahrzad}[1]{\textcolor{blue!60!black}{\textsc{Shahrzad:} \emph{#1}}}
\newcommand{\cheng}[1]{\textcolor{cyan}{\textsc{cheng:} \emph{#1}}}


\newtheorem{problem}{Problem}

\theoremstyle{plain}
\newtheorem{theorem}{Theorem}[section]
\newtheorem{proposition}[theorem]{Proposition}
\newtheorem{lemma}[theorem]{Lemma}
\newtheorem{corollary}[theorem]{Corollary}
\theoremstyle{definition}
\newtheorem{definition}[theorem]{Definition}
\newtheorem{assumption}[theorem]{Assumption}
\theoremstyle{remark}
\newtheorem{remark}[theorem]{Remark}


\author{
  Shahrzad Haddadan \\
 Rutgers Business School\\
\texttt{shaddadan@business.rutgers.edu}\\
  \and
  Cheng Xin\\
 Department of Computer Science\\ Rutgers University\\
\texttt{cx122@rutgers.edu}\\
\and 
  Jie Gao\\
 Department of Computer Science\\ Rutgers University\\
 \texttt{ jg1555@rutgers.edu}
}

\title{Optimally Improving Cooperative Learning in a Social Setting\footnote{This paper is going to appear in ICML 2024}
}

\begin{document}

\maketitle








\begin{abstract}
We consider a cooperative learning scenario
where a collection of networked agents with individually owned classifiers dynamically update their predictions, for the same classification task, through communication or observations of each other's predictions.
Clearly if highly influential vertices use erroneous classifiers, there will be a negative effect on the accuracy of all the agents in the network. We ask the following question:
how can we optimally fix the prediction of a few
classifiers so as maximize the overall accuracy in the
entire network.
 To this end we consider an aggregate and an egalitarian objective function. 
 We show  a polynomial time algorithm for optimizing the aggregate objective function, and show that optimizing the egalitarian objective function is NP-hard. Furthermore, we develop approximation algorithms for the egalitarian improvement. The performance of all of our algorithms are guaranteed by mathematical analysis and backed by experiments on synthetic and real data. 
\end{abstract}

\section{Introduction}\label{sec:intro}

With the breakthrough of AI technologies and the availability of big data, we are witnessing a flourish of AI models that are owned by different entities and trained by using private or proprietary data, even for generic purposes such as voice recognition, natural language processing or image segmentation. These models do not stay in isolation. There is a natural opportunity for these models to interact with each other and collectively improve their performance.

One of the concrete application scenarios is in cybersecurity, where security agents in a network collectively detect anomalous traffic patterns that are potentially associated with cyber attacks. These security agents may be affiliated with different entities in the network. They may or may not be able to directly share their collected data due to privacy or other logistic reasons but can share their beliefs on whether the network is under attack or not, and if so, which type of attack. In such a scenario, improving the model of one agent by collecting more data, recruiting domain experts to provide high quality labels of observation data, or retraining using a more powerful model can help to improve the quality of prediction locally. As these security agents stay on a network, they naturally have the opportunity to exchange their assessments. The quality improvement at one agent spills to other agents in the network.  

Another application scenario is in online social networks. With generative AI, it is now a lot easier to create fake contents such as images and videos. 
Consider an online social network in which agents share pictures and comment on them, e.g., Instagram. Assume  that some of these pictures may be AI generated. While everyone can have his or her opinion on whether a wide-spreading picture is real or fake, the participants who are more skilled in image generation or who have access to powerful models can help the rest of the network to discern real ones and fake ones. 

It is not new to utilize data and models in a distributed setting.
With the explosion of personal data from smartphones and wearable devices and the increasing awareness of data privacy, decentralized learning, as opposed to users sharing their data to a central enterprise, has gained popularity in recent years. Federated learning and gossip learning are such examples. However in this scenarios, the decentralized agents are still tightly coupled -- they use the same model architecture and sometimes exchange gradient or model parameters directly. 
In our work, we consider a loosely coupled, cooperative setting where agents share a general task requirement and they select individual model parameters and architecture, train their models on their private data. Such agent autonomy is a necessity when the agents are not affiliated with the same ownership.
In addition, we consider the agents sharing their {\it predictions} with other agents preferably those within proximity or with established trust relationships. We call this model {\it cooperative learning in a social setting}. Essentially each agent has her own view of the world and through exchanging 
predictions with others, collectively we hope to improve the accuracy for all agents.

When only predictions from other agents are shared, it is up to the individual agent to decide how to update their models. On a first-order basis, we can assume for now that the outcome of such information sharing can be approximated by a weighted linear combination of the agent $u$'s own assessment and the predictions from other neighbors. The weights can be either fixed or time-varying, e.g., based on the trust level of neighbors. This leads to two natural models, 
\citet{DeGroot1974-ed} and 
\citet{Friedkin1990-wl}, originally proposed in modeling opinion dynamics in social network. In DeGroot model, each agent's prediction is a simple weighted linear combination of neighbors' predictions. When the weight coefficients are fixed (or when the updates are frequently enough for time-varying weights), all agents converge to global consensus. In FJ model, each agent incorporates in the update step a vector of personal assessment (which can be guided by the difference of local data distribution). The model still converges, and each agent arrives at possibly different predictions, reflecting adaptation to individual local data distributions. 

The generic framework captures how a group of networked agents build their respective models collectively. When new training data is introduced to one agent in the system, the other agents indirectly receive the benefit of it. Therefore it is an interesting question to analyze the collective benefit and also ask the optimization question of \emph{where to inject new training data to maximize the collective benefit}. This is the research question we focus on in this paper.

\subsection{Related Work}

\paragraph{Decentralized learning}
Training a single high-quality global model using decentralized data and computation has been studied extensively in decentralized optimization (e.g., for kernel methods~\cite{Colin2016-eh}, PCA~\cite{Fellus2015-re}, stochastic gradient descent~\cite{Blot2016-qy}, multi-armed bandit~\cite{Lazarsfeld2023-hc} and generalized linear models~\cite{He2018-sl}). 
Federated learning~\cite{Konecny2015-qn, Konecny2016-qy,Konecny2016-ie, McMahan2017-em,Krishnan2024-wn} uses a client-server architecture and considers multiple local models, trained using respective local data, with model parameters aggregated and shared through a central global model. Gossip learning~\cite{Ormandi2013-gk, He2018-sl, Hegedus2019-fa,Hegedus2016-ud,Giaretta2019-yf}, on the other hand, does not assume a central node. Instead, each node updates its own local parameters via training and then its updated parameters are shared by information exchange with other nodes in the network. This setting removes the single point of failure in the system and thus is more robust and scalable, without compromised performance~\cite{Hegedus2016-ud,Hegedus2021-ac}. 
These gossip learning protocols consider the exchange of local models (or crucial parameters such as local gradients) directly. This requires that all agents participating in gossip learning use the same type of models, which is a limitation. It is also known that models or gradients can reveal knowledge of the training data (thus raising concerns to data privacy, see a recent survey here~\cite{Zhang2022-wr}).

\vspace{-0.35cm}
\paragraph{Social learning} The study of learning and decision making  in a social network 
has been studied for longer than a decade. In these works, agents predict the status of the world, and based on their prediction they take an action to   maximize a utility function. In a social setting these decisions are not made in a void, as each agent observes the prediction of her neighbors or their actions, and henceforth updates her prediction. These models are vastly studied by economists who are interested in understanding whether the agents' decisions converge to the same value (consensus), how fast is the rate of convergence, whether an equilibrium exists, and if the consensus leads agents to an optimal decision (learning) ~\cite{AcemglusociallearningBL2011,arieli21social,golub2010naive,golub2017learning, RahimianBL2017,RahimianBLpmlr19a,RahimianSeedingEC19,bindel2015bad}.

Given a social network, various works tackle optimization problems in which an algorithm makes minimal changes in the network to maximize the improvement of a desirable property. 
For instance, various works consider the problem of maximizing information diffusion by seeding information in a number of selected source vertices  
\cite{KempeSeeding2003,adaptiveseedingFOCS2013,RahimianSeedingEC19,garimella2017balancing} or by adding links 
\cite{borgs2014maximizing,d2019recommending}. Some works 
optimally insert links into a network to maximize the information flow  between two groups of nodes \cite{minimizingGionis23,zhu2021minimizing,hittingtimeGionisKDD23,haddadanWSDM21,reducinghaddadan2022,santos2021link}, and  others 
optimaly alter 
innate opinion of users in the FJ model to reduce 
 polarization or  disagreements \cite{tsioutsiouliklis2022link,abebe2018opinion,musco2018minimizing}.

Our work   bridges  the above lines of work. We  study a framework  in which agents are performing learning tasks and exchange their  predictions until each makes a final decision. Unlike prior decentralized learning works, our agents do not necessarily use the same model nor they share any parameters, they solely exchange their predictions.   In this sense, our framework falls into the context of social learning. However, instead of studying problems such as existence of consensus or convergence rates, we focus on the problem of optimally injecting information to a selected subset of agents to improve the overall betterness in the network.    Another difference with social learning framework is that we do not consider one fixed model of information exchange, in contrast, we assume a general linear model for information exchange. Therefore, our methods are applicable to any linear model whether it  describes users of a  social network, each evaluating the truthfulness of an online content, or whether it describes intelligent systems in which agents cooperate for a classification task.
\vspace{-0.2cm}
\subsection{Summary of Contributions}

Motivated by prior works on decentralized and social learning, we consider a new framework called \emph{cooperative learning in a social setting}. In this framework, we consider the problem of \emph{optimally selecting $k$ agents and improving their innate predictions to maximize the overall network  improvement}. We consider both an \emph{aggregate} objective function and an \emph{egalitarian} one, and assume different levels of access to the model's parameters which are  (1) the joint probability distribution of the classifiers forming agents' innate predictions, denoted by $\pi$, and (2) the expressed social influence matrix, denoted by $\lW$.
\vspace{-0.2cm}
\begin{enumerate}
    \item We 
    provide a polynomial time algorithm for the aggregate improvement problem. This algorithm uses only the innate error rates and 
    $\lW$, and it does not need any additional   knowledge of $\pi$; see \cref{sec:algo:agg}. 
    \item We show that solving the egalitarian improvement problem is hard: it is NP-hard to solve it exactly even if both parameters are entirely known. We also show that if $\lW$ and only 
  the innate error rates are available, without any further assumption even finding an approximation solution is hard; see \cref{app:thm:hardness,app:thm:hard:err}~.  
  \item We provide two approximation algorithms for the egalitarian improvement problem. The first algorithm, \algo\, is a greedy algorithm and needs full access to $\pi$. The second one, \appalgo, approximates the greedy choice and only needs access to the innate error rates, but it assume that the innate predictions are pairwise independent. We show that with some modifications \appalgo works under the assumption that the vertices have group dependency; see \cref{sec:algo:egal}.   
  \item We compare the two algorithms for egalitarian improvement by running experiments on real and synthetic graphs and compare their performance to four benchmark methods. 
   Our experiments show that by modifying  only  a few vertices, we succeed in increasing the accuracy of a high percentage of network's agents; see \Cref{fig:eigen-dist-renal}. 
\end{enumerate}


\begin{figure}[h]
    \centering
    \caption{\footnotesize Comparison of \# modified nodes for {Accuracy} $> 90\%$ on different dataset (lower is better).}
    \label{fig:eigen-dist-renal}
\includegraphics[width=0.6\textwidth]{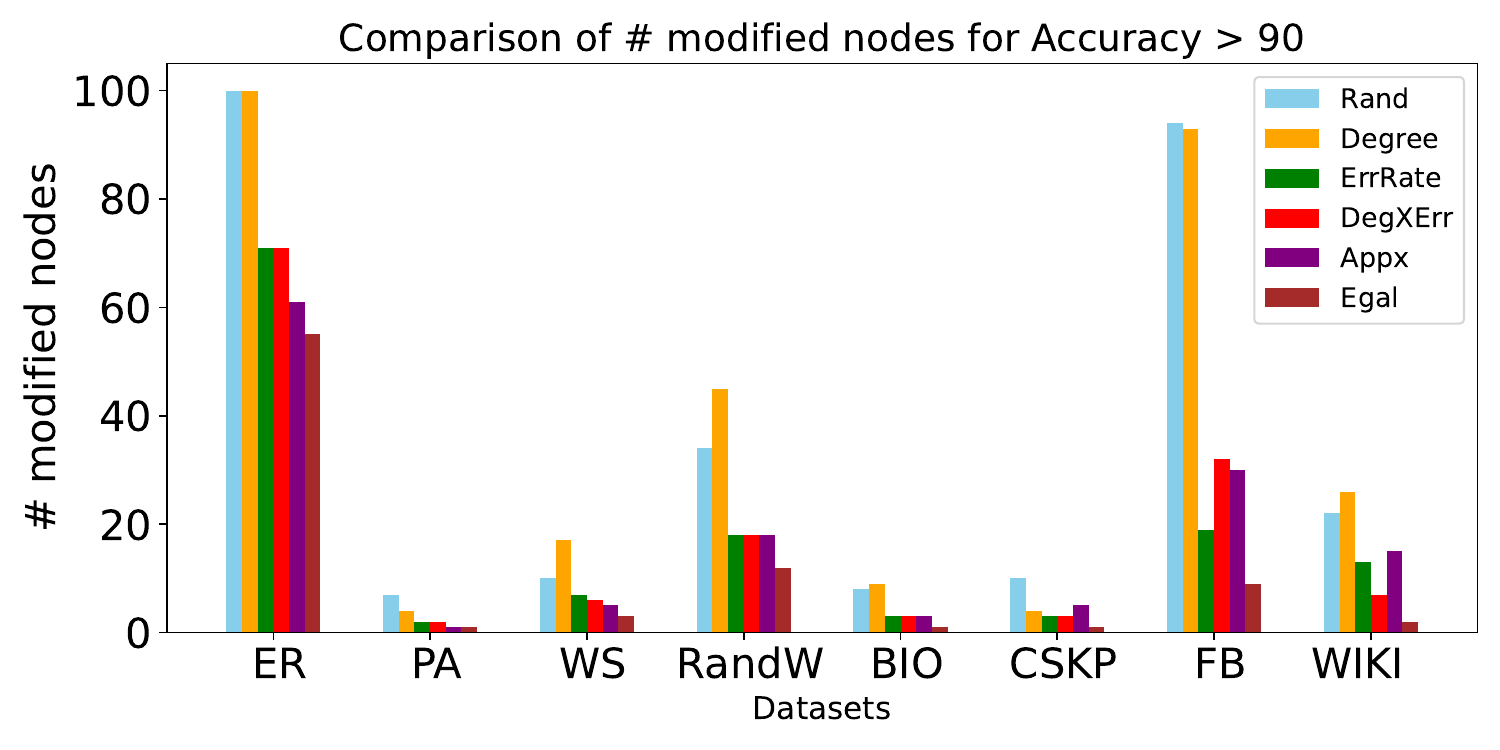}
    \vspace{-2em}
\end{figure}

\section{Models and Problem Definition}
Consider a set $\Omega$ whose elements are labeled as $+1$ or $-1$, i.e., there exists a function $\y: \Omega\rightarrow \{-1,+1\}$ such that for any $a\in \Omega$, $y(a)\in \{-1,+1\}$ is the (true) label of $a$. 
 Consider a set of agents $V=\{v_1,v_2,\dots , v_n\}$ who lack access to the true labels. Given $a\in \Omega$,  each agent $v_i$ uses a classifier $\hy_i$ to predict the label of $a$, i.e., we have $\hy_1,\hy_2,\dots , \hy_n$ where  for any $i$, $\hy_i:\Omega\rightarrow \{-1,+1\}$.
The innate error rate of each agent $v_i$ is defined as:
\begin{equation*}\label{eq:initerr}
 \err(v_i)=\Ex_{a\sim \Omega}\left[ \mathbf{1}\left(\hy_i(a)\neq y(a)\right)\right] =\underset{a\sim \Omega}{\mathbb{P}}\left(\hy_i(a)\neq y(a)\right).
\end{equation*}
We assume that this error is independent of true label.
This assumption  merely makes our analysis simpler and without it our results can be generalized with trivial modifications.

These predictions form the \emph{innate assessment} of the agents, which we  denote by $\z^{(0)}(i,a)$, for arbitrary $i$ and $a$; thus, $\z^{(0)}(i,a)=\hy_i(a) $. After,  agents communicate with each other through a \emph{time varying averaging process}. At each point of time, each agent $v_i$ communicates with a subset of other agents $\N^{(t)}(v_i)$, and her assessment will  update as:
\vspace{-0.15 cm}
\[
\z^{(t)}(i,a)= C_i \cdot \left(\sum_{v_j\in \N^{(t)}(v_i)} W^{(t)}_{i,j} \z^{(t-1)}(j,a)+ \alpha_i  \hy_{i}(a)\right),
\]
where 
$W^{(t)}_{i,j}$ denotes  the influence of $v_j$ on $v_i$ at time $t$, 
$\alpha_i$ is 
$v_i$'s stubbornness, $C_i$ is a normalizing constant and  $\z^{(t-1)}(i,a)$  and $\z^{(t)}(i,a)$ respectively denote $v_i$'s assessment before the $t$th round of communication and after it.

Formally, given $a\in \Omega$,
let $\mathbf{\hy}(a)=\left(\hy_1(a),\hy_2(a),\dots \hy_n(a)\right)$ and $\vz^{(0)}(a)=\mathbf{\hy}(a)$.
 Assume having a sequence of $n\times n$ 
  matrices $\left(W^{(t)}\right)_{t=1}^\infty$
  , $\boldsymbol{\alpha}=(\alpha_1,\alpha_2,\dots ,\alpha_n)$ and $\boldsymbol{C^{(t)}}=(C^{(t)}_1,C^{(t)}_2,\dots ,C^{(t)}_n)$. For any $t\geq 1$ we have: 
\begin{equation}\label{eq:generalmodel}
    \vz^{(t)}(a)=
    \boldsymbol{C^{(t)}}\odot 
    \left(W^{(t)}\vz^{(t-1)}(a)+ \boldsymbol{\alpha}\odot  \mathbf{\hy} (a)\right) ~,
\end{equation}
where $\odot$ denotes the Hadamard product which is defined to be the vector (matrix) obtained from the pairwise  product of the elements in two other vectors (matrices).
\footnote{
Let $A,B\in \mathbb{R}^{n\times m}$, and $C=A\odot B$. We have that $C\in \mathbb{R}^{n\times m} $ and $C_{ij}=A_{ij}\cdot B_{ij}$
}
Let $\vz^{*}(a)=\left(
z^*(1,a),z^*(2,a),\dots, z^*(n,a)\right)$ be the vector that the above process converges to i.e., 
\[
\vz^{*}(a)= \lim_{t\rightarrow \infty }\vz^{(t)}(a)~.\] 
We call $\vz^{*}(a)$ the \emph{the expressed prediction} vector. 
 We assume that this limit has the following closed form: 
\begin{equation}\label{eq:convergence}
\exists \lW\in \mathbb{R^+}^{n\times n} \ \text{ s. t. } \forall a\in \Omega,\  
 \vz^{*}(a)=\lW\mathbf{\hy}(a)^{\mathsf T} ~.
\end{equation}
 $\mathbb{R}^+$ being the set of all non-negative real numbers. We call $\lW$ the \emph{expressed (social) influence} matrix.

The above assumption is indeed proven to hold true in many well studied models. To name a few:

\vspace{-0.25cm}
\paragraph{DeGroot Model~\citep{DeGroot1974-ed}} Given a row-stochastic  matrix ${W\in \mathbb{R}^+}^{n\times n}$, DeGroot Model evolves as: 
\begin{equation*}\label{eq:Degroot}
\vz^{(t)}_{\textsc{DeGroot}}(a)=W\vz^{(t-1)}_{\textsc{DeGroot}}(a), ~ \vz^{(0)}_{\textsc{DeGroot}}(a)=  \mathbf{\hy}(a),
\end{equation*}
It is known that 
\vspace{-0.3 cm}
\begin{equation*} \vz^{*}_{\textsc{DeGroot}}(a)= \Pi \mathbf{\hy}(a)~,
\end{equation*}
where $ \Pi $ is a block diagonal matrix with blocks $\Pi_1, \Pi_2 , \dots,\Pi_k$, and each block corresponds to a strongly connected component of the graph corresponding to $W$. In each block all rows are identical. 

\vspace{-0.2cm}
\paragraph{Friedkin–Johnsen (FJ)  Model~\citep{Friedkin1990-wl}}
Given an arbitrary  matrix with non-negative weights $W$, 
for each $i\in [n]$ let $C_i=1/(1+\sum_{j\in \N(i)}W_{i,j})$.
FJ model evolves as follows:
\begin{align*}\label{eq:FJmodel}
  \vz^{(t)}_{\textsc{FJ}}(a)=\boldsymbol{C} \odot \left(W\vz^{(t-1)}_{\textsc{FJ}}(a)+    \mathbf{\hy}(a)\right), \ \ 
  \vz^{(0)}_{\textsc{FJ}}(a)=\mathbf{\hy}(a). 
\end{align*}
It is known that if $W$ is symmetric the FJ model converges to the following closed form:
\vspace{-0.2cm}
\[\vz^{*}_{\textsc{FJ}}(a)=(I+L)^{-1}\mathbf{\hy}(a), 
\]
where $L$ is the combinatorial Laplacian of the indirected graph corresponding to $W$ and $I$ denotes the identity matrix. 

Both DeGroot and FJ model use a constant matrix $W$ in their evolution. The following shows a time varying evolution:
\vspace{-0.2cm}
\paragraph{Finite time Models}
Assume a finite sequence of row stochastic matrices 
$W^{(1)}, W^{(2)},\dots , W^{(T)}$, for any $t>T$, let $W^{(t)}$ be the identity matrix and let $\boldsymbol{\alpha}$ be all zeros vector. 
It is straightforward to see that the general equation of \cref{eq:generalmodel} will converge to 
\vspace{-0.4cm}
\begin{equation*}\label{eq:finite}    \vz^{*}_{\textsc{Finite}}(a)=\lW {\mathbf \hy}(a), ~
\lW=\prod_{i=1}^T W^{(i)}~.
\end{equation*}
\vspace{-0.4cm}
\subsection{Statement of Problems} 

Assume any time varying averaging model which converges to a close form as \cref{eq:convergence}. Thus, for a given $a\in \Omega$ and   $v_i\in V$  the \emph{expressed} prediction of $v_i$ on $a$ is equal to 
\vspace{-0.25cm}
\[
\z^*(i,a)=\sum_{j=1}^n \lW_{ij}\hy_j(a)~.
\]
Note that 
the value of $z^*(i,a)$ is a function of the expressed social influence matrix $\lW$ as well as the quality of all agents' classifiers which is formulated in the joint probability distribution of  $\mathbf{\hy}$. 
If the true label $\y(a)$ equals $+1$, the positive values of $\z^*(i,a)$ are preferable, otherwise negative values of $\z^*(i,a)$ are better. In other words, we would like $\y(a)$ and $\z^*(i,a)$ to have the same sign. We define
$\B(i,a):V\times \Omega\rightarrow [-1,1]$ 
as follows: 
\[
\B(i,a)=\y(a)\cdot \z^*(i,a)~.
\]
The larger values for $\B(i,a)$ correspond to the fact that agent $v_i$ makes a good prediction on an object $a$. If $\B(i,a)<0$ we consider this prediction \emph{faulty}. 
\paragraph{Improving quality of selected classifiers} 
Assume that we have a tool to  improve the quality of  classifiers. Formally, let  $\imp\in(0,1]$ be a given constant.
For any arbitrary agent $v_i$ we may improve  $\hy_i$ to $\ty_i:\Omega\rightarrow [0,1]$ as follows: 
\vspace{-0.1cm}
\begin{equation}\label{eq:improvementphi}
 \forall a\in\Omega,~
\ty_i(a)= (1-\imp)\hy_i(a)+ \imp \y(a)~.   
\end{equation}
We would like to improve the quality of a selected subset of agents' classifiers (innate predictions) to maximize the overall quality of expressed predictions among all agents. Formally 
 we are interested in selecting a subset $S$ of   $k$ agents, i.e.,  $S\subseteq V$, $\abs{S}=k$
and improve the quality of the classifier as follows: 
\begin{equation}\label{eq:intervention}
  \forall a\in\Omega,~
\ty_i(a)= 
\begin{cases}
(1-\imp)\hy_i(a)+ \imp y(a)~& \text{ if } v_i\in S\\
\hy_i(a)& \text{ if } v_i\notin S\\
\end{cases}   
\end{equation}
Let 
\vspace{-0.4cm}
\begin{equation*}
  \znew^*(i,a)=\sum_{j=1}^n \lW_{ij}\ty_j(a)\ \ \& ~\Bnew(i,a)=\y(a)\cdot \znew^*(i,a)~.  
\end{equation*}
In the first problem that we study, $S$ is selected to maximize an \emph{aggregate} objective function:
\vspace{-0.2cm}
\begin{equation}\label{eq:agg}
    \Gagg(S)\triangleq  
\Ex_{a\sim\Omega}\left[ \sum_{i=1}^n \Bnew(i,a)-\B(i,a)
\right]
\end{equation}
The above objective function is great, but it has the shortcoming of any other aggregate objective function: it is possible that one agent  benefits  enormously from it at the cost of many other agents getting extremely little. 

Therefore, we also study  the following \emph{egalitarian}  objective function in which we  count the expected  \emph{number} of agents whose faulty predictions will improve.   
\vspace{-0.15cm}
\begin{equation*}\label{eq:egal}
\Gegal(S)
{\triangleq} 
\Ex_{a\sim\Omega} \left[\sum_{i=1}^n \mathbf{1}(\B(i{,}a){<}0 \wedge \B(i{,}a){<}\Bnew(i{,}a))\right].
\end{equation*}
We are now ready to 
formally state the problems. 

\smallskip

\begin{problem}[Aggregate improvement through improving $k$ selected agents]\label{problem1}
 What is an optimal way to select a subset $S\subseteq V$ and update their innate predictions as \cref{eq:intervention} to maximize the following objective function: 
 \vspace{-0.15cm}
\[
\OPT{agg}=\underset{S\subseteq V\\ ; |S|=k} {\rm{max} }\ \Gagg(S)~.  
\]
\end{problem}

\begin{problem}[Egalitarian improvement through improving $k$ selected agents]\label{problem2}
 What is an optimal way to select a subset $S\subseteq V$ and update their innate predictions as \cref{eq:intervention} to maximize the following objective function:
  \vspace{-0.15cm}
\[
\OPT{egal}=\underset{S\subseteq V\\ ; |S|=k} {\max }\ \Gegal(S)~.  
\]
\end{problem}
  \vspace{-0.3cm}
The above process has two main parameters: a joint probability distribution  $\pi: \Omega\times \{-1,+1\}^V\rightarrow [0,1]$, where 
for any $a\in \Omega$ and $\vec{b}\in \{-1,+1\}^n$, 
$\pi(a,\vec{b})=\Prob(\wedge_{i=1}^n \hy_i(a)=b_i)$ as well as $\lW$ which is the expressed influence matrix. 
While in most applications $\lW$ is either available in closed form (e.g., for DeGroot, FJ or finite models) or it can be approximated using iterative methods, our access to $\pi$ depends on  assumptions which may vary depending on our application. In fact, we present our algorithms assuming different access levels to the joint probability distribution $\pi$. 

\medskip 

\section{Summary of Results}\label{sec:summary}




In this section, we present a summary of our main results. Let us first list the assumptions we make on the access level to $\pi$ and $\lW$:

\begin{assumption}[Best scenario]\label{assu:bestcase}
Assume having \emph{complete} knowledge of $\pi$.
\end{assumption}
The above assumption is reasonable if $\Omega$ is a small finite set. For instance in  cases where $\Omega$ can be partitioned to a few types and the agents make the same predictions on any element of the same type, e.g., see  \cite{demarzo2003persuasion,Hazla2019-df,Gaitonde2021-vk} 

Clearly, it is possible that the above assumption does not hold true. However, the algorithm needs to have \emph{some knowledge} of the  probability distribution of innate predictions. 

\smallskip 

\begin{assumption}[some knowledge of $\pi$]\label{assu:relaxpi1}
Assume having some knowledge of  classifiers' dependence/independence and the innate error rates.
\end{assumption}

\begin{assumption}[minimum knowledge of $\pi$]\label{assu:relaxpi2}
Assume having only knowledge of  innate error rates $\{\err(v_i)\}_{v_i\in V}$.
\end{assumption}

The summary of our main results is that \cref{problem1} is easy, i.e., we show an exact polynomial time  algorithm for it assuming   minimum knowledge of $\pi$. 
On the other hand, we show that \cref{problem2} is hard, i.e., we show that exactly solving it is NP-complete even assuming full knowledge of $\pi$. Note that this hardness also holds for the cases in which we have less knowledge of $\pi$. 
Later, we show greedy algorithms for approximately solving it under different assumptions. 

Initially, in all of our results we assume access to the closed form of $\lW$.
We then 
show that the guarantees still hold with a slight change when an approximation of $\lW$ is given.

\begin{assumption}[knowledge to an approximation of  $\lW$]\label{assu:relaxW}
 Assume   having knowledge of $\hW$ 
such that 
\[
\left| \hW-\lW \right|_1\leq \epsilon ~,
\]
where  $\left|\cdot \right|_1$ is the $\ell_1$ norm and $\epsilon$  a precision parameter.  
\end{assumption}

\subsection{Optimizing the aggregate objective function}

\begin{theorem}\label{thm:agg1}
    There is an algorithm with run-time complexity $\Theta\left(n^2\right)$  which given  $\lW$ and $\{\err(v_i)\}_{i=1}^n$ as input parameters outputs $S$ such that $\Gagg(S)=\OPT{agg}$.
\end{theorem}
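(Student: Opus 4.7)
The plan is to show that $\Gagg(S)$ is a \emph{modular} (linear) set function whose coefficients depend only on $\lW$ and the innate error rates, so that the optimum is obtained by picking the $k$ agents with the top scores.

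First, I would simplify the per-element gain. For any $a \in \Omega$ and any $j \in S$ the intervention of \cref{eq:intervention} gives $\ty_j(a) - \hy_j(a) = \varphi(y(a) - \hy_j(a))$, while for $j \notin S$ the difference vanishes. Substituting this into the definitions of $\znew^*$ and $\Bnew$, and using $y(a)^2 = 1$, yields
\begin{equation*}
\Bnew(i,a) - \B(i,a) \;=\; \varphi \sum_{j \in S} \lW_{ij}\bigl(1 - y(a)\hy_j(a)\bigr).
\end{equation*}
The key observation is that $1 - y(a)\hy_j(a)$ equals $0$ when agent $v_j$'s innate prediction is correct on $a$ and $2$ when it is wrong, so taking expectations gives $\Ex_{a\sim\Omega}[\,1 - y(a)\hy_j(a)\,] = 2\err(v_j)$; critically this uses only the \emph{marginal} error rate of $v_j$.

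Next I would sum over $i$ and swap the order of summation to obtain
\begin{equation*}
\Gagg(S) \;=\; 2\varphi \sum_{j \in S} \err(v_j)\,c_j, \qquad \text{where } c_j := \sum_{i=1}^n \lW_{ij}.
\end{equation*}
Since this is a modular function of $S$ with non-negative coefficients $2\varphi\, \err(v_j)\, c_j$, an optimal $S$ is obtained by selecting the $k$ indices $j$ with the largest scores $\err(v_j)\, c_j$.

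Finally I would account for runtime: computing the $n$ column sums $c_j$ from $\lW$ costs $\Theta(n^2)$, forming the $n$ scores costs $\Theta(n)$, and extracting the top $k$ via partial selection costs $O(n \log n)$ or $O(n)$. The total is $\Theta(n^2)$, matching the claimed bound. There is no real obstacle here beyond the algebraic rearrangement; the only conceptually interesting point worth stressing in the write-up is that the aggregate objective decouples across agents, so the full joint distribution $\pi$ is not needed — just the marginals encoded in $\{\err(v_i)\}$ and the column sums of $\lW$.
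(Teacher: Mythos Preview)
Your argument is correct and follows essentially the same route as the paper: both establish that $\Gagg(S)$ is modular in $S$ and then select the top-$k$ per-agent scores, with the paper splitting the expectation into the cases $a\in\Omega^{+}$ and $a\in\Omega^{-}$ while you handle both at once via the identity $\Bnew(i,a)-\B(i,a)=\varphi\sum_{j\in S}\lW_{ij}\bigl(1-y(a)\hy_j(a)\bigr)$. One minor discrepancy worth knowing: the paper records the score as $\mathrm{Inf}(v_j)=\sum_{i}\lW_{ij}\,\err(v_i)$, whereas your derivation gives $\err(v_j)\sum_{i}\lW_{ij}$; the latter is what the algebra actually produces (the paper's version carries an $i/j$ index slip in the expectation step), so your formula and the $\Theta(n^2)$ runtime analysis are correct as written.
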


\begin{remark}\label{thm:agg2}
      Let $\mathsf{ALG}$  be the algorithm whose performance guarantees are presented in  \cref{thm:agg1}. Let $S$ be the output of $\mathsf{ALG}$ when $\lW$ is given to it as input parameter, and let $S'$ be the output if $\widehat{\lW}$ is given. We have:
    \[ \Gegal(S')\geq \Gegal(S)-2k\epsilon~.\]
\end{remark}
We present this 
 algorithm in \cref{sec:algo:agg,app:algo:agg}.

\vspace{-0.2cm}
\subsection{Optimizing the egalitarian objective function}

We now present our results about \Cref{problem2}. 

We call a matrix with no negative  entry  \emph{
non-negative}. 
In \textsc{DeGroot} model if $W$ is 
non-negative, $\lW$ is also 
non-negative and in \textsc{FJ} model if  $W$  is 
non-negative and symmetric, $\lW$ is also 
non-negative \citep{chebotarev2006proximity}
.

In this section we assume that $\lW$ is non-negative.
\smallskip 

\begin{theorem}\label{thm:hardness}
Under \cref{assu:bestcase} and assuming $\abs{\Omega}$ is polynomial in $n$ ,
 \Cref{problem2}  is NP-hard.
\end{theorem}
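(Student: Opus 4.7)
The plan is to reduce from the decision version of Max $k$-Cover (equivalently, Set Cover), both classical NP-hard problems. Given a Max Coverage instance with universe $U = \{u_1, \dots, u_m\}$, sets $T_1, \dots, T_\ell \subseteq U$, budget $k$, and threshold $t$, I will polynomially build an instance of \Cref{problem2} --- specifying $V$, $\Omega$, $\pi$, and $\lW$ --- whose optimal egalitarian gain equals the maximum number of universe elements coverable by $k$ of the $T_j$'s, so the decision problems match up.

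The construction uses $n = \ell + m$ agents, one ``set agent'' $v_j$ for each $T_j$ and one ``element agent'' $w_i$ for each $u_i$, together with a single object $\Omega = \{a\}$ and $y(a) = +1$ (so $\abs{\Omega} = 1$ is trivially polynomial in $n$ and $\pi$ is a point mass). Every innate classifier is set to be wrong on $a$, i.e.\ $\hy_{v_j}(a) = \hy_{w_i}(a) = -1$. I then define the non-negative matrix $\lW$ by three conditions: row $v_j$ is entirely zero, column $w_i$ is entirely zero, and $\lW_{w_i, v_j} = \beta > 0$ exactly when $u_i \in T_j$. One checks that $z^*(v_j, a) = 0$, so $\B(v_j, a) = 0$ is not strictly negative and no set agent contributes to $\Gegal$, while $z^*(w_i, a) = -\beta\, d_i$ with $d_i = |\{j : u_i \in T_j\}|$, so (after the harmless preprocessing step of discarding any $u_i$ with $d_i = 0$, which cannot be covered anyway) every element agent is initially faulty.

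Next I would apply \eqref{eq:intervention} with $\varphi = 1$ to an arbitrary $S$ of size $k$: the zero column $\lW_{\cdot, w_i}$ kills any downstream effect of including $w_i$ in $S$, and the zero row $\lW_{v_j, \cdot}$ keeps $z^*(v_j, a)$ at $0$ even after $v_j$'s own improvement, so only the $v_j \in S$ with $u_i \in T_j$ can change $z^*(w_i, a)$. A direct calculation then gives $\znew^*(w_i, a) = \beta(2 s_i - d_i)$ with $s_i = |S \cap \{v_j : u_i \in T_j\}|$, so $\Bnew(w_i, a) - \B(w_i, a) = 2\beta s_i$ is positive iff at least one $v_j \in S$ covers $u_i$. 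Hence $\Gegal(S) = \bigl|\bigcup_{v_j \in S} T_j\bigr|$, and $\OPT{egal} \ge t$ in the constructed instance iff $k$ of the $T_j$'s cover at least $t$ elements of $U$, yielding NP-hardness of \Cref{problem2} under \cref{assu:bestcase}.

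The main obstacle, which the zero-row and zero-column tricks are designed to dodge, is isolating the egalitarian count so that it matches coverage exactly. If set agents were themselves faulty on $a$ through their own self-update, each $v_j \in S$ would add some function of $|T_j|$ to $\Gegal$ and the objective would become coverage plus a budget-dependent bias instead of pure coverage; symmetrically, if $\lW_{w_i, w_i}$ were positive, including $w_i \in S$ would become a cheap local alternative to covering $u_i$ through a set agent, breaking the exact correspondence and demanding a separate combinatorial argument. Making set agents always express $0$ and making element agents voiceless isolates coverage as the sole degree of freedom and gives a clean polynomial reduction.
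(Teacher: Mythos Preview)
Your reduction is correct: with $|\Omega|=1$, set agents given zero rows so $\B(v_j,a)=0$ is never strictly negative, element agents given zero columns so placing them in $S$ is inert, and $\lW_{w_i,v_j}>0$ encoding the incidence relation, the computation $\Gegal(S)=\bigl|\bigcup_{v_j\in S}T_j\bigr|$ goes through exactly as you describe. One cosmetic point worth making explicit: choose $\beta\le 1/\ell$ so that $|z^*(w_i,a)|=\beta d_i\le 1$ and the stated codomain $\B:V\times\Omega\to[-1,1]$ is respected; this changes nothing in the argument.

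The paper takes a different route. Rather than constructing a \Cref{problem2} instance from a coverage instance, it starts from an \emph{arbitrary} instance of \Cref{problem2} and uses \cref{prop:positive_guarantee} to rewrite $\Gegal(S)$ as a weighted coverage objective on the universe $\{(v_i,a):\B(i,a)<0\}$ with element weights $\mu_a$ and sets indexed by $V$. Read literally, that is a reduction \emph{from} \Cref{problem2} \emph{to} weighted max coverage---the direction that later justifies the greedy $(1-1/e)$ bound, but not by itself NP-hardness. Left implicit is that every max-coverage instance can be realized inside this correspondence; your explicit gadget with a single object and the zero-row/zero-column isolation supplies precisely that missing direction. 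So your argument is more self-contained for the hardness claim, while the paper's rewriting has the advantage of doing double duty by simultaneously setting up the submodularity and approximation analysis that follow.
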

Since \Cref{problem2} is NP-hard, we concentrate on finding an approximation algorithm for it in different scenarios. 
\vspace{-0.2cm}
\paragraph{Approximate solution with full access to $\mathbf \pi$.}
Assume that $\Omega$ is finite and for any $a\in \Omega$ and $b\in \{-1,+1\}^n$ we can evaluate the probability of the event $\bigwedge_{i=1}^n \hy_i(a)=b(i)$.

\begin{theorem}\label{thm:app:egal}
There is a greedy algorithm with runtime $\Theta\left(\vert\Omega\vert n^2k\right)$ which by receiving  $\pi$ and $\lW$ as input parameters 
outputs $S$ satisfying 
\[
\Gegal(S)\geq (1-1/e)\OPT{egal}.
\]
\end{theorem}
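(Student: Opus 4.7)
The plan is to show that $\Gegal : 2^V \to \mathbb{R}_{\geq 0}$ is a monotone submodular set function on the ground set $V$ with $\Gegal(\emptyset) = 0$, and then invoke the Nemhauser--Wolsey--Fisher theorem, which guarantees a $(1-1/e)$-approximation for greedy maximization under a cardinality constraint. The greedy algorithm here is the natural one: start from $S = \emptyset$, and in $k$ rounds add the vertex with the largest marginal gain.

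The key manipulation is to put the objective into coverage form. Substituting \cref{eq:intervention} into the definition of $\Bnew(i,a) = y(a)\sum_j \lW_{ij}\ty_j(a)$, using $y(a)^2 = 1$, and noting that $1 - y(a)\hy_j(a)\in\{0,2\}$, a short calculation gives
\[
\Bnew(i,a) - \B(i,a) \;=\; 2\imp \sum_{j\in S} \lW_{ij}\,\mathbf{1}\bigl(\hy_j(a)\neq y(a)\bigr).
\]
Because $\lW$ is non-negative, this quantity is always $\geq 0$; moreover $\B(i,a) < \Bnew(i,a)$ holds if and only if $S$ meets the set $C(i,a) := \{\, j\in V : \lW_{ij}>0 \text{ and } \hy_j(a)\neq y(a)\,\}$. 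Consequently the egalitarian objective rewrites as a weighted coverage function,
\[
\Gegal(S) \;=\; \sum_{a\in \Omega}\Prob(a)\,\sum_{i=1}^n \mathbf{1}\bigl(\B(i,a) < 0\bigr)\cdot\mathbf{1}\bigl(S\cap C(i,a)\neq \emptyset\bigr).
\]
Each inner summand is the textbook indicator that $S$ intersects a fixed subset of $V$ and is therefore monotone submodular in $S$, while the coefficients $\Prob(a)\cdot\mathbf{1}(\B(i,a)<0)$ are non-negative. Since non-negative linear combinations preserve monotonicity and submodularity, $\Gegal$ is monotone submodular, and Nemhauser--Wolsey--Fisher delivers the advertised $(1-1/e)$ bound.

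For the runtime, under \cref{assu:bestcase} full access to $\pi$ yields $\Prob(a)$, $y(a)$ and every $\hy_j(a)$, so combined with $\lW$ we can precompute the Boolean flags $\mathbf{1}(\B(i,a)<0)$ and initialise an ``already-covered'' bit for each pair $(i,a)$ in $\Theta(|\Omega| n^2)$ time. Each of the $k$ greedy rounds then scans the $O(n)$ candidates, and for each candidate $v$ computes the marginal gain by sweeping the $|\Omega|\cdot n$ pairs $(i,a)$ in $O(1)$ per pair (checking originally faulty, not yet covered, $\lW_{iv} > 0$, $\hy_v(a)\neq y(a)$). Refreshing the covered-bits after picking the winner costs another $O(|\Omega|n)$ per round. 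The total is $\Theta(|\Omega|n^2 k)$, matching the claim.

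The only real obstacle is spotting the coverage reformulation; once the closed form for $\Bnew - \B$ is in hand, monotonicity, submodularity, and the Nemhauser--Wolsey--Fisher bound are immediate, and the runtime bookkeeping is straightforward. One minor point worth flagging in a fuller write-up is that the rewriting relies on $\lW$ being non-negative (used to conclude $\Bnew \geq \B$ and to identify equality cases via $\lW_{ij} > 0$), an assumption already made in the section preceding the theorem.
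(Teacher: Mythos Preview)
Your proof is correct and follows essentially the same approach as the paper: establish that $\Gegal$ is monotone and submodular, then invoke the Nemhauser--Wolsey--Fisher $(1-1/e)$ bound for greedy maximization under a cardinality constraint. Your coverage reformulation via the sets $C(i,a)$ is exactly the characterization the paper records as \cref{prop:positive_guarantee} and uses in \cref{eq:12}. The only presentational difference is that you obtain monotonicity and submodularity in one stroke by recognizing $\Gegal$ as a non-negative combination of coverage indicators, whereas the paper proves these two properties separately (\cref{app:lem:monotone,app:lem:submodular}) by explicit event-splitting; your route is a bit more economical but not conceptually different. Your runtime accounting, with the $\Theta(|\Omega|n^2)$ precomputation of the faulty-prediction flags and $O(|\Omega|n)$ marginal-gain evaluation per candidate, matches the paper's $\Theta(|\Omega|n^2k)$ claim and is arguably cleaner than the paper's description.
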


A pseudocode of our algorithm, \algo\ is presented in \cref{app:sec:psudocodes}, an overview of main ideas is presented \cref{sec:algo:egal}. Note that the runtime of \algo\ grows linearly in $\abs{\Omega}$. Later, we present \appalgo whose  complexity  \emph{does not} grow with  $\abs{\Omega}$ (see \cref{sec:algo:egal}). Therefore, even if $\pi$ is fully known, by employing \appalgo, we may prefer to suffice to a lower quality approximation to gain better time complexity. 

\paragraph{Approximate solution with minimum information} 
While the previous result shows a constant approximation,  the following theorem shows that  by only having the innate error rates $\{\err(v_j)\}_{j=1}^n$ and $\lW$ we are not able to achieve any good approximation.  
\begin{theorem}
Any  solution to 
\cref{problem2} which only uses $\lW$ and innate error rates $\{\err(v_j)\}_{j=1}^n$ makes an error which can be as large as $\Omega(n)$.
\end{theorem}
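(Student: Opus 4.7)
The plan is to prove the lower bound by an information-theoretic indistinguishability argument: I will exhibit two joint distributions $\pi_1,\pi_2$ that agree on $\lW$ and on $\{\err(v_j)\}_{j=1}^n$ but whose optimal sets for \Cref{problem2} are disjoint, so that any algorithm restricted to $\lW$ and $\{\err(v_j)\}$ must commit to one $S$ and therefore be suboptimal by $\Omega(n)$ on at least one of the two instances. It suffices to take $k=1$.

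The construction uses four \emph{influencers} $\{1,2,3,4\}$ with innate error rate $1/4$ and $n-4$ \emph{followers} with innate error rate $1/2$. In $\lW$, each influencer is autonomous ($\lW_{ii}=1$); half of the followers form a block $A\subseteq V$ whose rows of $\lW$ put mass $1/2$ on each of $\{1,2\}$ and zero elsewhere, and the other half $B$ analogously averages $\{3,4\}$. The joint distributions differ only in the correlation structure of the influencers: under $\pi_1$ the signed-error coordinates $H_j=y(a)\hy_j(a)$ satisfy $H_1=H_2$ almost surely while $H_3$ and $H_4$ are independent, whereas under $\pi_2$ these two roles are swapped; the remaining $H_j$ are i.i.d.\ uniform signs independent of the rest. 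By construction the two instances agree on the marginals $\err(v_j)=\Prob(H_j=-1)$ and on $\lW$.

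The analysis rests on the linear identities $\B(i,a)=\sum_j\lW_{ij}H_j$ and $\Bnew(i,a)-\B(i,a)=\imp\sum_{j\in S}\lW_{ij}(1-H_j)$. For any follower $i\in A$, ``faulty and strictly improved'' reduces to the joint event $\{H_1=H_2=-1\}\cap\{S\cap\{1,2\}\neq\emptyset\}$, which has probability $1/4$ under $\pi_1$ (by perfect correlation) and only $1/16$ under $\pi_2$ (by independence); the event for $B$ is symmetric. Since $k=1$, the algorithm must pick either a single influencer or a single follower. A follower intervention propagates to nobody because $\lW_{i,j}=0$ for every follower $j$, giving $\Gegal=0$ in both instances. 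An influencer chosen from $\{1,2\}$ achieves $\Theta(n)$ on $\pi_1$ but only $\Theta(n)/4$ on $\pi_2$, and symmetrically for $\{3,4\}$. Hence $\OPT{egal}-\Gegal(S)=\Omega(n)$ on at least one of $\pi_1,\pi_2$ for every $S$ with $|S|=1$.

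The main obstacle, beyond routine bookkeeping, is verifying that no single-vertex choice can hedge between the two instances. This is precisely where the block-diagonal structure of $\lW$ does the work: because a single intervention can only propagate through one block, the algorithm cannot simultaneously help $A$ (which requires touching $\{1,2\}$) and $B$ (which requires touching $\{3,4\}$). Once this observation is made precise, the $\Omega(n)$ bound follows by comparing the four elementary probabilities computed above.
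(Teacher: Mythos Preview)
Your proof is correct and follows essentially the same strategy as the paper: construct two instances with identical $\lW$ and identical marginal error rates but different correlation structures among four ``influencer'' vertices, arranged in two blocks so that the optimal single-vertex intervention lies in a different block under each joint distribution. The paper's parameters differ slightly (influencer error $1/2$, follower error $0$, and each follower carries a self-loop; the two instances contrast perfect positive correlation with perfect \emph{negative} correlation), whereas you use influencer error $1/4$, follower error $1/2$ with no self-loop, and contrast perfect positive correlation with \emph{independence}. The only material consequence of this difference is that the paper obtains a sharper separation---$\Gegal$ is $\Theta(n)$ on the good block versus $\Theta(1)$ on the bad block---while your construction gives $\Theta(n)$ versus $\Theta(n)/4$; both yield the claimed $\Omega(n)$ error, so your argument fully proves the theorem as stated.
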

We restate and prove the above theorem in \cref{app:thm:hard:err}. 

Motivated by the above results, we now present our results when more knowledge about the classifiers are available.  For instance, in addition to knowing the error rates we  assume that the classifiers are pairwise independent.

Using these assumptions we  design \appalgo\ and show theoretical guarantees for its performance. 
\vspace{-0.2cm}
\paragraph{Approximate solution assuming independence. }
Assume that the classifiers $\{\hy_i\}_{v_i\in V}$ are pairwise independent. We have: 

\begin{theorem}\label{thm:indep}
There is a greedy algorithm, \appalgo, with run-time $\Theta(n^3k)$ which by receiving  $\{\err(v_i)\}_{v_i\in V}$ and $\lW$ as input parameters 
outputs $S$ satisfying 
\[
\Gegal(S)\geq [(1-1/e)-\Delta^{\rm ind}]\cdot \OPT{egal}~,
\]
where $\Delta^{\rm ind}$ is a parameter depending on the network. If the network is \emph{nicely structured}  $\Delta^{\rm ind}=o(1)$;  see \cref{sec:algo:ind}~.
\end{theorem}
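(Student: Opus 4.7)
The plan is to mirror the structure of the proof of Theorem~\ref{thm:app:egal} but with an extra layer of approximation, since \appalgo does not have full access to $\pi$. First I would unpack the coverage structure underlying $\Gegal$. Because $\lW$ is non-negative, intervention \cref{eq:intervention} yields $\Bnew(i,a) - \B(i,a) = 2\imp \sum_{j \in S} \lW_{ij} \mathbf{1}(\hy_j(a) \neq y(a))$, and hence
\[
\mathbf{1}\bigl(\B(i,a) < 0 \wedge \B(i,a) < \Bnew(i,a)\bigr) = \mathbf{1}(\B(i,a) < 0) \cdot \mathbf{1}\!\left(\exists j \in S:\ \lW_{ij} > 0,\ \hy_j(a) \neq y(a)\right).
\]
For fixed $(i,a)$ this is a classical coverage indicator in $S$, weighted by the probability of the faulty event. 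Summing over $(i,a)$ preserves non-negativity, monotonicity, and submodularity of $\Gegal$ as a set function on $V$, which is the combinatorial backbone for the $(1-1/e)$ greedy guarantee.

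The second step is to supply a surrogate marginal $\widetilde{\Delta}_v(S)$ for the true $\Delta_v(S) = \Gegal(S \cup \{v\}) - \Gegal(S)$ that \appalgo can evaluate from $\lW$ and $\{\err(v_j)\}$ alone. Pairwise independence pins down the first two moments of $\B(i,a)$ exactly, namely $\Ex[\B(i,a)] = \sum_j \lW_{ij}(1 - 2\err(v_j))$ and $\mathrm{Var}[\B(i,a)] = 4 \sum_j \lW_{ij}^2 \err(v_j)(1-\err(v_j))$, together with the analogous bivariate moments of $(\B(i,a),\Bnew(i,a))$, but not the higher-order joints on which the coverage event depends. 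My plan is to replace the unknown joint with a tractable surrogate, either a bivariate Gaussian fit to these moments or a product-of-marginals surrogate that pretends full independence of the $\hy_j$. Either way $\widetilde{\Delta}_v(S)$ becomes computable in $O(n^2)$ per candidate $v$ once the relevant row sums and second moments are cached, which accounts for the $\Theta(n^3 k)$ total over $k$ greedy rounds and $n$ candidates.

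The main obstacle is a uniform bound on $|\Delta_v(S) - \widetilde{\Delta}_v(S)|$ expressed as a structural quantity of $\lW$. I would attack this with a quantitative central limit estimate that only requires pairwise uncorrelatedness, applied row by row to the weighted sum $\B(i,a) = \sum_j \lW_{ij} y(a)\hy_j(a)$. A Chatterjee-style exchangeable-pairs argument, or a Berry–Esseen bound adapted to pairwise independence, gives a per-row discrepancy $\eta_i \lesssim \max_j \lW_{ij}^3 / \bigl(\sum_j \lW_{ij}^2\bigr)^{3/2}$, with an analogous bivariate bound for the joint event. This quantity is $o(1)$ exactly when no single neighbor dominates any row of $\lW$, which I would identify with the "nicely structured" regime referenced in the statement, and $\Delta^{\rm ind}$ is then defined as a normalized version of $\max_i \eta_i$. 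A subsidiary check is that the surrogate inherits non-negativity and monotonicity, so that the greedy iteration is well posed.

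Finally, I would plug these ingredients into the standard approximate-greedy analysis for monotone submodular maximization under noisy oracle access: if each greedy selection has true marginal within additive $\eta$ of optimal, then $\Gegal(S) \ge (1 - 1/e)\,\OPT{egal} - O(k\eta)$. Normalizing by $\OPT{egal}$ absorbs the additive slack into the multiplicative factor and produces the stated $[(1 - 1/e) - \Delta^{\rm ind}]$ bound. The run-time accounting is direct: an $O(n^2)$ precomputation of moments, followed by $k$ rounds each evaluating $n$ candidate marginals in $O(n^2)$ time from the cached quantities, for a total of $\Theta(n^3 k)$.
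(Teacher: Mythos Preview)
Your scaffolding (coverage structure, monotonicity/submodularity, approximate-greedy analysis) matches the paper, but the core approximation step diverges substantially and contains a real gap.

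The paper does \emph{not} use a Gaussian or moment-matching surrogate. After the factorization in Lemma~\ref{app:middlestep}, the conditional probability $\Gamma_i(S,u)=\Pr(\B(i,a)\le 0\mid T_i(S)\wedge F(u))$ is approximated by the crude threshold $\mathbf{1}(\val_i(S,u)<0)$, where $\val_i(S,u)=-\lW_{iu}+\sum_{v_j\in S}\lW_{ij}+\sum_{j\notin S\cup\{u\}}\lW_{ij}(1-2\err(v_j))$ is the (computable) conditional mean. The per-vertex error is then bounded by Hoeffding's inequality, giving $\exp\bigl(-\val_i(S,u)^2/(4\sum_j\lW_{ij}^2)\bigr)$. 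Consequently the paper's $\Delta^{\rm ind}$ is $\sum_i\exp(-\val_i(u)^2/(4|\lW_i|_2^2))$, and ``nicely structured'' (Definition~\ref{def:nice}) means every normalized mean is at least $4\sqrt{\log n}$ away from zero, so each term is $o(n^{-1})$. This is a \emph{signal-to-noise} condition on the row means, not the Berry--Esseen ratio $\max_j\lW_{ij}^3/|\lW_i|_2^3$ you propose; your ``no single neighbor dominates'' condition is neither necessary nor sufficient for the paper's notion.

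The gap in your route is the appeal to a ``Berry--Esseen bound adapted to pairwise independence.'' Pairwise independence does not imply a CLT: there are classical constructions of pairwise independent $\pm1$ variables whose sum is far from Gaussian, so no Berry--Esseen-type rate is available in general. A Chatterjee exchangeable-pairs argument likewise needs control of higher-order joint moments that pairwise independence does not supply. In contrast, the paper's threshold-plus-concentration approach only needs a tail bound on a weighted sum (the paper invokes Hoeffding; if one insists on pairwise independence only, Chebyshev would give a weaker but still usable $|\lW_i|_2^2/\val_i^2$ bound). So the paper's simpler surrogate both avoids the distributional approximation you need and yields an exponentially small error when the mean is well separated from zero, which is exactly what drives $\Delta^{\rm ind}=o(1)$ in the nicely-structured regime.
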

We generalize the assumption of pairwise independence to group dependence as follows:
\vspace{-0.1cm}
\paragraph{Approximate solution assuming  group dependence. }

Assume  that some agents belong to opposing groups $\R$ and $\Bl$ and some agents are colorless; they are in $\W$. The classifiers of the agents in  $\W$ are independent, and the classifiers of $\R$ and $\Bl$ agents have positive intra-correlation and negative inter-correlation as described in \cref{def:model}. This model describes a situation when have a classification task that can be influenced by an exogenous factor, e.g, their position or  political leaning. Clearly by setting $V=\W$ we will have the previous model.  In this case we have:

\begin{theorem}\label{thm:group}There is a greedy algorithm, \appalgo, with run-time $\Theta(n^3k)$ which by receiving  \emph{individual and group} error rates and $\lW$ as input parameters 
outputs $S$ satisfying 
\[
\Gegal(S)\geq [(1-1/e)-\Delta^{\rm gr}]\cdot \OPT{egal}~,
\]
where $\Delta^{\rm gr}\geq \Delta^{\rm ind}$ is a parameter depending on the network and the dominance of colors $\R$ and $\Bl$ on other agents. Not surprisingly,  $\Delta^{\rm gr}$ becomes closer to $\Delta^{\rm ind}$ as the number of colorless agents increases. If the network is  \emph{nicely structured} and \emph{nicely colored}  $\Delta^{\rm gr}=o(1)$; See \cref{sec:algo:group}~.
\end{theorem}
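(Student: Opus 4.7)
The plan is to generalize the proof of \cref{thm:indep} by replacing its pairwise-independence concentration estimates with ones that handle the intra-group and inter-group correlations of the colored model. I expect the overall architecture to be identical: show $\Gegal$ is monotone submodular, apply the standard greedy guarantee, and carefully quantify the oracle error introduced by having only individual and group error rates instead of full $\pi$.

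\textbf{Step 1 (submodularity, shared with the independent case).} First I would verify that $\Gegal$ is a monotone submodular set function on $V$. Monotonicity is immediate from the identity $\Bnew(i,a)-\B(i,a)=2\imp\sum_{j\in S}\lW_{ij}\mathbf{1}(\hy_j(a)\neq \y(a))\geq 0$, so adding an agent can only push more faulty predictions into the ``improved'' regime. Submodularity follows because each indicator $\mathbf{1}(\B(i,a)<0\wedge \B(i,a)<\Bnew(i,a))$ is a coverage-style event: once triggered by some $j\in S$, adding further agents yields no extra gain at $(i,a)$. This step is agnostic to the distribution on $\mathbf{\hy}(a)$ and so is reused verbatim from the proof of \cref{thm:indep}.

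\textbf{Step 2 (approximate greedy template).} With an exact oracle for $\Gegal(S\cup\{v\})-\Gegal(S)$, the Nemhauser--Wolsey--Fisher guarantee yields a $(1-1/e)$ approximation. Since \appalgo instead queries an approximate oracle built from individual and group error rates, I would invoke the standard robust-greedy analysis: if each marginal estimate is accurate up to additive slack $\delta$, the approximation ratio degrades by a term proportional to $k\delta/\OPT{egal}$. Absorbing this slack into the constant $\Delta^{\rm gr}$ gives the stated bound. The runtime $\Theta(n^3k)$ comes from $k$ greedy rounds, $n$ candidates per round, and $O(n^2)$ work to refresh the $n$ per-agent sign-flip probability estimates for each candidate.

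\textbf{Step 3 (main obstacle: approximating marginal gain under group dependence).} The heart of the argument is an estimator for $\Prob(\B(i,a)<0\wedge \B(i,a)<\Bnew(i,a))$ that uses only individual and group error rates. I would decompose
\[
\z^*(i,a)=\sum_{j\in\R}\lW_{ij}\hy_j(a)+\sum_{j\in\Bl}\lW_{ij}\hy_j(a)+\sum_{j\in\W}\lW_{ij}\hy_j(a).
\]
The $\W$-term is a sum of independent signed variables whose law is controlled by the individual error rates, exactly the object handled in \cref{thm:indep}. For the $\R$- and $\Bl$-terms, I would use the group error rates to replace the joint law of the two color-class totals by a low-dimensional surrogate that can be integrated out analytically, then combine it with a Berry--Esseen or Lindeberg-swapping bound applied to the $\W$-term. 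The accumulated pointwise error, summed over $i$ and $a$, is what I would package as $\Delta^{\rm gr}$. By construction $\Delta^{\rm gr}\geq \Delta^{\rm ind}$, since the colored contribution adds to the independent error; as $|\W|/|V|\to 1$ the colored mass in each row of $\lW$ shrinks and the bound collapses to $\Delta^{\rm ind}$. The ``nicely colored'' condition should say precisely that the total colored weight in each row of $\lW$ is dominated by the colorless weight so that the CLT error on the $\W$-term controls the whole expression, giving $\Delta^{\rm gr}=o(1)$.

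The main obstacle is Step 3: quantifying the colored correction cleanly. The negative inter-correlation between $\R$ and $\Bl$ can either help (partial cancellation) or hurt (bimodality) the concentration of $\z^*(i,a)$, so the surrogate has to be chosen conservatively enough to upper-bound the estimation error in both regimes while still degenerating to the independent bound as the colored groups shrink. Once this pointwise error is secured, Steps 1, 2, and the runtime analysis transfer directly from the proof of \cref{thm:indep}.
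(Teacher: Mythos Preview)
Your overall architecture in Steps 1 and 2 matches the paper exactly, and your runtime accounting is fine. The difference is in Step 3, and it is more than cosmetic.

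The paper exploits the specific structure of \cref{def:model}: with probability $\rho$ \emph{all} colored agents follow their group classifier (and $\ghy_\R=-\ghy_\Bl$), while with probability $1-\rho$ all agents use their pairwise-independent individual classifiers. This gives the clean split
\[
\Gain_i(S,u)=\rho\,\Gain_i^{\rm gr}(S,u)+(1-\rho)\,\Gain_i^{\rm indv}(S,u),
\]
so the $\Gain^{\rm indv}$ part is literally the independent case and reuses \cref{lem:appgain}. For $\Gain^{\rm gr}$, once you condition on which of the two colors is wrong, the entire colored contribution $\sum_{j\in\R}\lW_{ij}\hy_j(a)+\sum_{j\in\Bl}\lW_{ij}\hy_j(a)$ collapses to a deterministic constant $\pm\Delta\lW_i=\pm\bigl|\sum_{j\in\R}\lW_{ij}-\sum_{j\in\Bl}\lW_{ij}\bigr|$, and the remaining randomness is a sum of independent $\W$-terms handled by Hoeffding (not Berry--Esseen or Lindeberg). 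The resulting error term is $\exp\bigl(-(\val_i^\W(u)-\Delta\lW_i)^2/(4\sum_{j\in\W}\lW_{ij}^2)\bigr)$, and ``nicely colored'' means precisely that this exponent is large for every $i$; see \cref{def:nicecolor}. In particular the relevant colored quantity is the \emph{difference} $\Delta\lW_i$ of row weights between $\R$ and $\Bl$, not the total colored mass you describe.

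Your ``low-dimensional surrogate plus CLT on whites'' idea is in the right direction but vaguer than needed: you never use the $\rho$-mixture structure, you propose a CLT-type tool where the paper uses exact conditioning plus Hoeffding, and your description of $\Delta^{\rm gr}$ and the nicely-colored condition does not match the paper's $\Delta^{\rm gr}=\Theta\bigl(\rho|{\mathcal A}^\W|+(1-\rho)|{\mathcal A}|\bigr)$. Once you recognize the $\rho$-split and that the group mode conditions down to a deterministic shift on the colored block, Step 3 becomes a direct variant of the independent analysis rather than a new concentration problem.
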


The following remark holds  ture both under pairwise Independence and group dependency: 
\begin{remark}\label{remark:appW}
    Let $S$ be the output of \appalgo when $\lW$ is given to it as input parameter, and let $S'$ be the output if $\widehat{\lW}$ is given. We have:
    \vspace{-0.2cm}
    \[ \Gegal(S')\geq \Gegal(S)(1-8 k \epsilon)~.\]
\end{remark}

\section{Algorithms}\label{sec:algos}

In this section we present our algorithms. 
All of the algorithms are greedy. We provide an exact solution for \cref{problem1} in \cref{sec:algo:agg}. In \cref{sec:algo:egal}, because of the NP-harness of  \cref{problem2}, we present  
a  \emph{constant} approximation algorithm for it; we call this algorithm \algo\ . We then present 
\appalgo\  which has a  lower time complexity but worst approximation guarantees assuming pairwise independence  of agents. In this case, our approximation ratio depends on the network structure. In \cref{sec:algo:group}, we modify \appalgo\  so that it works under a milder assumption formalized as \emph{group dependency}.

Pseudocodes for our egalitarian algorithms are presented in \cref{app:sec:psudocodes}

\subsection{The aggregate objective function}\label{sec:algo:agg}
For any vertex $v_j$ let's define the influence score and its approximation by 
${\rm Inf}(v_j)=\sum_{i=1}^n \lW_{ij} \err(v_i) ~.$
The following lemma is proven in \cref{app:algo:agg}.

\begin{lemma}\label{lem:agg1}
Let $U=\{u_1,u_2,\dots, u_k\}$ be  top-$k$ vertices with highest value of $\sum_{u_j\in U}{\rm Inf}(u_j)$. We have that:
\vspace{-0.2cm}
\[
\Gagg(U)=\OPT{agg}~.
\]
\end{lemma}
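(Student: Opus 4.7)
My plan is to show that $\Gagg(S)$ decomposes additively over the members of $S$, which reduces the optimization to a top-$k$ sort.

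First, I would linearize the intervention. Under \eqref{eq:intervention}, for $v_j \in S$ we have $\ty_j(a) - \hy_j(a) = \varphi(\y(a) - \hy_j(a))$, while the difference vanishes for $v_j \notin S$. Substituting into $\Bnew(i,a) - \B(i,a) = \y(a)\sum_j \lW_{ij}(\ty_j(a) - \hy_j(a))$ and using $\y(a)^2 = 1$, the per-vertex gain collapses to $\varphi\sum_{v_j\in S}\lW_{ij}(1 - \y(a)\hy_j(a))$.

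Next I would sum over $i$, exchange the sum with $\Ex_a$, and apply the identity $\Ex_a[1 - \y(a)\hy_j(a)] = 2\err(v_j)$, which uses the assumption that $\err$ does not depend on the true label. This yields
\[
\Gagg(S) = 2\varphi \sum_{v_j \in S} \err(v_j) \sum_{i=1}^n \lW_{ij} = 2\varphi \sum_{v_j \in S} {\rm Inf}(v_j),
\]
so $\Gagg(S)$ is, up to the global scalar $2\varphi$, a sum of per-vertex scores ${\rm Inf}(v_j)$, each depending only on $v_j$ and not on the rest of $S$.

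Finally, since $\lW$ is non-negative and $\err(v_j) \geq 0$, every ${\rm Inf}(v_j) \geq 0$; hence the sum $\sum_{v_j \in S}{\rm Inf}(v_j)$ is maximized over $|S| = k$ by selecting the $k$ vertices with the largest ${\rm Inf}$ value. This is exactly the set $U$ in the statement, and so $\Gagg(U) = \OPT{agg}$. The only step requiring care is the first one, where the $\y(a)$ factors must be tracked carefully when expanding $\Bnew - \B$. Beyond that, the claim is a direct consequence of linearity of expectation together with non-negativity of $\lW$; no combinatorial argument is needed, which is precisely why the aggregate problem admits an exact polynomial-time algorithm, in contrast to the NP-hardness of the egalitarian version.
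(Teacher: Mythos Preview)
Your proof is correct and follows the same strategy as the paper: establish the additive decomposition $\Gagg(S)=2\varphi\sum_{v_j\in S}[\text{per-vertex score}]$ and conclude that top-$k$ selection is optimal. Your algebra is a touch cleaner than the paper's---by pulling the factor $y(a)$ inside early and using $y(a)^2=1$ you obtain $\Ex_a[1-y(a)\hy_j(a)]=2\,\err(v_j)$ directly, whereas the appendix splits into the cases $a\in\Omega^+$ and $a\in\Omega^-$ and applies \cref{lem:exerr} to each before recombining. One discrepancy to flag: your per-vertex score comes out as $\err(v_j)\sum_i\lW_{ij}$, which differs from the paper's stated ${\rm Inf}(v_j)=\sum_i\lW_{ij}\,\err(v_i)$; your derivation is the correct one (the error rate that enters is that of the intervened vertex $v_j$, not of $v_i$), and the paper's appendix proof contains an index slip consistent with your formula rather than its own definition. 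Also, the non-negativity of ${\rm Inf}$ that you invoke is not actually needed: since the constraint is $|S|=k$ exactly, selecting the $k$ largest scores maximizes the sum regardless of sign.
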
 
\paragraph{Proof of \Cref{thm:agg1} and \Cref{thm:agg2}}
With the above lemma, we design an algorithm  that for all $v_j$s calculates  their influence score,  and outputs the top-$k$. 
 The  complexity of such algorithm   is $\Theta(n^2+n\log n +k)=\Theta(n^2)$. In \cref{app:algo:agg} we also prove 
 \Cref{thm:agg2}.

\subsection{The egalitarian objective function}\label{sec:algo:egal}

Optimizing the egalitarian function is NP-hard (See \cref{app:thm:hardness}). We show that $\Gegal:2^V\rightarrow [0,n]$ is monotonic and sub-modular (See \cref{app:lem:monotone,app:lem:submodular}). Thus, 
a  greedy algorithm will provides a $(1-1/e)$ approximation \cite{nemhauser1978best}.  

We now concentrate on obtaining the greedy choice. 
Formally, we define the function $\Greedy(S):2^V\rightarrow V$ as follows:
\vspace{-0.3cm}
\begin{equation}\label{def:greedy}
  \Greedy(S)=\underset{u\in V}{\rm argmax } \ \Gegal(S\cup\{u\})- \Gegal(S)~.   
\end{equation}
The following lemma is proven in \cref{app:greedyfind}:

\begin{lemma}\label{lem:greedy}
For any $S\subseteq V$ we have:
\begin{equation}\label{eq:sumgreedy}
\Greedy(S)=\underset{u\in V}{\rm argmax }
\sum_{\substack{i=1:n
\\
\lW_{iu}\neq 0
}}  \Gain_i(S,u)~,
\end{equation}
where $\Gain_i(S,u)$ is defined to be\footnote{We use  $\wedge$ and $\vee$  to denote respectively the logical operations conjunction and disjunction which are  \emph{``and''} and  \emph{``or''}.} 
\begin{equation*}\label{eq:gainfull}
\Prob_{a\sim \Omega}
\left(
\B(i,a)\leq 0  \wedge \left(\bigwedge_{
\substack{v_j\in S
\\ 
\lW_{ji}\neq  0 
} } \y(a){=} \hy_j(a)\right)
\wedge \y(a){\neq} \hy_u(a)
\right)
\end{equation*}
\end{lemma}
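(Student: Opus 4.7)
The plan is to compute the marginal gain $\Gegal(S\cup\{u\}) - \Gegal(S)$ directly and, agent by agent, identify the event under which adding $u$ strictly flips the indicator inside the definition of $\Gegal$. Writing $\Bnew^{T}(i,a)$ as shorthand for the quantity obtained by applying \eqref{eq:intervention} with selected set $T$, linearity of expectation writes the marginal gain as a sum over $i$ of the probability of the set-difference event
$$\{\B(i,a){<}0\}\cap\{\Bnew^{S\cup\{u\}}(i,a){>}\B(i,a)\}\setminus\{\Bnew^{S}(i,a){>}\B(i,a)\}.$$

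Next, I would unpack $\Bnew^{T}(i,a) - \B(i,a)$ algebraically. Substituting \eqref{eq:intervention} into the definition of $\z^{*}$ (and hence of $\znew^{*}$) and multiplying by $y(a)$ gives
$$\Bnew^{T}(i,a) - \B(i,a) \;=\; 2\imp \sum_{v_j \in T} \lW_{ij}\,\mathbf{1}(\hy_j(a) \neq y(a)),$$
using the elementary identity $y(a)(y(a) - \hy_j(a)) = 2\cdot\mathbf{1}(\hy_j(a) \neq y(a))$ for $y(a),\hy_j(a)\in\{-1,+1\}$. Because $\lW$ has non-negative entries (the standing assumption of this subsection), every summand is non-negative, so $\Bnew^{T}(i,a) > \B(i,a)$ holds if and only if some $v_j \in T$ satisfies both $\lW_{ij} > 0$ and $\hy_j(a) \neq y(a)$.

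Plugging this ``iff'' characterization into the set-difference event, the contribution of agent $v_i$ to the marginal gain reduces to the probability that $\B(i,a){\leq}0$, that $u$ triggers improvement for $v_i$ (i.e.\ $\lW_{iu}>0$ and $\hy_u(a)\neq y(a)$), and that no element of $S$ already triggers improvement for $v_i$ (i.e.\ for every $v_j\in S$ with $\lW_{ij}>0$, $\hy_j(a)=y(a)$). This is exactly the event defining $\Gain_i(S,u)$. When $\lW_{iu}=0$ the event is empty and contributes nothing, which justifies restricting the outer sum in \eqref{eq:sumgreedy} to indices $i$ with $\lW_{iu}\neq 0$; the $\mathrm{argmax}$ characterization of $\Greedy(S)$ then follows from \eqref{def:greedy}.

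The main (rather mild) obstacle is the bookkeeping needed to turn the set-difference event into the ``for every $v_j \in S$'' conjunction and to keep straight which slot of $\lW$ encodes which direction of influence. The non-negativity of $\lW$ is the essential structural ingredient: without it, positive and negative contributions to $\Bnew^{T}-\B$ could cancel, the ``exists'' characterization of $\Bnew^{T} > \B$ would fail, and the set-difference decomposition used in the last step would break down.
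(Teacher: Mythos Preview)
Your proposal is correct and follows essentially the same route as the paper. The paper factors out your ``iff'' characterization of $\Bnew^{T}(i,a)>\B(i,a)$ as a separate proposition (\cref{prop:positive_guarantee}) and then splits the disjunction $\bigvee_{v_j\in S\cup\{u\}}$ into the part over $S$ and the residual event for $u$, which is exactly your set-difference decomposition; you simply derive the characterization inline via the explicit formula $\Bnew^{T}-\B=2\imp\sum_{v_j\in T}\lW_{ij}\mathbf{1}(\hy_j(a)\neq y(a))$.
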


\paragraph{Proof of \cref{thm:app:egal}}
Our proposed algorithm, 
\algo\ starts with $S=\emptyset$. Iteratively,  $\Greedy(S)$ is added to $S$ until $\abs{S}=k$.  
Assume that $\Omega$ is finite and we have access to $\pi$. Using \cref{lem:greedy} we obtain the greedy choice as follows:   for any $a\in \Omega$ and $v_i\in V$, we evaluate the validity of the event $\left(\bigwedge_{
\substack{v_j\in S
\\ 
\lW_{ji}\neq  0 
} } \y(a){=} \hy_j(a)\right)
\wedge \y(a){\neq} \hy_u(a)$. If this event is valid,  we calculate $\B(i,a)$ and verify $\B(i,a)<0$ which takes $n$ steps.  
We find best $u$  by using \cref{eq:sumgreedy} and iterating over all choices of $i\in V$ and $a\in \Omega$. 
The total runtime for $k$ iterations  is $\Theta(\abs{\Omega}n^2k)$.

\subsubsection{Independent classifiers}\label{sec:algo:ind}

In this section we consider the case where the classifiers $\{\hy_i\}_{v_i\in V}$ are pairwise independent which falls into the scenario 
in which we have \emph{some} knowledge of $\pi$ (Assumption~\ref{assu:relaxpi1}). 
In this case we may estimate the greedy choice as a function of $\{\err(v_i)\}_{i=1}^n$ and $\lW$. 

Let's state the main result of this section and then we present the steps that lead us to the selection of the greedy choice:

\begin{theorem}\label{thm:egalcorerct}
Let $S$ be the output of a greedy algorithm which starts by taking $S=\emptyset$ and for $k$ steps keeps updating $S$ to  $S\cup\{g\}$ where 
\vspace{-0.2cm}
\[g=\underset{u}{\rm argmax}\sum_{\substack{i=1:n
\\
\lW_{iu}\neq 0
}} \widehat{\Gain}_i(S,u)~, \]
and $\widehat{\Gain}_i(S,u)$ is  defined in $\cref{lem:appgain}$, we have: 
\[
\Gegal(S)\geq [(1-1/e)-\Delta^{\rm ind}]\cdot \OPT{egal}~,
\]
with $\Delta^{\rm ind}=\Theta(\abs{{\mathcal A}})$ and $\mathcal A$ is the set of ambiguous vertices. 
\end{theorem}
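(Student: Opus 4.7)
The plan is to combine two standard ingredients: the classical performance guarantee for greedy maximization of monotone submodular functions, and an approximate-oracle perturbation bound. The first ingredient is already supplied by the referenced \cref{app:lem:monotone} and \cref{app:lem:submodular}, which establish that $\Gegal:2^V\to[0,n]$ is monotone and submodular; consequently, if the greedy step selected the \emph{exact} maximizer of $\sum_i \Gain_i(S,u)$ as characterized in \cref{lem:greedy}, then by the theorem of Nemhauser–Wolsey–Fisher we would immediately obtain $\Gegal(S)\ge (1-1/e)\cdot \OPT{egal}$. So the whole task reduces to quantifying how far the approximate selection rule using $\widehat{\Gain}_i(S,u)$ deviates from the true greedy rule, and converting that into a loss term $\Delta^{\rm ind}$.

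First, I would expand $\Gain_i(S,u)$ from \cref{lem:greedy} as a product of two factors: the conditioning event $\bigwedge_{v_j\in S,\lW_{ji}\neq 0} \y(a){=}\hy_j(a)\wedge \y(a)\neq \hy_u(a)$, and the predicate $\mathbf{1}(\B(i,a)\le 0)$. Under pairwise independence of $\{\hy_j\}$, the first factor factorizes exactly into a product of error rates, matching the definition of $\widehat{\Gain}_i(S,u)$ in \cref{lem:appgain}. The only obstruction to equality with $\Gain_i(S,u)$ is the term $\Prob(\B(i,a)\le 0)$, since $\B(i,a)=\y(a)\sum_j \lW_{ij}\hy_j(a)$ is a sum of many terms whose joint law matters. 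The key observation I would make is that, for each vertex $i$, the sign of $\B(i,a)$ is almost surely determined by the error profile and the weights $\lW_{ij}$ alone — specifically, if $\sum_j \lW_{ij}(1-2\err(v_j))$ is bounded away from zero by a suitable margin, then Hoeffding-type concentration under independence forces $\mathbf{1}(\B(i,a)\le 0)$ to agree with its easily-computable surrogate with probability $1-o(1)$. Vertices that fall outside this margin are precisely the \emph{ambiguous vertices} $\mathcal A$.

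Next I would establish the per-step error bound
\[
\left|\sum_{i:\lW_{iu}\neq 0}\widehat{\Gain}_i(S,u)-\sum_{i:\lW_{iu}\neq 0}\Gain_i(S,u)\right|\;\le\;|\mathcal A|,
\]
since the two summands coincide for $i\notin \mathcal A$, and each individual gain is bounded by $1$. Plugging this into the standard analysis of greedy with an approximate oracle (e.g.\ the Goundan–Schulz variant), where a per-step additive error of $\delta$ yields the guarantee $\Gegal(S)\ge (1-1/e)\OPT{egal} - k\delta$, gives the claimed bound with $\Delta^{\rm ind}=\Theta(|\mathcal A|/\OPT{egal})$ as a relative loss, matching $\Delta^{\rm ind}=\Theta(|\mathcal A|)$ up to the normalization implicit in the theorem statement. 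A ``nicely structured'' network, in which the influence profile is balanced enough that the margin condition holds for almost every vertex, forces $|\mathcal A|=o(1)\cdot \OPT{egal}$ and hence $\Delta^{\rm ind}=o(1)$.

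The main obstacle will be the middle step: giving a precise definition of $\mathcal A$ so that simultaneously (i) the surrogate for $\mathbf{1}(\B(i,a)\le 0)$ is exact for $i\notin \mathcal A$ under pairwise independence and (ii) $|\mathcal A|$ can be controlled by a clean structural condition on $\lW$ and the innate error rates. Everything else — monotonicity, submodularity, and the oracle-perturbation lemma — is standard and already set up earlier in the paper; the novelty lies in identifying the right notion of ambiguity and carrying out the concentration argument that separates the ``generic'' vertices from the handful where independence alone is insufficient to pin down the sign of $\B(i,a)$.
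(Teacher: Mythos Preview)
Your proposal is correct and follows essentially the same route as the paper: monotonicity and submodularity of $\Gegal$ supply the $(1-1/e)$ baseline, \cref{lem:appgain} isolates the Hoeffding-controlled concentration term $\Gamma_i(S,u)$, and \cref{lemm:egalapp:good} shows that for non-ambiguous $i$ the surrogate matches the true gain up to $o(n^{-1})$, so summing over all $i$ leaves an error of $|\mathcal A|+o(1)$. The only slip is your claim that the summands ``coincide'' for $i\notin\mathcal A$; they only agree up to $o(n^{-1})$ (indeed \cref{lem:appgain} bounds the discrepancy by a Hoeffding tail, not zero), but this is harmless since the aggregate contribution from non-ambiguous vertices is still $o(1)$, exactly as the paper argues.
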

\paragraph{Ambiguous vertices} Consider the partitioning of $V$ with $V^+$ as low error vertices and $V^-$ as high error vertices:
\[V^+=\{v_j\mid \err(v_j)\leq 1/2\} \; \& \ V^-=\{v_j\mid \err(v_j)> 1/2\}\]
with respect to this partition we define the following vectors whose elements are in $[0,1]$:
\[
 {\mathcal E}^+= \left(1-2\err(v_j)\right)_{v_j\in V^+}\ \ \& \  \  {\mathcal E}^-= \left(2\err(v_j)-1\right)_{v_j\in V^-}
\]
For any arbitrary vertex $v_i\in V$, low error and high error vertices contribute in the value of  $\B(i,a)$ through the following coefficient vectors: 
\[
\lW^+_i=(\lW_{ij})_{v_j\in V^+} \quad   \&  \  \ \lW^-_i=(\lW_{ij})_{v_j\in V^-}
\]
The ambiguous vertices are those who \emph{are not} dominated by neither $V^+$ or $V^-$.
 Formally, 

\begin{definition}\label{def:nice}
    [Ambiguous vertices]
Let $\lW_i=(\lW_{i1},\lW_{i2},\dots, \lW_{in} )$,  $\abs{\cdot}_2$ denote the  $\ell_2$ norm and $\langle \cdot, \cdot \rangle$  dot product.  
    We call a vertex $v_i\in V$ \emph{ambiguous} if it satisfies: 
    \[\abs{
    \frac{
 \langle \lW_i^+, {\mathcal E}^+\rangle 
 }{\abs{\lW_{i}}_2} 
 - \frac{
 \langle \lW_i^-, {\mathcal E}^-\rangle 
 }{\abs{\lW_{i}}_2}} \leq 4\sqrt{ \log n}~.
    \]
    A network is \emph{nicely structured} if it has no ambiguous vertex. 
\end{definition}

If a vertex is \emph{non-ambiguous} we can estimate  the gain associated to it very precisely: 

\begin{lemma}\label{lemm:egalapp:good}
Let $\Gain_i(S,u)$ and $\widehat{\Gain_i}(S,u)$  be as defined respectively as in  \cref{eq:gainfull}
and \cref{lem:appgain}.
If a vertex is \emph{non-ambiguous} we have: 
\[
\abs{\Gain_i(S,u)-\widehat{\Gain_i}(S,u)}\leq o(n^{-1})
~.\]
\end{lemma}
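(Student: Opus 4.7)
The plan is to expand $\Gain_i(S,u)$ into a factor depending only on the classifiers at vertices in $S\cup\{u\}$, which matches one piece of $\widehat{\Gain_i}(S,u)$ exactly, times a conditional probability that $\B(i,a)\le 0$; I then use concentration to argue that under the non-ambiguity hypothesis this conditional probability is essentially the deterministic indicator that $\widehat{\Gain_i}(S,u)$ substitutes in.

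First, conditioning on the events $\{\y(a)=\hy_j(a)\}$ for $v_j\in S$ with $\lW_{ji}\neq 0$ together with $\{\y(a)\neq \hy_u(a)\}$, and using independence of $\{\hy_j\}_{v_j\in V}$, I would write
\[
\Gain_i(S,u)\;=\;\Prob\!\bigl(\,\B(i,a)\le 0\,\bigm|\,E_{S,u}\,\bigr)\,\cdot\!\prod_{v_j\in S,\,\lW_{ji}\neq 0}\!(1-\err(v_j))\,\cdot\,\err(v_u),
\]
where $E_{S,u}$ denotes the conditioning event. The product factor is exactly the part used in $\widehat{\Gain_i}(S,u)$, so the entire approximation error reduces to bounding how far the conditional probability $\Prob(\B(i,a)\le 0\mid E_{S,u})$ is from the deterministic indicator that $\widehat{\Gain_i}(S,u)$ uses in its place.

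Second, on the event $E_{S,u}$ the quantity $\B(i,a)=\sum_j \lW_{ij}\y(a)\hy_j(a)$ splits into a deterministic contribution from the $\le k{+}1$ conditioned coordinates plus a sum of $\Theta(n)$ bounded, independent summands, each in $[-\lW_{ij},\lW_{ij}]$. Its conditional mean equals $\langle \lW_i^+,\mathcal{E}^+\rangle-\langle \lW_i^-,\mathcal{E}^-\rangle$ up to an $O(k\cdot\max_j\lW_{ij})$ correction from the conditioning. Applying Hoeffding's inequality gives
\[
\Prob\!\bigl(\,\mathrm{sign}(\B(i,a))\neq\mathrm{sign}(\Ex[\B(i,a)\mid E_{S,u}])\,\bigm|\,E_{S,u}\,\bigr)\;\le\;2\exp\!\left(-\tfrac{t^2}{2\,|\lW_i|_2^{\,2}}\right),
\]
where $t$ is the absolute value of the conditional mean. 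The non-ambiguity hypothesis of \cref{def:nice} forces $t/|\lW_i|_2\ge 4\sqrt{\log n}$ modulo the $O(k\cdot\max_j\lW_{ij}/|\lW_i|_2)$ perturbation, so the right-hand side is at most $2n^{-8}$. Replacing the conditional probability by $\mathbf{1}[\langle \lW_i^+,\mathcal{E}^+\rangle\le \langle \lW_i^-,\mathcal{E}^-\rangle]$, which is how $\widehat{\Gain_i}$ is constructed, therefore produces an error of size at most $2n^{-8}\cdot\Prob(E_{S,u})=o(n^{-1})$, as claimed.

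The main obstacle is controlling the $O(k)$-coordinate conditioning: the gap $4\sqrt{\log n}\cdot|\lW_i|_2$ guaranteed by non-ambiguity must survive a mean shift of up to $\sum_{v_j\in S\cup\{u\}}|\lW_{ij}|$, which requires that no handful of entries of the $i$-th row of $\lW$ dominate $|\lW_i|_2$; this is implicit in the \emph{nicely structured} regime of \cref{def:nice} and should be made explicit as a row-regularity hypothesis in the proof. A secondary subtlety is that \cref{thm:indep} is stated under pairwise independence, for which only Chebyshev-type bounds apply and yield $O(1/\log n)$ error — too weak for $o(n^{-1})$. To obtain the claimed bound the proof must either leverage mutual independence, consistent with how $\widehat{\Gain_i}$ already factorizes into a product of single-vertex marginals, or substitute Hoeffding by a higher-moment inequality under bounded dependence.
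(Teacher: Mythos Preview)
Your overall strategy---factor $\Gain_i(S,u)$ as a product of marginals on $S\cup\{u\}$ times a conditional probability $\Prob(\B(i,a)\le 0\mid E_{S,u})$, then control the latter via Hoeffding and the non-ambiguity margin---is exactly the paper's approach. The paper packages steps one and two as \cref{lem:appgain} (and its auxiliary \cref{approxGamma}), and step three as \cref{lem:usedotprod}; the proof of \cref{lemm:egalapp:good} then simply chains the two inequalities and plugs in the non-ambiguity threshold.

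However, you misread the definition of $\widehat{\Gain_i}(S,u)$, and this misreading is what generates the ``main obstacle'' you flag. The indicator in \cref{lem:appgain} is $\mathbf{1}(\val_i(S,u)<0)$, where $\val_i(S,u)$ is precisely the conditional mean of $\B(i,a)$ given $E_{S,u}$; it is \emph{not} the $S$-independent indicator $\mathbf{1}[\langle\lW_i^+,\mathcal{E}^+\rangle\le\langle\lW_i^-,\mathcal{E}^-\rangle]$ you assume. With the correct indicator, Hoeffding gives the clean bound $\exp\bigl(-\val_i(S,u)^2/(4|\lW_i|_2^2)\bigr)$ with no mean-shift discrepancy. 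The passage to an $S$-independent bound then uses the elementary monotonicity $\val_i(S,u)\ge\val_i(u)$: replacing each $\lW_{ij}[1-2\err(v_j)]$ by $\lW_{ij}$ for $v_j\in S$ can only increase the sum, since all $\lW_{ij}\ge 0$. Finally, $\val_i(u)$ differs from $\langle\lW_i^+,\mathcal{E}^+\rangle-\langle\lW_i^-,\mathcal{E}^-\rangle$ by the single term $2\err(u)\lW_{iu}$, and $\lW_{iu}/|\lW_i|_2\le 1$ absorbs this as a constant shift of at most $2$ inside the $4\sqrt{\log n}$ margin. So no additional row-regularity hypothesis is needed; the argument closes with the hypotheses already stated.

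Your remark on pairwise versus mutual independence is well taken: Hoeffding requires full independence, and the paper's own proof invokes Hoeffding under the same ``pairwise'' label, so this is a wording issue in the paper rather than a gap specific to your argument.
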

We now present the following lemma  related to the approximation of greedy choice. All the proofs and details are presented in \cref{app:sec:ind}.


\begin{lemma}\label{lem:appgain}
Let $\Gain_i(S,u)$ be as \cref{eq:gainfull}.
    Let $  \widehat{\Gain_i}(S,u):2^V\times V\rightarrow [0,1]$ be defined as follows: 
    \[
     \widehat{\Gain_i}(S,u)\triangleq
        \mathbf{1}\left(\val_i(S,u)<0\right)\err(u)\prod_{\substack{{v_j\in S}\\
           \lW_{ij}\neq 0}}(1-\err(v_j))~.
    \]
\vspace{-0.2cm}
    We have:
    \[
        \abs{\widehat{\Gain_i}(S,u)-\Gain_i(S,u)}\leq  \exp\left(
    -\frac{\val_i(S, u)^2}{4\sum_{i=1}^n\lW_{ij}^2}
    \right)~,
       \]
    \   
    where 
    \[
    \val_i(S,u)=-\lW_{iu}+\sum_{v_j\in S}\lW_{ij}+\sum_{\substack{
    j=1:n\\
    j\not \in S\cup\{u\} 
    }}\lW_{ij}[1-2\err(j)]~.
    \]

\end{lemma}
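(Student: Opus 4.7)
The plan is to rewrite $\B(i,a)$ as a weighted sum of independent $\pm 1$ variables, factorise the probability defining $\Gain_i(S,u)$ using this independence, and then bound the residual conditional probability by a one-sided Hoeffding tail. Concretely, set $X_j := \y(a)\hy_j(a)\in\{-1,+1\}$. Under the assumption that each error is independent of the true label, together with the (mutual) independence of the classifiers $\{\hy_j\}$, the $X_j$'s are independent with $\Prob(X_j=-1)=\err(v_j)$ and $\Ex[X_j]=1-2\err(v_j)$. Multiplying $\z^{*}(i,a)=\sum_j\lW_{ij}\hy_j(a)$ through by $\y(a)$ yields $\B(i,a)=\sum_{j=1}^{n}\lW_{ij}X_j$, and the events $\{\y(a)=\hy_j(a)\}$ and $\{\y(a)\neq\hy_u(a)\}$ become $\{X_j=+1\}$ and $\{X_u=-1\}$ respectively.

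Let $\mathcal E$ denote the conjunction $\bigl(\bigwedge_{v_j\in S,\,\lW_{ij}\neq 0}\{X_j=+1\}\bigr)\wedge\{X_u=-1\}$; for $v_j\in S$ with $\lW_{ij}=0$ the corresponding term in $\B(i,a)$ vanishes, so such $v_j$ can be ignored. By mutual independence of the $X_j$'s,
\[
\Gain_i(S,u)=\Prob\bigl(\B(i,a)\leq 0\,\bigm|\,\mathcal E\bigr)\,\cdot\,\err(u)\prod_{v_j\in S,\,\lW_{ij}\neq 0}\bigl(1-\err(v_j)\bigr),
\]
and a direct computation gives $\Ex[\B(i,a)\mid\mathcal E]=-\lW_{iu}+\sum_{v_j\in S}\lW_{ij}+\sum_{j\notin S\cup\{u\}}\lW_{ij}(1-2\err(j))=\val_i(S,u)$. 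Applying Hoeffding's inequality to the centred sum $\sum_{j\notin S\cup\{u\}}\lW_{ij}(X_j-\Ex X_j)$, whose independent summands lie in $[-\lW_{ij},\lW_{ij}]$, gives a one-sided tail bounded by $\exp\bigl(-\val_i(S,u)^2/(2\sum_{j\notin S\cup\{u\}}\lW_{ij}^2)\bigr)$, which is at most $\exp\bigl(-\val_i(S,u)^2/(4\sum_{j=1}^{n}\lW_{ij}^2)\bigr)$ as required.

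The case analysis now closes the argument. If $\val_i(S,u)<0$, then $\widehat{\Gain_i}(S,u)=\err(u)\prod(1-\err(v_j))$ and the discrepancy $\widehat{\Gain_i}(S,u)-\Gain_i(S,u)$ equals this product times $\Prob(\B(i,a)>0\mid\mathcal E)$, bounded by the Hoeffding tail above. If $\val_i(S,u)\geq 0$, then $\widehat{\Gain_i}(S,u)=0$ and $\Gain_i(S,u)\leq\Prob(\B(i,a)\leq 0\mid\mathcal E)$ is again controlled by the same tail, now applied in the opposite direction since $\B(i,a)\leq 0$ corresponds to moving down from the non-negative conditional mean. The main subtlety — and essentially the only obstacle — is in the independence step: \emph{pairwise} independence of the classifiers, as literally stated in the assumption, is not sufficient to invoke Hoeffding, so the proof tacitly requires full mutual independence of the $X_j$'s (as do the Chernoff-type bounds used elsewhere in this section). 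A secondary notational point is that the wedge in \eqref{eq:gainfull} is written with $\lW_{ji}\neq 0$ where $\lW_{ij}\neq 0$ is intended, since it is the influence of $v_j$ on $v_i$ that governs whether fixing $v_j$ alters $\B(i,a)$; once this is corrected, the factorisation step is transparent.
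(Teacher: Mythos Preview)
Your argument is correct and follows the same overall strategy as the paper: factor out $\err(u)\prod_{v_j\in S_i}(1-\err(v_j))$ using independence, then control the remaining conditional probability by a Hoeffding tail, splitting on the sign of $\val_i(S,u)$. The paper arrives at the same place via a slightly longer route: it first proves an intermediate lemma that conditions on $a\in\Omega^{+}$ versus $a\in\Omega^{-}$ (obtaining two tail probabilities $\Gamma_i^{\pm}$), and then applies Hoeffding separately in each branch before recombining. Your substitution $X_j:=\y(a)\hy_j(a)$ collapses these two branches into one, since $\B(i,a)=\sum_j\lW_{ij}X_j$ holds regardless of the true label and the distribution of each $X_j$ is label-independent by assumption; this is a genuine simplification, though not a different idea. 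Your remarks on the need for mutual (not merely pairwise) independence to invoke Hoeffding, and on the $\lW_{ji}$ versus $\lW_{ij}$ typo in the event definition, are both accurate and apply equally to the paper's own proof.
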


\paragraph{Proof of  \cref{thm:indep,remark:appW}} 
A complete pseudocode of our proposed algorithm,  \appalgo, is presented in \cref{app:sec:psudocodes} (\cref{alg:2c}).  It is easy to see that the runtime is dominated by $\Theta(n^3k)$. Note that the correctness of \cref{thm:indep} is directly concluded from \cref{thm:egalcorerct} by setting $\abs{{\mathcal A}}=0$. We present the proof of \cref{thm:egalcorerct,remark:appW} in \cref{app:sec:proofs:egal}. 

\subsubsection{Group dependent classifiers}\label{sec:algo:group}

We now consider a case when agents are either \emph{red}, \emph{blue} or none (\emph{white}). The agents who are red or blue agent either  all follow a group decision
, or they independently follow an individual decision. The group decision of the blue agents is always the opposite of the group decision of red agents. 


\begin{definition}\label{def:model}
[Group Dependence]
Assume that the set of agents $V$ can be partitioned as $V=\R\cup \Bl \cup \W$. We assume a set of classifiers  $\ihy_1, \ihy_2,\dots , \ihy_n: \Omega\rightarrow \{-1,+1\}$ which are pairwise independent. Furthermore, we assume  two group classifiers $\ghy_{\R},\ghy_{\Bl}: \Omega\rightarrow \{-1,+1\}$ which satisfy: 
\vspace{-0.2cm}
\[
\forall a\in \Omega, \ \ghy_{\R}(a)\neq \ghy_{\Bl}(a)~.
\]
Given a constant $\rho\in [0,1]$, these classifiers construct  $\{\hy_1,\hy_2,\dots, \hy_n\}$ as follows: 

With probability $\rho$ the red and blue agents follow their groups' decision, i.e., 
\vspace{-0.2cm}
\[
\forall v_i\in \R,\ \hy_i(a)= \ghy_{\R}(a) \wedge 
\forall v_i\in \Bl,\ \hy_i(a)= \ghy_{\Bl}(a)
\]
And  the white agents independently follow their individual decisions, i.e, for all $v_i\in \W$,\ 
$
\hy_i(a)= \ihy_{i}(a) ~.
$

Alternatively, with probability $1-\rho$ , all agents independently use their individual classifiers. i.e, 
\[
\forall v_i \in V, \ \hy_i(a)= \ihy_{i}(a)~. 
\]
\end{definition}
In the above setting we use the following notation:
For any $v_j\in V$ we define 
$\err^{\rm indv}(v_j)=\Prob\left(\ihy_j(a)\neq \y(a) \right)$, and 
\[\err(\R)=\Prob\left(\ghy_\R(a)\neq \y(a) \right)\ \& \
\err(\Bl)=\Prob\left(\ghy_\Bl(a)\neq \y(a) \right)
\]
It is immediate from the definition that 
$1-\err(\Bl)=\err(\R)$.

In this setting, the estimation of $\Gain_i(S,u)$ is more involved 
and is presented in \cref{sec:algo:group}. In this case, our greedy algorithm uses $\{\err^{\rm indv}(v_j)\}_{j=1}^n$, $\err(\R)$, $\err(\Bl)$ and $\lW$ or its approximation. The final result follows:

    \begin{theorem}\label{thm:egalgroupcorrectness}
Let $S$ be the output of a greedy algorithm which approximates greedy choice as defined in $\cref{app:greedy:group}$. We have: 
\[
\Gegal(S)\geq [(1-1/e)-\Delta^{\rm gr}]\cdot \OPT{egal}~,
\]
where 
$
\Delta^{\rm gr}=\Theta(\rho \abs{{\mathcal A}^{\W}}+ (1-\rho)\abs{{\mathcal A}}
)$, $\mathcal A$ is the set of ambiguous vertices defined before and ${\mathcal A}^{\W}$ is the set of \emph{$\W$-ambiguous}  vertices. 

\end{theorem}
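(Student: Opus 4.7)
The approach mirrors the independent-case argument of \Cref{thm:egalcorerct}. Since $\Gegal$ is monotone and submodular (\Cref{app:lem:monotone,app:lem:submodular}), the standard analysis of greedy with an approximate oracle guarantees that if at every step the marginal gain of the selected element is within additive $\delta$ of the optimal marginal gain, then $\Gegal(S)\geq (1-1/e)\OPT{egal}-k\delta$. Hence it suffices to control the per-vertex discrepancy $|\widehat{\Gain_i}(S,u)-\Gain_i(S,u)|$ produced by the oracle of \cref{app:greedy:group}, summed over $i\in[n]$, and translate it into the claimed $\Delta^{\rm gr}$ degradation.

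To analyze $\Gain_i(S,u)$ under \Cref{def:model}, condition on the Bernoulli$(\rho)$ coin flip that decides whether the colored agents obey their group classifiers or act individually. With probability $1-\rho$ every agent uses its independent $\ihy_i$, so the conditional sub-model is exactly the pairwise-independent model of \Cref{sec:algo:ind}, and \Cref{lem:appgain} bounds the per-vertex error by $\exp(-\val_i(S,u)^2/(4\sum_{j}\lW_{ij}^2))$, which is $o(n^{-2})$ for every non-ambiguous $v_i$ by \Cref{def:nice}. Summing over $i$ contributes $\Theta((1-\rho)\abs{\mathcal{A}})$ to the total oracle error.

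On the complementary event of probability $\rho$, every $v_j\in\R$ outputs $\ghy_\R(a)$ and every $v_j\in\Bl$ outputs $\ghy_\Bl(a)=-\ghy_\R(a)$. In the expression $\y(a)\sum_j\lW_{ij}\hy_j(a)$ governing $\B(i,a)$, the colored coordinates collapse to a single $\pm 1$ random variable multiplied by the net coefficient $\sum_{v_j\in\R}\lW_{ij}-\sum_{v_j\in\Bl}\lW_{ij}$, whose magnitude we denote $\Delta\lW_i$; after absorbing this into the threshold, only the white summands remain independent. A second Hoeffding bound over $\W$ alone then produces a tail whose exponent is the $\W$-restricted analogue of $\val_i(S,u)$ shifted by $\Delta\lW_i$, and whose variance factor is $\sum_{v_j\in\W}\lW_{ij}^2$. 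Declaring a vertex \emph{$\W$-ambiguous} precisely when this shifted exponent fails to beat $n^{-2}$ gives the set $\mathcal{A}^{\W}$, and summing over $i$ contributes $\Theta(\rho\abs{\mathcal{A}^{\W}})$ to the oracle error. Note that if $S$ contains both red and blue vertices, the conjunction in \Cref{lem:greedy} is vacuously unsatisfiable on the group event, which only helps the bound.

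Adding the two conditional contributions by linearity of expectation yields per-step error of order $\rho\abs{\mathcal{A}^{\W}}+(1-\rho)\abs{\mathcal{A}}=\Delta^{\rm gr}$; feeding this into the approximate-greedy guarantee and rescaling by $\OPT{egal}$ gives the claim. The main technical obstacle is the second case: the joint conditioning on $\ghy_\R$, $\ghy_\Bl$ together with the conjunction $\bigwedge_{v_j\in S,\,\lW_{ji}\neq 0}\y(a)=\hy_j(a)$ from \Cref{lem:greedy} must be unwound carefully before Hoeffding can be applied to the remaining independent white summands, and it is exactly this unwinding that forces the asymmetric definitions of $\mathcal{A}$ versus $\mathcal{A}^{\W}$ and explains why $\Delta^{\rm gr}\geq \Delta^{\rm ind}$ in general, with equality approached as $\abs{\W}$ grows.
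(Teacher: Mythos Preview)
Your proposal is correct and follows essentially the same route as the paper: condition on the Bernoulli($\rho$) coin, handle the $(1-\rho)$ branch via the independent-case bound of \Cref{lem:appgain} (yielding the $\Theta((1-\rho)|\mathcal{A}|)$ contribution), and on the $\rho$ branch collapse the colored coordinates into the $\pm\Delta\lW_i$ shift and apply Hoeffding over the white summands alone (yielding $\Theta(\rho|\mathcal{A}^{\W}|)$), then feed the combined per-step oracle error into the approximate-greedy submodular guarantee. The only minor slip is that the paper's threshold in \Cref{def:nice} gives $o(n^{-1})$ per non-ambiguous vertex rather than the $o(n^{-2})$ you state, but this is immaterial since the sum over $n$ vertices is $o(1)$ either way.
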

\vspace{-0.3cm}
\paragraph{$\boldsymbol{\W}$-Ambiguous vertices.} As in \cref{sec:algo:group}, we partition $\W$ to low error vertices $\W^+$ and high error vertices $\W^-$. Similarly we define $ {\mathcal E^\W}^+$, $ {\mathcal E^\W}^-$ and for any $v_i\in V$, ${\lW^{\W+}_i}$ and ${\lW^{\W-}_i}$; for details see \cref{app:sec:Wambig}. We define: 
\begin{definition}\label{def:nicecolor}
    [ $\W$-Ambiguous vertices ]
Let $\lW^\W_i=(\lW_{ij})_{v_j\in \W}$, and $\abs{\cdot}_2$ be the $\ell_2$ norm and $\langle \cdot, \cdot \rangle$ be dot product.

    We call an agent $v_i\in V$, \emph{$\W$-ambiguous}  if it satisfies 
 \[\abs{
 \frac{
 \langle{\lW^{\W+}_i},  {\mathcal E^\W}^+ \rangle 
 }{\abs{\lW_i^\W}_2}-
  \frac{
 \langle{\lW^{\W-}_i},  {\mathcal E^\W}^- \rangle 
 }{\abs{\lW_i^\W}_2}
 }\leq 4\sqrt{\log n}+ \Delta\lW_i~,
    \]
    where
$\Delta\lW_i=\abs{\sum_{v_j\in \R}\lW_{ij} - \sum_{v_j\in \Bl}\lW_{ij}}~.
    $
    A \emph{network} is \emph{nicely colored} if no vertex is $\W$-ambiguous.
\end{definition}
\vspace{-0.25cm}
\paragraph{Proof of \cref{thm:group,remark:appW}} 
Pseudocode and details 
are presented in  \cref{app:sec:algo:group,app:sec:psudocodes,app:thm:proofs:group}.
\Cref{thm:group} can be directly concluded from \cref{thm:egalgroupcorrectness} by setting $\abs{\mathcal A}=\abs{{\mathcal A}^\W}=0$.


\section{Experiments}
In this section, we empirically validate the effectiveness of our proposed methods for \cref{problem2} through a series of meticulously designed experiments.
We test our algorithms {\algo} and {\appalgo} (pseudocodes in \cref{app:sec:psudocodes}) 
 benchmarked against random selections together with three heuristic methods: selecting nodes based on degree (Degree), innate error rate (ErrRate), and the product of degree and error rate (DegXErr).
 
 

\vspace{-0.35cm}
\paragraph{Synthetic Datasets}
Synthetic data is generated with three components: a random graph $G$, a weight matrix $\bar{W}$ and initial opinions $\mathbf{\hy}$. To generate  $G$, 
we employ 
Erd\H{o}s-R\'{e}nyi model (ER)~\cite{erdos59a}, Barab\'{a}si-Albert model (PA) \cite{barabasi1999emergence} and Watts-Strogatz model (WS) \cite{watts1998collective}.
We generate the weight matrix $\bar{W}$ using the FJ model. 
Each $\hy_i(a)$ is sampled from Bernoulli distribution with a randomly chosen $p_i$. 

\begin{table*}[h!]
  \centering
\caption{Comparison of experiments on five methods Egal={\algo}, Appx={\appalgo}, Rand={Random selection}. 
}
\label{tb:exp}
\begin{footnotesize}  
  \begin{tabular}{|p{1.3cm}|p{1.2cm}|p{1cm}|p{1cm}|p{1cm}|p{1cm}|p{1cm}|p{1cm}|p{1cm}|p{1cm}|}
  \hline
                        &               & \multicolumn{8}{c|}{\textbf{Datasets}}    \\ \hline
  \textbf{Score}        & \textbf{Method} & {ER (128)} & {PA (128)} & {WS (128)} & {RandW (128)} & {BIO (297)} & {CSPK (39)} & {FB (620)} & {WIKI (890)} \\ \hline
  \multirow{6}{*}{\makecell{{\cvratio @ } \\ k=log(n)}} & Rand          & 0.11         & 0.88         & 0.53         & 0.18              & 0.80          & 0.63          & 0.35         & 0.48           \\ \cline{2-10}
                        & Degree        & 0.08         & 0.96         & 0.42         & 0.12              & 0.78          & 0.84          & 0.36         & 0.49           \\ \cline{2-10}
                        & ErrRate       & 0.22         & \textbf{1.00} & 0.76         & 0.47              & 0.96          & 0.94          & 0.53         & 0.54           \\ \cline{2-10}
                        & DegXErr       & 0.18         & \textbf{1.00} & 0.89         & 0.37              & 0.96          & \textbf{1.00} & 0.72         & 0.78           \\ \cline{2-10}
                        & Appx          & 0.18         & \textbf{1.00} & 0.87         & 0.41              & 0.94          & 0.84          & 0.62         & 0.64           \\ \cline{2-10}
                        & Egal          & \textbf{0.27} & \textbf{1.00} & \textbf{1.00} & \textbf{0.58}     & \textbf{1.00} & \textbf{1.00} & \textbf{0.88} & \textbf{0.96}  \\ \hline
  \multirow{6}{*}{\makecell{{\#k @} \\ \cvratio>90\%}}   & Rand          & >100         & 7            & 10           & 34                & 8             & 10            & 94           & 22             \\ \cline{2-10}
                        & Degree        & >100         & 4            & 17           & 45                & 9             & 4             & 93           & 26             \\ \cline{2-10}
                        & ErrRate       & 71           & 2            & 7            & 18                & 3             & 3             & 19           & 13             \\ \cline{2-10}
                        & DegXErr       & 71           & 2            & 6            & 18                & 3             & 3             & 32           & 7              \\ \cline{2-10}
                        & Appx          & 61           & \textbf{1}   & 5            & 18                & 3             & 5             & 30           & 15             \\ \cline{2-10}
                        & Egal          & \textbf{55}  & \textbf{1}   & \textbf{3}   & \textbf{12}       & \textbf{1}    & \textbf{1}    & \textbf{9}   & \textbf{2}     \\ \hline
    \multirow{6}{*}{\makecell{{\#k @} \\ \cvratio>75\%}} & Rand          & 83           & 8            & 16           & 61                & 15            & 14            & 37           & 55             \\ \cline{2-10}
                        & Degree        & 83           & 5            & 28           & 64                & 14            & 6             & 20           & 54             \\ \cline{2-10}
                        & ErrRate       & 46           & 3            & 10           & 31                & 5             & 4             & 14           & 26             \\ \cline{2-10}
                        & DegXErr       & 51           & 3            & 8            & 36                & 4             & 3             & 8            & 16             \\ \cline{2-10}
                        & Appx          & 47           & \textbf{2}   & 9            & 35                & 6             & 7             & 11           & 39             \\ \cline{2-10}
                        & Egal          & \textbf{36}  & \textbf{2}   & \textbf{4}   & \textbf{19}       & \textbf{2}    & \textbf{2}    & \textbf{4}   & \textbf{3}     \\ \hline
  \end{tabular}
\end{footnotesize}
\end{table*}

\vspace{-0.3cm}
\paragraph{Real-World Graphs}
We also evaluate our methods on four diverse real network datasets~\cite{NetworkDataRepository}, BIO~\cite{BIOnet1,BIOnet2}, CSPK~\cite{DIMACS10_delaunay}, FB~\cite{foodnetwork}, WIKI~\cite{WIKInetwork}. 
Here we also apply finite step FJ model to construct weight matrix $\lW$ and  randomly sample  $\mathbf{\hy}$ from  Bernoulli distributions. 
\begin{figure}[ht]
  \centering
    \includegraphics[width=0.45\linewidth]{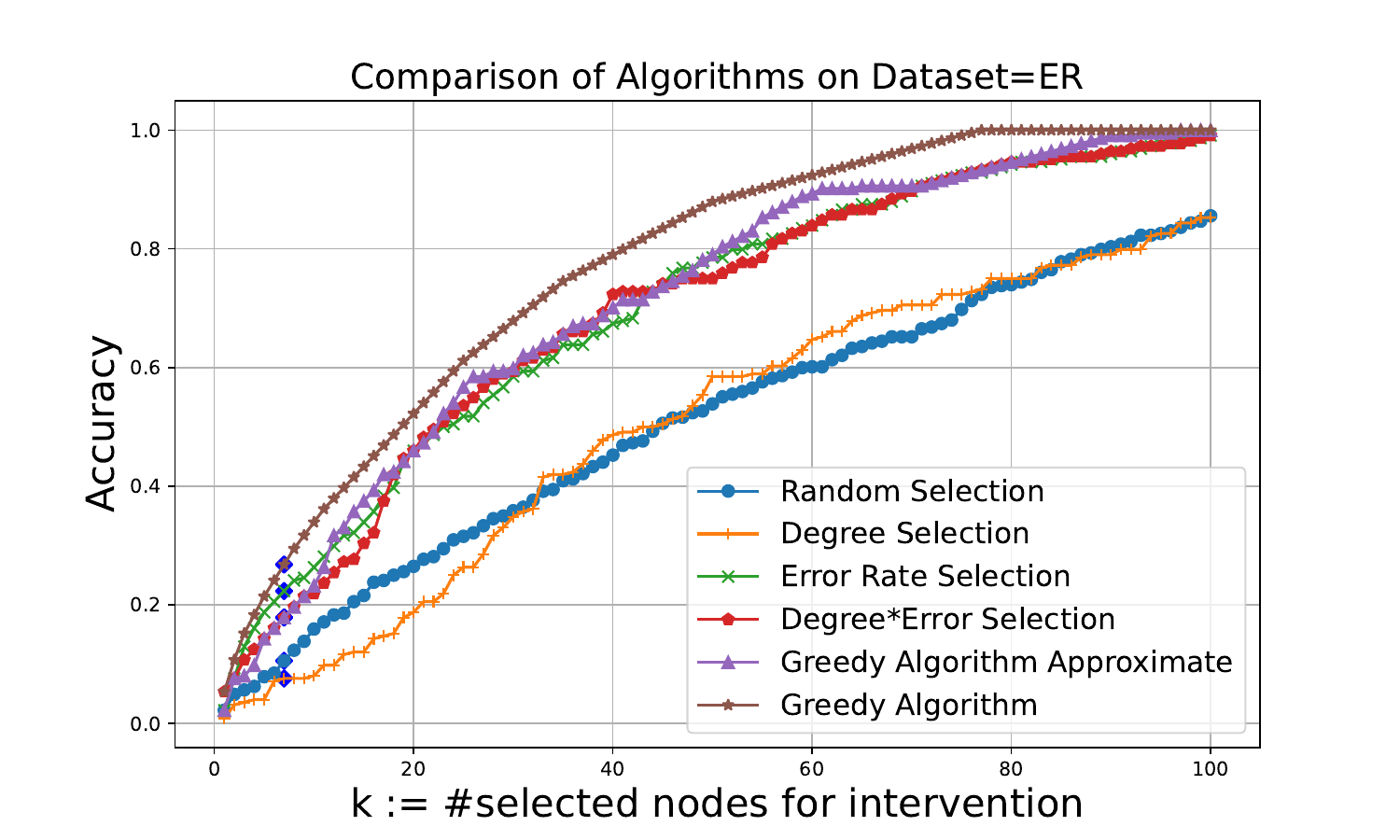}
\includegraphics[width=0.45\linewidth]{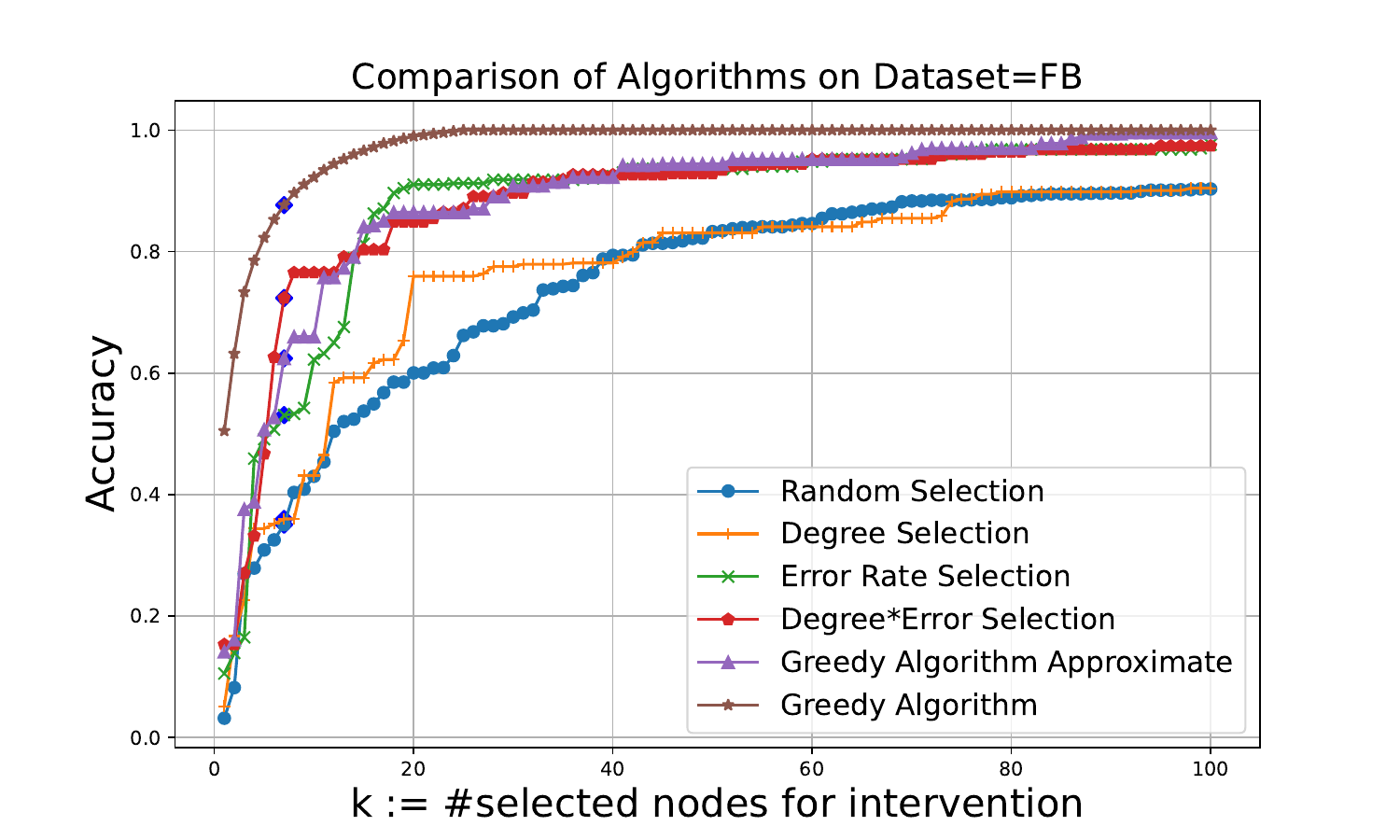}
    \vspace{-8pt} 
  \caption{Algorithms performance on ER (top) and WIKI (bottom).}
  \label{fig:exp} 
\end{figure}

We define our accuracy \cvratio\ to be the achieved egalitarian gain, normalized by its upper bound. For each dataset, we progressively increase  $k$. Obviously, 
as $k$ increases, the accuracy should increase. Therefore, we fix the threshold value $k=\log(n)$ for different datasets and compare the corresponding \cvratio\ of different methods. We also report the number of modified nodes $k$ required to achieve certain levels of accuracy. See \cref{tb:exp} for details. Figure~\ref{fig:exp} shows two of these results on real and synthetic graphs and more are presented in Figure~\ref{fig:exp_appendix}. 
More details
about experiments can be found in~\cref{asec:exp_details}. Code of our experiments is available through link \footnote{\tiny{\url{https://github.com/jackal092927/social_learning_public}}}.


From the results we can conclude that, in summary, all six algorithms can be categorized into four tiers: Tier1=\{\algo\}, Tier2=\{\appalgo\}, Tier3=\{DegXErr, ErrRate\}, Tier4=\{Degree, Rand\}.
We rank the efficiency of these methods as: Tier1$\gg$Tier2$\geq$Tier3$\gg$Tier4.
Our greedy algorithm \algo\ in general performs best on all datasets. In some datasets the \appalgo\ algorithm outperforms algorithms in Tier3 \& 4 when $k$ is very small, however, the performances of Tier2 \& 3 algorithms quickly become similar as $k$ increases. Tier4 algorithms are always the slowest in improving our egalitarian objective function.
On almost all datasets, our greedy algorithm can achieve more than 80\% accuracy within only $\log n$ nodes selected to intervene, and it beats all the other baselines.
Our greedy approximation algorithm can achieve more than 70\% accuracy. On all datasets, it beats our two baselines in Tier4 and on some data sets it beats all the baselines of Tier3 \& 4.

\vspace{-0.2cm}
\section*{Conclusion}

Given a network in which agents cooperatively perform a classification task, we analyse the problem of optimally choosing $k$ vertices and improving their
innate predictions to maximize the overall network improvement.

\vspace{-0.3 cm}
\paragraph{Limitations and Future Directions}
In this paper, our modeling relies on a few  simplifications which may pose limitations in the applicability of the methods, and they may be addressed in future works:
\vspace{-0.2 cm}
\begin{enumerate}
    \item We assume that the social planner is capable of improving every agent's innate prediction equally through \cref{eq:improvementphi}. In reality, this improvement may depend on the \emph{agent}, $i$, as well as the \emph{data point}, $a$. 

    \item Our analyses are valid when the social influence graph has non-negative weights and, in the current form, they do not generalize to graphs with negative weights, e.g.,   signed graphs. 
\end{enumerate}

We believe that  overcoming any of the above limitations  would be an interesting extension of our work, and we propose them as future directions. 



\section*{Acknowledgements} Xin and Gao would like to acknowledge NSF support through CCF-2118953, CCF-2208663, DMS-2311064, DMS-2220271, and IIS-2229876.


\bibliography{main}
\bibliographystyle{icml2024}

\newpage
\appendix
\onecolumn
\section{Additional material: proofs and  details of algorithms}

Let $\Omega^+=\{a\in \Omega\mid \y(a)=+1\}$ and  $\Omega^-=\{a\in \Omega\mid \y(a)=-1\}$. The following are some useful lemmas that we will use throughout:

\begin{lemma}\label{lem:exerr}
Let $v_j\in V$ be an arbitrary agent.  We have: 
\[
\Ex_{a\sim \Omega^+}\left[\hy_i(a)\right]= 1-2\err(v_i), \quad \Ex_{a\sim \Omega^-}\left[\hy_i(a)\right]= 2\err(v_i)-1~.
\]
\end{lemma}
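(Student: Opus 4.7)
The plan is a short direct computation that uses two facts: the binary range $\{-1,+1\}$ of $\hy_i$, and the standing assumption (made right after the definition of $\err$) that the error event $\{\hy_i(a)\neq y(a)\}$ is independent of the true label $y(a)$. The latter lets me translate the unconditional error rate $\err(v_i)$ into conditional error rates on $\Omega^+$ and $\Omega^-$.

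First I would unpack the expectation on $\Omega^+$. Since $y(a)=+1$ for every $a\in\Omega^+$, the equality $\hy_i(a)=+1$ coincides with ``$\hy_i$ is correct on $a$'', and $\hy_i(a)=-1$ coincides with ``$\hy_i$ is wrong on $a$''. By the independence of error from the true label,
\[
\Prob_{a\sim\Omega^+}\!\bigl(\hy_i(a)\neq y(a)\bigr)=\Prob_{a\sim\Omega}\!\bigl(\hy_i(a)\neq y(a)\bigr)=\err(v_i).
\]
Hence
\[
\Ex_{a\sim\Omega^+}[\hy_i(a)]=(+1)\bigl(1-\err(v_i)\bigr)+(-1)\err(v_i)=1-2\err(v_i).
\]

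The argument on $\Omega^-$ is symmetric: now $y(a)=-1$, so $\hy_i(a)=-1$ is the correct event and $\hy_i(a)=+1$ is the error event, again of probability $\err(v_i)$ by the same independence assumption. Therefore
\[
\Ex_{a\sim\Omega^-}[\hy_i(a)]=(+1)\err(v_i)+(-1)\bigl(1-\err(v_i)\bigr)=2\err(v_i)-1.
\]

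There is no real obstacle here; the only subtle point worth flagging in the write-up is that the independence of $\hy_i(a)\neq y(a)$ from $y(a)$, stated in the model section, is precisely what is needed to replace $\err(v_i)$ by the conditional error rate on $\Omega^+$ and on $\Omega^-$. Without that assumption one could only conclude a convex combination identity involving $\Prob(\Omega^+)$ and $\Prob(\Omega^-)$; with it, the two conditional expectations collapse to the advertised closed forms.
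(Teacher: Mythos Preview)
Your proof is correct and follows essentially the same direct computation as the paper: decompose the expectation over the two values of $\hy_i$, identify which value corresponds to the error event on each of $\Omega^+$ and $\Omega^-$, and evaluate. If anything, you are slightly more explicit than the paper in invoking the independence-of-error-from-true-label assumption to justify equating the conditional error probabilities with $\err(v_i)$.
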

\begin{proof}
\begin{align*}
 \Ex_{a\sim \Omega^+}\left[\hy_i(a)\right]&= +1 \cdot \mathbb{P}\left(\hy_i(a)=+1\right)-1\cdot \mathbb{P}\left(\hy_i(a)=-1\right)\\
 &= +1 \cdot \mathbb{P}(\hy_i(a)=\y(a))-1\cdot \mathbb{P}(\hy_i(a)\neq \y(a))\\
 &= \left(1-\err(v_i)\right)-\err(v_i)\\
 &= 1- 2\err(v_i)~.
\end{align*}
Similarly we have that 
\begin{align*}
 \Ex_{a\sim \Omega^-}\left[\hy_i(a)\right]&= +1 \cdot \mathbb{P}\left(\hy_i(a)\neq \y(a)\right)-1    \cdot \mathbb{P}\left(\hy_i(a)=\y(a)\right)= \err(v_i)-\left(1-\err(v_i)\right)=  2\err(v_i)-1~.
\end{align*}
\end{proof}

\begin{proposition}\label{prop:positive_guarantee}
Given any  non-negative $\lW$ used in equation~\eqref{eq:finite} and  $S$ on which we improve $\hy_j$ to $\ty_j$, we have
\[\forall v_i\in V,\  \Bnew(i,a)>\B(i,a) \iff \exists v_j\in S, \y(a)\hy_j(a)=-1 \wedge \lW_{ij}>0~.\]

\end{proposition}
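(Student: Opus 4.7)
\textbf{Proof plan for Proposition~\ref{prop:positive_guarantee}.} The plan is to express the difference $\Bnew(i,a) - \B(i,a)$ in closed form via a direct algebraic expansion using the definitions, and then read off the equivalence from sign considerations using the non-negativity of $\lW$.

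First, I would recall that $\B(i,a) = \y(a)\sum_{j=1}^n \lW_{ij}\hy_j(a)$ and $\Bnew(i,a) = \y(a)\sum_{j=1}^n \lW_{ij}\ty_j(a)$, where by \cref{eq:intervention} we have $\ty_j(a)=\hy_j(a)$ for $v_j\notin S$ and $\ty_j(a) = (1-\imp)\hy_j(a) + \imp\,\y(a)$ for $v_j\in S$. Subtracting, only the indices $v_j\in S$ survive, and $\ty_j(a)-\hy_j(a) = \imp(\y(a)-\hy_j(a))$. Multiplying by $\y(a)$ and using $\y(a)^2=1$ gives
\[
\Bnew(i,a) - \B(i,a) \;=\; \imp \sum_{v_j\in S} \lW_{ij}\bigl(1 - \y(a)\hy_j(a)\bigr).
\]

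Next, since $\y(a), \hy_j(a)\in\{-1,+1\}$, the factor $1 - \y(a)\hy_j(a)$ is either $0$ (when $\hy_j$ is correct on $a$) or $2$ (when $\hy_j$ is faulty on $a$, i.e.\ $\y(a)\hy_j(a)=-1$). Therefore
\[
\Bnew(i,a) - \B(i,a) \;=\; 2\imp \sum_{\substack{v_j\in S\\ \y(a)\hy_j(a)=-1}} \lW_{ij}.
\]

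Finally, I would conclude using $\imp\in(0,1]$ and the hypothesis that $\lW$ is non-negative: every summand is $\geq 0$, so the sum is strictly positive if and only if at least one term is strictly positive, which is exactly the condition that there exists $v_j\in S$ with $\y(a)\hy_j(a)=-1$ and $\lW_{ij}>0$. This gives the claimed iff. No step is really an obstacle here; the only subtlety is being careful that non-negativity of $\lW$ prevents cancellations, which is what makes ``there exists a positively-weighted faulty neighbor in $S$'' equivalent to strict improvement.
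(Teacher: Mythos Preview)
Your argument is correct and follows essentially the same approach as the paper: both proofs reduce to observing that only terms with $v_j\in S$ contribute to $\Bnew(i,a)-\B(i,a)$, that each such contribution is non-negative because $\lW$ is non-negative and $\imp>0$, and that a contribution is strictly positive precisely when $\y(a)\hy_j(a)=-1$ and $\lW_{ij}>0$. Your version is in fact a bit more explicit, writing out the closed-form $\Bnew(i,a)-\B(i,a)=2\imp\sum_{v_j\in S,\,\y(a)\hy_j(a)=-1}\lW_{ij}$, whereas the paper argues the same point more verbally.
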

\begin{proof}
By looking at \cref{eq:intervention}, it is evident that if $\hy_j(a)=\y(a)$ then $\ty_j(a)=\hy_j(a)$. Thus, improving $v_j$ will only improve prediction on $a\in \Omega$ iff $\hy_j(a)\neq \y(a)$ or equivalently $\hy_j(a) \y(a)=-1$. Note that for any other $v_i\in V$, $\ty_j(a)$ appears in $\Bnew(i,a)$ with coefficient $\lW_{ij}$. 
Since the matrix is  non-negative  we either have  $\Bnew(i,a)=\B(i,a)$  or
$\Bnew(i,a)>\B(i,a)$. 
Therefore, we will have 
  $\Bnew(i,a)>\B(i,a)$ iff  $\hy_j(a) \y(a)=-1$ and  $\lW_{ij}\neq 0$.
\end{proof}


\subsection{Missing proofs from \cref{sec:algo:agg}: Analysis of aggregate optimization}\label{app:algo:agg}

\begin{proof}[\textbf{Proof of \cref{lem:agg1}}]

Let $\Omega^+=\{a\in \Omega\mid \y(a)=+1\}$ and  $\Omega^-=\{a\in \Omega\mid \y(a)=-1\}$.
We have that:

\begin{align}\label{eq:11}
       \Gagg(S)=& 
\Ex_{a\sim\Omega}\left[ \sum_{i=1}^n \Bnew(i,a)- \B(i,a)
\right]\nonumber\\
=& \Ex_{a\sim\Omega^+}\left[ \sum_{i=1}^n \znew^*(i,a)-\z^*(i,a)
\right]\mathbb{P}(a\in \Omega^+)+\Ex_{a\sim\Omega^-}\left[ \sum_{i=1}^n \z^*(i,a)-\znew^*(i,a)
\right]\mathbb{P}(a\in \Omega^-)
\nonumber \\
=&\sum_{i=1}^n \left(\Ex_{a\sim\Omega^+}\left[  \znew^*(i,a)-\z^*(i,a)
\right]\mathbb{P}(a\in \Omega^+)+\Ex_{a\sim\Omega^-}\left[  \z^*(i,a)-\znew^*(i,a)
\right]\mathbb{P}(a\in \Omega^-)\right)~.
\end{align}

Note that:
\[
\z^*(i,a)-\znew^*(i,a)= \sum_{j=1}^n \lW_{ij}(\hy_j(a)-\ty_j(a))= \sum_{j\in S} \lW_{ij}(\hy_j(a)-\ty_j(a))= \imp \sum_{j\in S}\lW_{ij}[\hy_j(a)-\y(a)]
\]

Thus, 

$$\z^*(i,a)-\znew^*(i,a)= 
\begin{cases}
\imp \sum_{j\in S}\lW_{ij}[\hy_j(a)+1]& \text{if }a\in \Omega^-\\
\imp \sum_{j\in S}\lW_{ij}[\hy_j(a)-1]& \text{if }a\in \Omega^+
\end{cases}
$$
 Using linearity of expectation and plugging in \cref{lem:exerr} we have:

 $$
 \Ex\left[\z^*(i,a)-\znew^*(i,a)\right]= 
\begin{cases}
\imp \sum_{j\in S}\lW_{ij}[2\err^{(0)}(v_i)-1+1]& \text{if }a\in \Omega^-\\
\imp \sum_{j\in S}\lW_{ij}[1-2\err^{(0)}(v_i) -1]& \text{if }a\in \Omega^+
\end{cases}
$$
Therefore, 

 $$
 \Ex_{a\sim \Omega^-}\left[\z^*(i,a)-\znew^*(i,a)\right]= 
2\imp\ \err^{(0)}(v_i) \sum_{j\in S}\lW_{ij}=
 \Ex_{a\sim \Omega^+}\left[\znew^*(i,a)- \z^*(i,a)\right]~.
$$

Plugging in the above in  \cref{eq:11} we obtain:
\[
    \Gagg(S)=2\imp \sum_{j\in S}\sum_{i=1}^n \lW_{ij} \err^{(0)}(v_i)[\mathbb{P}(a\in \Omega^+)+\mathbb{P}(a\in \Omega^-)]=2\imp \sum_{j\in S}\sum_{i=1}^n \lW_{ij} \err^{(0)}(v_i)~.
\]

This means by picking $k$ vertices  with highest values of ${\rm Inf}(j)=\sum_{i=1}^n \lW_{ij} \err^{(0)}(v_i)$ we will obtain the optimal solution for \cref{problem1}.
In order to find these values, we first need to find all the values of ${\rm Inf}(j)$ for all $j\in V$. Which takes $\Theta(n^2)$ number of steps. Then we have to find top $k$ elements among these values, which will take $\Theta(kn)$.
\end{proof}

\begin{proof}[\textbf{Proof of \cref{thm:agg2}}]
    Let $\widehat{\rm Inf}(j)=\sum_{i=1}^n\hW_{i,j}\err^{(0)}(v_j)$. For all $j$, we have that 
    \[
 \left|
    {\rm Inf}(j)- \widehat{\rm Inf}(j)
    \right|= \sum_{i=1}^n  \left|
   \lW_{ij}- \hW_{ij}
    \right|\err^{(0)}(v_j)\leq
     \sum_{i=1}^n  \ \left|
   \lW_{ij}- \hW_{ij}
    \right|
    \]
    Note that the right-hand side is the $\ell_1$ norm of the $j$th column of $   \lW- \hW$, let's denote the $j$th column of these matrix respectively by $\lW_{\cdot j}$ and $\hW_{\cdot j}$. Since the $\ell_1$  norm of a matrix is defined the be the maximum over $\ell_1$ norm of all of its columns we have that
    \[\forall j \ 
     \left|
    {\rm Inf}(j)- \widehat{\rm Inf}(j)
    \right|\leq 
    \left|
   \lW_{\cdot j}- \hW_{\cdot j}
    \right|_1\leq \epsilon ~.\]

In the previous theorem we showed that 
\[
\Gagg(S)= 2\imp \sum_{j\in S} {\rm Inf}(j)~.
\]

Since size of $S$ is $k$ and $\imp\leq 1$ the total error is bounded by $2k \epsilon $~.
\end{proof}

\subsection{Missing proof from \cref{sec:algo:egal}: Analysis of egalitarian optimization}

\subsubsection{\textbf{Hardness Results}}

\begin{theorem}\label{app:thm:hardness}
    There is a polynomial time reduction from the set cover problem to \cref{problem1}.
\end{theorem}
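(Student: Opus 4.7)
The plan is to encode the weighted maximum $k$-coverage problem --- equivalent in hardness to set cover --- as an instance of \cref{problem2}. The starting observation is that, by \cref{prop:positive_guarantee}, when $\lW$ is non-negative the egalitarian gain admits the decomposition
\[
\Gegal(S)=\sum_{a\in\Omega}\mu_a\,\bigl|\{i:\B(i,a)<0\}\bigr|\cdot\mathbf{1}\Bigl(a\in\bigcup_{v_j\in S}\Omega_{v_j}\Bigr),
\]
with $\Omega_{v_j}=\{a:\y(a)\hy_j(a)=-1\}$. This is precisely the weighted max $k$-coverage objective on the family $\{\Omega_{v_j}\}$ with weights $C_a=\mu_a|\{i:\B(i,a)<0\}|$, so it suffices to build an instance in which an arbitrary set-cover family is realized as the $\Omega_{v_j}$'s while the weights $C_a$ are uniform and each element is covered by exactly one agent whose innate prediction is faulty there.

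Given a set-cover instance $(U=\{u_1,\dots,u_N\},\mathcal{F}=\{F_1,\dots,F_m\},k)$, the construction I would use takes $\Omega=\{a_1,\dots,a_N\}$ with the uniform measure and $\y(a_u)=+1$, and agent set $V=\{v_1,\dots,v_m\}\cup\{o_1,\dots,o_N\}$ (so $n=m+N$ and $|\Omega|$ is polynomial in $n$). Set $\hy_{v_j}(a_u)=-1$ iff $u\in F_j$, and let observer $o_u$'s innate prediction disagree with $\y$ on $a_u$ and agree with $\y$ everywhere else. For the influence matrix I would take $\lW_{o_u,o_u}=K$ for a large constant $K>m$, $\lW_{o_u,v_j}=1$ whenever $u\in F_j$, and every remaining entry equal to $0$; in particular each selectable $v_j$ observes nobody, so $\lW_{v_j,\cdot}=0$. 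Both $\lW$ and the deterministic distribution $\pi$ are clearly computable in polynomial time.

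With this choice a direct evaluation of $\z^*$ gives $\B(v_j,a)=0$ (so selectable agents never contribute), $\B(o_u,a_u)<0$, and $\B(o_u,a_{u'})>0$ for every $u'\neq u$, since the self-term $K\hy_{o_u}(a_{u'})=+K$ dominates the at-most-$m$ contributions $\pm 1$ coming from neighbouring $v_j$'s. Combining this with \cref{prop:positive_guarantee}, a pair $(o_u,a_u)$ contributes to $\Gegal(S)$ exactly when some $v_j\in S$ satisfies $u\in F_j$ --- the conditions $\lW_{o_u,v_j}>0$ and $\hy_{v_j}(a_u)\neq\y(a_u)$ both translate to $u\in F_j$. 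Hence $\Gegal(S)=\tfrac{1}{N}\bigl|\bigcup_{v_j\in S}F_j\bigr|$, so computing $\OPT{egal}$ solves maximum $k$-coverage on $\mathcal{F}$, and the reduction follows.

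The single delicate point --- and the main obstacle to get right --- is preventing \emph{unintended} pairs $(i,a)$ from contributing to $\Gegal$, since any stray term would destroy the equivalence with set coverage. The two possible leakage sources are observers becoming faulty on ``the wrong'' objects and selectable agents entering the count. The large anchor $K>m$ eliminates the first by preserving the sign of $\z^*(o_u,a_{u'})$ against any combination of the $\pm 1$ perturbations from neighbours, and the choice $\lW_{v_j,\cdot}=0$ eliminates the second by forcing $\B(v_j,a)=0$, which violates the strict inequality $\B(i,a)<0$ in the definition of $\Gegal$.
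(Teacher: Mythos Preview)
Your construction is correct and in fact does the work that the paper's own argument leaves implicit. The paper proceeds by rewriting $\Gegal(S)$ as a weighted-coverage objective over the universe $U=\{(v_i,a):\B(i,a)<0\}$ with covering sets indexed by $V$, and then declares this ``the reduction.'' Read literally, that exhibits every instance of \cref{problem2} as an instance of weighted max $k$-coverage, which is the wrong direction for hardness; what is needed---and what you supply---is an explicit encoding of an \emph{arbitrary} coverage instance into \cref{problem2} via a concrete choice of $\Omega$, $\lW$, and $\hy$. So your gadget-based route is the more complete argument, and the anchor weight $K>m$ together with the zero rows for the $v_j$'s cleanly isolates exactly the pairs $(o_u,a_u)$ as the faulty ones.

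One point to patch: the observer agents $o_u$ are themselves selectable, and picking $o_u$ covers $(o_u,a_u)$ since $\lW_{o_u,o_u}=K>0$ and $\hy_{o_u}(a_u)\neq\y(a_u)$. Hence the identity $\Gegal(S)=\tfrac1N\bigl|\bigcup_{v_j\in S}F_j\bigr|$ holds only when $S\subseteq\{v_1,\dots,v_m\}$; for general $S$ your instance realises max $k$-coverage over $\mathcal F\cup\{\{u\}:u\in U\}$. This is harmless---each singleton $\{u\}$ is dominated by any $F_j\ni u$, so under the standard assumption that every element lies in some set the optimum is unchanged---but it deserves a sentence. A related looseness: your opening display drops the condition $\lW_{ij}>0$ from \cref{prop:positive_guarantee}; in your construction that condition happens to coincide with $a\in\Omega_{v_j}$ for every relevant pair, but the formula is not valid in general.
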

\begin{proof}
Consider an arbitrary $S\subseteq V$.
    We use \cref{prop:positive_guarantee}
to simplify $\Gegal(S)$. 
For any $v_i\in V$, we define:
$\Omega_{v_j}\triangleq \{a\in \Omega \mid y(a)\hy_j(a)=-1\}$. 
For any $a\in \Omega$, we denote its probability by $\mu_a$.
For a given $S$ we have: 

\begin{align}\label{eq:12}
\Gegal(S) &= \Ex_{a\sim\Omega} \left[\sum_{i=1}^n \mathbf{1}(\B(i,a)<0 \wedge \B(i,a)< \Bnew(i,a))\right]\nonumber \\
&=\sum_{a\in \Omega}\sum_{i=1}^n \mu_a \mathbf{1}(\B(i,a)<0)\cdot\mathbf{1}\left(\bigvee_{v_j\in S}\left( 
a\in \Omega_{v_j} \wedge \lW_{ij}>0
\right)\right)\nonumber\\
&= \sum_{(a,v_i)\in \Omega\times V}\mu_a\mathbf{1}(\B(i,a)<0)\cdot\mathbf{1}\left(\bigvee_{v_j\in S}\left( a\in \Omega_{v_j} \wedge \lW_{ij}>0 \right)\right)
\end{align}

Using the above simplification, we now construct the following instance of the weighted set cover problem:

Consider a bipartite graph where one part is  $U=\{(v_i,a)\in V\times \Omega \mid  \B(i,a)<0\}$ and the other part is $S=V$. 
There is an edge between any $v_j\in V$ to a pair $(v_i,a)\in U$ iff $a\in \Omega_{v_j} \wedge \lW_{ij}>0$ 
and the weight of each pair $(a,v_i)$ is $\mu_a$.
Under the new definition \cref{eq:12} is equivalent to 

\[
\Gegal(S)= \sum_{(a,v_i)\in U} \mu_a \cdot \mathbf{1}\left(\bigvee_{v_j\in S}\left( a\in \Omega_{v_j} \wedge \lW_{ij}>0 \right)\right)
\]

Note that 
\[\mathbf{1}\left(\bigvee_{v_j\in S}\left( a\in \Omega_{v_j} \wedge \lW_{ij}>0 \right)\right)=1 \iff \text{there is an edge between $v_j$ and $(a,v_i)$ and $v_j\in S$}~.\]
 The reduction is polynomial in sizes of  $\Omega$ and $V$. Since we assume that $\Omega$ has polynomial size, it is a polynomial time reduction.  This completes the proof. 
\end{proof}

\begin{proof}[\textbf{Proof of \cref{thm:hardness}}]
The proof follows from \cref{app:thm:hardness} and the fact that set cover is NP-hard. 
\end{proof}

\begin{theorem}\label{app:thm:hard:err}Consider \cref{problem2} and assume $k=1$.
    There exist two networks with the same number of agents, same $\lW$ and same error rates $\{\err(v_j)\}_{j=1}^n$. In these network, only the joint probability distributions $\pi_1$ and $\pi_2$ are different. 
    There are subsets $V_1,V_2\subseteq V$ such that $V_1\cap V_2=\emptyset$. 
    In the first network we have that for any  $u\in V_1, \Gegal(u)=\Theta(n)$ and for any $u\notin V_1$ we have $\Gegal(u)=\Theta(1)$. In the second network for any  $u\in V_2$ will have $\Gegal(u)=\Theta(n)$ and any $u\notin V_2$ will satisfy $\Gegal(u)=\Theta(1)$.

\end{theorem}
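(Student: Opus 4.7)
The plan is to exhibit two explicit networks $(\lW, \{\err(v_j)\}_{j=1}^n, \pi_1)$ and $(\lW, \{\err(v_j)\}_{j=1}^n, \pi_2)$ that agree on the influence matrix and on every per-vertex marginal but differ in the joint distribution. Since any algorithm using only $\lW$ and $\{\err(v_j)\}_{j=1}^n$ receives identical inputs on both instances, it must output the same vertex $u^\star$ on both networks. If the two optimal single-vertex intervention sets $V_1$ and $V_2$ are disjoint and satisfy $\Gegal=\Theta(n)$ inside each, with $\Gegal=\Theta(1)$ outside, then $u^\star$ lies outside at least one $V_i$, and in that network the incurred error is $\Theta(n)-\Theta(1)=\Omega(n)$.

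For the construction I would take $\Omega=\{a_1,a_2\}$ with $\y\equiv +1$ and uniform measure, so that every deterministic $\err=1/2$ classifier is one of two ``halfspaces'' $h_1$ or $h_2$ (wrong only on $a_1$ or only on $a_2$). Setting $\lW=J/n$ reduces the event $\B(i,a)<0$ to the pure majority condition ``most vertices are wrong on $a$''. I would then write $V=V_1\sqcup V_2\sqcup W$ for a small constant-size tiebreaker block $W$, specify $\pi_1$ by assigning $h_1$ to $V_1\cup W$ and $h_2$ to $V_2$ (so the strict majority uses $h_1$ and $\B<0$ exactly on $a=a_1$), and symmetrically define $\pi_2$ by assigning $h_1$ to $V_2\cup W$ and $h_2$ to $V_1$. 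A direct calculation of $\Gegal(u)=\sum_i \Prob_a(\B(i,a)<0\wedge \hy_u(a)\neq\y(a))$, facilitated by \cref{lem:greedy} with $S=\emptyset$, yields $n/2=\Theta(n)$ for every majority-aligned vertex (its wrong-event coincides with the $\B<0$ event) and $0$ for every misaligned vertex (wrong-events of $h_1$ and $h_2$ are disjoint). Each vertex uses some $\err=1/2$ classifier in both networks, so per-vertex marginals are preserved and the two networks are indistinguishable from $(\lW,\{\err(v_j)\})$.

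The main obstacle is the strong ``$\Gegal(u)=\Theta(1)$ for every $u\notin V_1$'' side of the claim: a naive partition can leave the tiebreaker vertices $W$ aligned with the majority in both networks, and then an algorithm could safely pick a $W$-vertex and evade the lower bound. I would address this by charging the tiebreaker block to $V_1$ in the first network and to $V_2$ in the second — i.e., relabeling so that the vertices I call $V_1$ in the statement are exactly the $h_1$-users in $\pi_1$ and the vertices I call $V_2$ are the $h_1$-users in $\pi_2$, while keeping $V_1\cap V_2=\emptyset$ by letting each tiebreaker vertex flip its classifier between $\pi_1$ and $\pi_2$ (staying at $\err=1/2$). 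A secondary technical wrinkle is the degenerate tie $\sum_j\hy_j(a)=0$ that would appear if $|V_1|=|V_2|$: this is resolved by choosing $|W|$ odd so that the $W$-vertices break the vote strictly in the intended direction in each $\pi_i$. Once these bookkeeping choices are fixed, plugging the classifier assignments into the closed-form expression for $\Gegal$ gives the $\Theta(n)$ versus $\Theta(1)$ separation immediately, and the pigeonhole observation on the deterministic output $u^\star$ completes the $\Omega(n)$ lower bound.
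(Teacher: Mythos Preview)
Your overall strategy—build two instances with identical $(\lW,\{\err(v_j)\})$ but different $\pi$—is the right one, and your use of \cref{lem:greedy} with $S=\emptyset$ to compute $\Gegal(\{u\})$ is fine. However, the concrete construction with $\lW=J/n$ cannot meet the disjointness requirement $V_1\cap V_2=\emptyset$, and the fix you sketch does not close the gap.

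With $\lW=J/n$ the event $\B(i,a)<0$ is the same for every $i$, so $\Gegal(\{u\})=n\cdot\Prob_a(\text{strict majority wrong on }a\ \wedge\ \hy_u(a)\neq\y(a))$. Hence the set of $\Theta(n)$-gain vertices in a given $\pi$ is exactly $\{u:\exists a\in E,\ \hy_u(a)\neq\y(a)\}$, where $E$ is the (nonempty) set of $a$ on which a strict majority is wrong. For each $a\in E$ more than $n/2$ vertices are wrong on $a$, so the $\Theta(n)$-gain set always has size strictly greater than $n/2$. Two subsets of $V$ each of size $>n/2$ necessarily intersect; thus $V_1\cap V_2\neq\emptyset$ no matter how $\pi_1,\pi_2$ are chosen. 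Your proposed ``flip the tiebreaker block $W$ between $\pi_1$ and $\pi_2$'' runs directly into this: if $W$ switches from $h_1$ to $h_2$, the inequalities $|V_1\cup W|>|V_2|$ (needed in $\pi_1$) and $|V_2|>|V_1\cup W|$ (needed in $\pi_2$) are incompatible, and if instead the majority flips, the good set in $\pi_2$ becomes $V_1\cup W$ again.

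The paper avoids this pigeonhole obstruction by abandoning the all-ones $\lW$. It takes $V=\{u_1,u_2,u_3,u_4\}\cup\{v_1,\dots,v_{2n}\}$ with $\err(v_j)=0$ and $\err(u_i)=1/2$, and a block-structured $\lW$ in which $u_1,u_2$ influence only $v_1,\dots,v_n$ while $u_3,u_4$ influence only $v_{n+1},\dots,v_{2n}$. The joint law is then specified by making one pair perfectly anti-correlated (so their errors cancel and give $\Gegal=\Theta(1)$) and the other pair perfectly correlated (so their errors align and give $\Gegal=\Theta(n)$); swapping which pair is which yields $\pi_1$ versus $\pi_2$. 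The key point is that the candidate sets $V_1=\{u_3,u_4\}$ and $V_2=\{u_1,u_2\}$ are \emph{small} (constant size) and trivially disjoint, because the sparsity of $\lW$ decouples them. You will need some non-uniform $\lW$ (or at least vertices with $\err=0$ to serve as the ``payload'' without themselves being good choices) to get around the majority obstruction.
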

\begin{proof}
Let $V=\{u_1,u_2,u_3,u_4\}\cup\{v_1,v_2,v_3,\dots , v_{2n}\}$. We define $\lW$ to be the following matrix:

$\lW_{u_1,v_j}=\lW_{u_2,v_j}=1$ for all $j=1:n$ , and $\lW_{u_3,v_j}=\lW_{u_4,v_j}=1$ for all $j=n+1:2n$.

For each vertex in $V$ we also have a self loop of weight $1$, i.e., $\lW_{u_i u_i}=\lW_{v_jv_j}=1$ for all $i,j$.

The error rates of these agents are as follows:
$\err(v_j)=0$ for all $j=1:2n$ and $\err(u_i)=1/2$ for all $j=1:4$.

In the first network the error of $\hy_{u_1}(a)$ is negatively correlated with  $\hy_{u_2}(a)$ and $\hy_{u_3}$ is positively correlated with $\hy_{u_4}$  as: 
\[
\Prob\left(\hy_{u_1}(a)\neq \hy_{u_2}(a)\right)=1 \quad \& \ \Prob\left(\hy_{u_3}(a)= \hy_{u_4}(a)\right)=1 
\]
Both $\hy_{u_1}$ and $\hy_{u_2}$ are independent from $\hy_{u_3}$ and $\hy_{u_4}$.

Let $V_1=\{u_3,u_4\}$. We now show that $\Gegal(u_3)=\Gegal(u_4)=n$ and for any $u\notin V_1$, $\Gegal(u)\in\{0,1\}$.

From \cref{lem:greedy} we conclude that for any vertex $u$ we have:

\[
\Gegal(u)=\sum_{i; \lW_{iu}\neq 0}
\Prob_{a\sim \Omega}
\left(
\B(i,a)\leq 0  \wedge \y(a)\neq  \hy_u(a)
\right)
\]

Thus, it is immediate that for each $v_j$, we have $\Gegal(v_j)=0$. 

Consider $u_1$ 
\begin{align*}
    \Gegal(u_1)&=\sum_{i=1:n}
\Prob_{a\sim \Omega}
\left(
\B(i,a)\leq 0  \wedge \y(a)\neq  \hy_{u_1}(a)
\right)+\Prob_{a\sim \Omega}
\left(
\B(u_1,a)\leq 0  \wedge \y(a)\neq  \hy_{u_1}(a)
\right) 
\end{align*}
Since $u_1$ is connected to $v_1,\dots v_n$, the last summand is $0$ if $n>1$, and it is $1/2$ if $n=1$. In any case it is a constant. We now look at the first summand. 
\begin{align*}
first\  summand &= \sum_{i=1:n}
\Prob_{a\sim \Omega} \left(
(\hy_{u_1}(a)+\hy_{u_2}(a)+\hy_{i}(a) )\cdot \y(a)\leq 0 \wedge   \y(a)\neq  \hy_u(a)
\right)\\
&= \sum_{i=1:n}
\Prob_{a\sim \Omega} \left(
 \y(a)^2\leq 0 \wedge   \y(a)\neq  \hy_u(a)
\right)=0
\end{align*}
The last equation follows from the fact that always $\hy_{u_1}(a)\neq\hy_{u_2}(a)$, and $\hy_i(a)=\y(a)$.

Similarly we can show that $ \Gegal(u_2)=\{0,1/2\}$.

For $u_3$, let $c$ be a constant which is $c\in\{0,1/2\}$. We have:
\begin{align*}
    \Gegal(u_3)&=\sum_{i=n+1:2n}
\Prob_{a\sim \Omega}
\left(
\B(i,a)\leq 0  \wedge \y(a)\neq  \hy_{u_3}(a)
\right)+\Prob_{a\sim \Omega}
\left(
\B(u_3,a)\leq 0  \wedge \y(a)\neq  \hy_{u_3}(a)
\right) \\
&= \sum_{i=1:n}
\Prob_{a\sim \Omega} \left(
(\hy_{u_3}(a)+\hy_{u_4}(a)+\hy_{i}(a) )\cdot \y(a)\leq 0 \wedge   \y(a)\neq  \hy_u(a)
\right)+c\\
&= \sum_{i=1:n}
\Prob_{a\sim \Omega} \left(
 (\y(a)-2\y(a))\cdot\y(a) \leq 0 
\right)\err(u_3)+c\\
&=n/2+c~.
\end{align*}
Similarly we have $ \Gegal(u_4)\in \{n/2,(n+1)/2\}$.

Therefore, both $u_3$ and $u_4$ can be the optimal choice for this network. And any other choice will have an error of  magnitude $\Theta(n)$.

In the second network, we make the following change: 

\[
\Prob\left(\hy_{u_1}(a)= \hy_{u_2}(a)\right)=1 \quad \& \ \Prob\left(\hy_{u_3}(a)\neq  \hy_{u_4}(a)\right)=1 
\]
We still let both $\hy_{u_1}$ and $\hy_{u_2}$ are independent from $\hy_{u_3}$ and $\hy_{u_4}$.

Using a similar analysis we can show that 

\[
\Gegal(u_1)=\Gegal(u_2)=n/2+1 \ \& \ \forall u\in V, \ u\neq u_1,u_2\implies \Gegal(u)\in \{0,1/2\}~.
\]

\end{proof}

\subsubsection{\textbf{Monotonicity and submodularity of $\Gegal$}}

\begin{lemma}\label{app:lem:monotone}
Assume $S'\subseteq S\subseteq V$, we have: $\Gegal(S')\leq \Gegal(S)$~.
\end{lemma}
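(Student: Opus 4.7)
The plan is to reduce the claim to a pointwise monotonicity statement at the level of the integrand inside the expectation, and then integrate. The key ingredient is already available: Proposition~\ref{prop:positive_guarantee} gives, under the standing assumption that $\lW$ is non-negative, the characterization
\[
\B(i,a)<\Bnew(i,a) \iff \exists v_j\in S\ \text{with}\ \y(a)\hy_j(a)=-1\ \text{and}\ \lW_{ij}>0.
\]
This is what makes the whole argument go through, so the first step is simply to invoke it.

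Next, I would fix an arbitrary pair $(i,a)\in V\times \Omega$ and compare the indicator
\[
\mathbf{1}\bigl(\B(i,a)<0\ \wedge\ \B(i,a)<\Bnew(i,a)\bigr)
\]
evaluated at $S'$ versus at $S$. The first conjunct $\B(i,a)<0$ does not depend on the chosen subset at all, so it contributes identically to both sides. For the second conjunct, the characterization above turns it into the existence of a witness $v_j$ inside the subset. Since $S'\subseteq S$, any witness for $S'$ is automatically a witness for $S$, so the indicator is monotone non-decreasing in the set argument. Hence
\[
\mathbf{1}\bigl(\B(i,a){<}0\wedge \B(i,a){<}\B^{S'}_{\rm new}(i,a)\bigr)\ \leq\ \mathbf{1}\bigl(\B(i,a){<}0\wedge \B(i,a){<}\B^{S}_{\rm new}(i,a)\bigr),
\]
where I write $\B^{S}_{\rm new}$ to make the dependence on the modified set explicit.

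Finally, I would sum the pointwise inequality over $i\in[n]$ and take expectation over $a\sim\Omega$; linearity and monotonicity of expectation immediately yield $\Gegal(S')\leq \Gegal(S)$. There is no real obstacle here: the whole argument hinges on the non-negativity of $\lW$, which is explicitly assumed in the section, and on Proposition~\ref{prop:positive_guarantee}, which is already proved. If anything, the only subtlety to flag is that the monotonicity would fail for signed $\lW$, because then adding a vertex to $S$ could flip the sign of $\B(i,a)-\Bnew(i,a)$; this is consistent with the paper's explicit acknowledgement, in the Limitations section, that its results do not extend to signed graphs.
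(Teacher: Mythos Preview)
Your proof is correct and follows essentially the same approach as the paper: both invoke Proposition~\ref{prop:positive_guarantee} to rewrite the improvement condition as the existence of a witness $v_j\in S$ with $\y(a)\hy_j(a)=-1$ and $\lW_{ij}>0$, and then exploit that this witness condition is monotone in $S$. The only cosmetic difference is that the paper explicitly partitions the disjunction over $S$ into two disjoint events (one over $S'$ and a residual over $S\setminus S'$) and uses additivity of probability, whereas you argue directly via pointwise monotonicity of the indicator before integrating.
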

\begin{proof}
From \cref{prop:positive_guarantee} that for any arbitrary $S\subseteq V$ we have:

\[
\Gegal(S)=\sum_{i=1}^n\Prob
\left(
\y(a) z^*(i,a)\leq 0 \wedge \bigvee_{\substack{v_j\in S\\ \lW_{ij}\neq 0}} (  \y(a)\neq \hy_j(a))
\right)
\]

For $S'\subseteq S$, we split the event $\bigvee_{\substack{v_j\in S\\ \lW_{ij}\neq 0}} (  \y(a)\neq \hy_j(a))$ to the two following non-intersecting events:

\[
\bigvee_{\substack{v_j\in S\\ \lW_{ij}\neq 0}} (  \y(a)\neq \hy_j(a))= \underbrace{\left(
\bigvee_{\substack{v_j\in S'\\ \lW_{ij}\neq 0}} (  \y(a)\neq \hy_j(a)) 
\right)}_{E_{i1}}
\vee 
\underbrace{\left(
\bigvee_{\substack{v_j\in S\setminus S'\\ \lW_{ij}\neq 0}} (  \y(a)\neq \hy_j(a))\wedge  \bigwedge_{\substack{v_j\in S'\\ \lW_{ij}\neq 0}} (  \y(a)= \hy_j(a))
\right)}_{E_{i2}}
\]

Since $E_1$ and $E_2$ are non-intersecting we have: 
\[
\Gegal(S)= \sum_{i=1}^n \mathbb{P}(\y(a)\z^*(i,a)<0\wedge E_{i1})+ \sum_{i=1}^n \mathbb{P}(\y(a)\z^*(i,a)<0\wedge E_{i2})~.
\]

Note that $\Gegal(S')=\sum_{i=1}^n \mathbb{P}(\y(a)\z^*(i,a)<0\wedge E_{i1})$. Therefore, we conclude the premise. 
\end{proof}

\begin{lemma}\label{app:lem:submodular}
    Consider arbitrary  $S\subseteq V$ and  $u,v\in {V}\setminus S$. We have: 
\[\Gegal(S\cup\{u,v\})+\Gegal(S)\leq \Gegal(S\cup\{u\})+\Gegal(S\cup\{v\})~.\]
\end{lemma}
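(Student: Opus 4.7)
The plan is to reduce submodularity of $\Gegal$ to submodularity of a sum of weighted coverage functions, following the same initial step as the monotonicity proof. Using \cref{prop:positive_guarantee} exactly as at the start of the proof of \cref{app:lem:monotone}, I would rewrite, for any $T\subseteq V$,
\[
\Gegal(T) \;=\; \sum_{i=1}^n g_i(T),\qquad g_i(T) \;:=\; \Prob_{a\sim\Omega}\!\left(\B(i,a) < 0 \;\wedge\; \bigvee_{\substack{v_j \in T \\ \lW_{ij} > 0}}\!\bigl(\y(a) \neq \hy_j(a)\bigr)\right).
\]
Since submodularity is preserved under non-negative sums, it suffices to show each $g_i$ is submodular in $T$.

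The core step is a marginal-gain calculation. For any $T$ and any $w \notin T$, I would split the disjunction over $T\cup\{w\}$ into the part already covered by $T$ and the part in which $w$ is the ``first'' contributor, which is the very same event-decomposition trick used in the proof of \cref{app:lem:monotone}. A short calculation then gives
\[
g_i(T \cup \{w\}) - g_i(T) \;=\; \mathbf{1}[\lW_{iw} > 0]\cdot\Prob_{a\sim\Omega}\!\left(\B(i,a) < 0 \;\wedge\; \y(a) \neq \hy_w(a) \;\wedge\; \bigwedge_{\substack{v_j \in T \\ \lW_{ij} > 0}}\!\bigl(\y(a) = \hy_j(a)\bigr)\right).
\]
The key structural observation is that enlarging $T$ only makes the conjunction $\bigwedge_{v_j\in T,\,\lW_{ij}>0}(\y(a)=\hy_j(a))$ more restrictive, so this probability is non-increasing in $T$.

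To finish, I would apply this identity twice with $w=u$: once at $T=S$ and once at $T=S\cup\{v\}$. Since $S\subseteq S\cup\{v\}$, the event defining $g_i(S\cup\{u,v\})-g_i(S\cup\{v\})$ is a sub-event of the one defining $g_i(S\cup\{u\})-g_i(S)$, while the other two conjuncts ($\B(i,a)<0$ and $\y(a)\neq\hy_u(a)$) and the indicator $\mathbf{1}[\lW_{iu}>0]$ are identical. Monotonicity of probability then yields, for every $i$,
\[
g_i\bigl(S\cup\{u,v\}\bigr) - g_i\bigl(S\cup\{v\}\bigr) \;\leq\; g_i\bigl(S\cup\{u\}\bigr) - g_i(S),
\]
and summing over $i$ and rearranging gives exactly the inequality of \cref{app:lem:submodular}.

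The main obstacle is essentially bookkeeping: correctly tracking the three conjoined events (bad prediction at $v_i$, candidate $w$ wrong on $a$, and all currently-selected influential neighbors of $v_i$ correct on $a$), and handling the degenerate case $\lW_{iu}=0$ where the marginal gain is zero on both sides and the inequality is trivial. Conceptually, no new ingredients beyond \cref{prop:positive_guarantee} and the same disjunction split already used in the monotonicity proof are required; $\Gegal$ is, up to the weighting by $\mathbf{1}[\B(i,a)<0]$, a sum of coverage functions, and submodularity then comes for free.
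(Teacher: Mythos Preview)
Your argument is correct. The marginal-gain identity you derive is exactly the computation carried out in the proof of \cref{lem:greedy}, and the observation that the conjunction $\bigwedge_{v_j\in T,\,\lW_{ij}>0}(\y(a)=\hy_j(a))$ shrinks as $T$ grows is all that is needed to conclude diminishing returns for each $g_i$.

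The paper takes a somewhat different route: rather than computing a marginal gain and comparing it at $S$ versus $S\cup\{v\}$, it partitions the sample space (for each $i$) into the eight events $E_{XYZ}(i)$ indexed by whether the disjunction over $S$ holds, whether $u$ is wrong with $\lW_{iu}\neq 0$, and whether $v$ is wrong with $\lW_{iv}\neq 0$. It then writes each of the four terms $\Gegal(S)$, $\Gegal(S\cup\{u\})$, $\Gegal(S\cup\{v\})$, $\Gegal(S\cup\{u,v\})$ as a sum over the appropriate subset of these eight atoms and verifies the inequality by counting how many times each atom appears on either side. Your approach is more economical and more directly reuses the marginal-gain expression already established for \cref{lem:greedy}; the paper's approach is a brute-force inclusion--exclusion count that treats $u$ and $v$ symmetrically from the outset. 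Both arguments rest on the same underlying fact (coverage-type structure of $\Gegal$), but yours makes that structure explicit while the paper's hides it behind the eight-way case split.
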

\begin{proof}
Like previous lemma, we split the events of RHS and LHS to non-intersecting smaller events. 

Consider the following events:
\begin{align*}
  E_{FFF}(i)\equiv 
\left(\bigvee_{\substack{v_j\in S\\\lW_{ij}\neq 0  }}\hy_i(a)\neq \y(a) \right)\wedge \left(\hy_u(a)\neq \hy(a) \wedge \lW_{iu}\neq 0 \right)\wedge \left(\hy_v(a)\neq \y(a) \wedge \lW_{iv}\neq 0 \right)\\
  E_{TFF}(i)\equiv 
\left(\neg \bigvee_{\substack{v_j\in S\\\lW_{ij}\neq 0  }}\hy_i(a)\neq \y(a) \right)\wedge (\hy_u(a)\neq \hy(a) \wedge \lW_{iu}\neq 0 )\wedge \left(\hy_v(a)\neq \y(a) \wedge \lW_{iv}\neq 0 \right)\\
  E_{TTF}(i)\equiv 
\left(\neg \bigvee_{\substack{v_j\in S\\\lW_{ij}\neq 0  }}\hy_i(a)\neq \y(a) \right)\wedge \neg \left(\hy_u(a)\neq \hy(a) \wedge \lW_{iu}\neq 0 \right)\wedge \left(\hy_v(a)\neq \y(a) \wedge \lW_{iv}\neq 0 \right)\\
  E_{TFT}(i)\equiv 
\left(\neg \bigvee_{\substack{v_j\in S\\\lW_{ij}\neq 0  }}\hy_i(a)\neq \y(a) \right)\wedge  \left(\hy_u(a)\neq \hy(a) \wedge \lW_{iu}\neq 0 \right)\wedge \neg \left(\hy_v(a)\neq \y(a) \wedge \lW_{iv}\neq 0 \right)
\end{align*}

and similar definitions for $E_{TTT}(i)$, $E_{FTT}(i)$, $E_{FTF}(i)$ and $E_{FFT}(i)$. 

\[
\Gegal(S\cup \{u,v\})=\sum_{i=1}^n \sum_{(X,Y,Z)\in \{T,F\}^3 \setminus \{(F,F,F)\}}\Prob\left(
\y(a)\z^*(i,a)\leq 0 \wedge E_{X,Y,Z}(i)
\right),
\]

\[
\Gegal(S )= \sum_{i=1}^n \sum_{(Y,Z)\in \{T,F\}^2 }\Prob\left(
\y(a)\z^*(i,a)\leq 0 \wedge E_{T,Y,Z}(i)
\right),
\]

\[
\Gegal(S \cup \{u\})= 
\sum_{i=1}^n
\sum_{(X,Y,Z)\in \{T,F\}^3\setminus \{(F,F,F),(F,F,T)\} }\Prob\left(
\y(a)\z^*(i,a)\leq 0 \wedge E_{X,Y,Z}(i)
\right)
\]
and finally 

\[
\Gegal(S \cup \{v\})= 
\sum_{i=1}^n
\sum_{(X,Y,Z)\in \{T,F\}^3\setminus \{(F,F,F),(F,T,F)\} }\Prob\left(
\y(a)\z^*(i,a)\leq 0 \wedge E_{X,Y,Z}(i)
\right)
\]

By counting  the number of appearances of each term on the RHS and LHS we may conclude the premise. 
\end{proof}

\subsection{Pseudocode of the greedy algorithms}\label{app:sec:psudocodes}
In this section we present our algorithms for egalitarian improvement. 

\paragraph{Overview of algorithms} The first algorithm  \algo\ has access to $\pi$ and  finds the greedy choice accurately. 

The second algorithm \appalgo\ 
receives 
parameters $mode$, $\vec{\boldsymbol{\err}}$ and $\lW$ as input parameters. If  we assume pairwise independence of classifiers, $mode=\rm ind$ and $\vec{\boldsymbol{\err}}$ contains agents' error rates.  
If we assume group dependency $mode=\rm gr$ and $\vec{\boldsymbol{\err}}$ contains the agent's individual error rates as well as $\err(\R)$ and $\err(\Bl)$. Depending on the mode of the algorithm, \appalgo calls subsequent procedures $\mathsf{EstGain}^{\rm ind}$ and  $\mathsf{EstGain}^{\rm gr}$ for estimating the greedy choice.

The pseudocodes are as follows and analysis is presented in subsequent subsections:

\begin{algorithm}
\begin{algorithmic}\caption{\algo$\left( \pi,\lW\right)$\label{alg:2c}}
\STATE $S=\emptyset$
\FOR{$i=1:k$}
\STATE $maxval=0$
\FOR{$j=1:n$}
\STATE $\Gain(S,v_j)=0$
\FOR{$a\in \Omega$}
\FOR{$\ell=1:n$}
\STATE $E=\B(i,a)\leq 0  \wedge \left(\bigwedge_{
\substack{v_j\in S
\\ 
\lW_{ji}\neq  0 
} } \y(a){=} \hy_j(a)\right)
\wedge \y(a){\neq} \hy_u(a)$
\IF{$E\equiv T $ and $\lW_{\ell j}\neq 0$ }
\STATE $\Gain(S,v_j)=\Gain(S,v_j)+\pi(E)$
\ENDIF
\ENDFOR
\ENDFOR
\STATE \textbf{if} ${\Gain}(S,v_j)\geq maxval$, 
\STATE \quad $g=v_j$
\ENDFOR 
\STATE $S=S\cup\{ g\}$
\ENDFOR 
\end{algorithmic}
\end{algorithm}

\begin{algorithm}
\caption{\appalgo$\left( mode, \vec{\boldsymbol{\err}},\lW\right)$\label{alg:2c}}
\begin{algorithmic}
\STATE $S=\emptyset$
\FOR{$i=1:k$}
\STATE $maxval=0$
\FOR{$j=1:n$}
\STATE $\widehat{\Gain}^{\rm indv}(S,v_j)=\mathsf{EstGain}^{\rm ind}(S,v_j,\{\err(v_i)\}_{i=1}^n, \lW)$ (\cref{proc:ind})
\STATE $\widehat{\Gain}(S,v_j)=\widehat{\Gain}^{\rm indv}(S,v_j)$
\IF{mode = {\rm gr}}
\STATE $\widehat{\Gain}^{\rm gr}(S,v_j)=\mathsf{EstGain}^{\rm gr}(S,v_j,\{\err(v_i)\}_{i=1}^n,\err(\R),\err(\Bl), \lW)$ (\cref{proc:gr})
\STATE $\widehat{\Gain}(S,v_j)=\rho \widehat{\Gain}^{\rm gr}(S,v_j)+(1- \rho) \widehat{\Gain}^{\rm indv}(S,v_j)$
\ENDIF
\STATE \textbf{if} $\widehat{\Gain}(S,v_j)\geq maxval$, 
\STATE \quad $g=v_j$
\ENDFOR 
\STATE $S=S\cup\{ g\}$
\ENDFOR 
\end{algorithmic}
\end{algorithm}
\begin{algorithm}\caption{
$\mathsf{EstGain}^{\rm ind}(S,u,\{\err(v_j)\}_{j=1}^n, \lW)$
}\label{proc:ind}
\begin{algorithmic}

 \STATE $\widehat{\Gain}(S,u)=0$
\FOR{$\ell=1:n$}
\STATE $\cor_\ell(S)=1$
\FOR{$v_m\in S$}
\STATE \textbf{if} $\lW_{\ell m}\neq 0$, \STATE \quad $\cor_\ell(S)=\cor_\ell(S)\times(1-\err(v_m))$
\ENDFOR
\STATE \textbf{if}\  $\val_\ell(S,u)<0$ and $\lW_{\ell u}\neq 0$
\STATE $\widehat{\Gain}(S,u)\ {+}{=}\  \err(u)\cdot \cor_\ell(S)$
\ENDFOR    
\end{algorithmic}
\end{algorithm}

\begin{algorithm}\caption{
$\mathsf{EstGain}^{\rm gr}(S,u,\{\err(v_j)\}_{j=1}^n ,\err(\R),\err(\Bl),\lW)$
}\label{proc:gr}
\begin{algorithmic}

 \STATE $\widehat{\Gain}(S,u)=0$
 \STATE \textbf{boolean} $Case1=T$
 \STATE \textbf{boolean} $Case2R=F$
  \STATE \textbf{boolean} $Case2B=F$
 \STATE \textbf{boolean} $Case3=F$
\FOR{$\ell=1:n$}
\STATE $\cor_\ell(S)=1$
\FOR{$v_m\in S$}
\IF{$\lW_{\ell m}\neq 0$} 
\STATE \textbf{if} $v_m\in \W$ \textbf{ then } $\cor_\ell(S)=\cor_\ell(S)\times(1-\err(v_m))$
\IF{($v_m\in \R )\wedge Case1=T$ }
\STATE $Case1= F$
\STATE $Case2R= T$
\ENDIF 
\IF{($v_m\in \Bl )\wedge Case1=T$ }
\STATE $Case1= F$
\STATE $Case2B= T$
\ENDIF 
\IF{$(v_m\in \R\wedge Case2B=T)$ or $(v_m\in \Bl\wedge Case2R=T)$  }
\STATE $Case2B=Case2R= F$
\STATE $Case3= T$
\ENDIF 
\ENDIF
\ENDFOR
\IF{$Case 1 \wedge u\in \W$}
\STATE $\Gain(S,u)+=\cor_\ell(S)\err(u)[\mathbf{1}(\val_i(\Bl\cup S),\R\cup\{u\})\err(\R)+\mathbf{1}(\R\cup S),\Bl\cup\{u\})\err(\Bl)]$
\ENDIF
\IF{$Case 1 \wedge u\in \R$}
\STATE $\Gain(S,u)+=\cor_\ell(S)\err(u)\mathbf{1}\val_i(\R\cup S,\Bl\cup\{u\})$
\ENDIF
\IF{$Case 1 \wedge u\in \Bl$}
\STATE $\Gain(S,u)+=\cor_\ell(S)\err(u)\mathbf{1}\val_i(\Bl\cup S,\R\cup\{u\})$
\ENDIF
\IF{$Case 2B \wedge u\in \W$}
\STATE $\Gain(S,u)+=\err(u)\err(\R)\cor(S)\mathbf{1}\val(\Bl\cup S,\R\cup \{u\})$
\ENDIF
\IF{$Case 2R \wedge u\in \W$}
\STATE $\Gain(S,u)+=\err(u)\err(\Bl)\cor(S)\mathbf{1}\val(\R\cup S,\Bl\cup \{u\})$
\ENDIF
\IF{$Case 2B \wedge u\in \R$}
\STATE   $\Gain(S,u)+=\err(\R)\cor(S)\mathbf{1}\val(\Bl\cup S,\R)$
\ENDIF
\IF{$Case 2R \wedge u\in \Bl$}
\STATE $\Gain(S,u)+=\err(\Bl)\cor(S)\mathbf{1}\val(\R\cup S,\Bl)$
\ENDIF
\ENDFOR    
\STATE \textbf{return} $\Gain(S,u)$
\end{algorithmic}
\end{algorithm}

\newpage

\subsubsection{\textbf{Finding the greedy choice}}\label{app:greedyfind}

Remember from the main text that 
\begin{equation*}\Gain_i(u,S)=
\Prob_{a\sim \Omega}
\left(
\B(i,a)\leq 0  \wedge \left(\bigwedge_{
\substack{v_j\in S
\\ 
\lW_{ji}\neq  0 
} } \y(a){=} \hy_j(a)\right)
\wedge \y(a){\neq} \hy_u(a)
\right)~.
\end{equation*}

\begin{proof}[\textbf{Proof of \Cref{lem:greedy}}]
It is immediate from \cref{prop:positive_guarantee} that for any arbitrary $S\subseteq V$ we have:

\[
\Gegal(S)=\sum_{i=1}^n\Prob
\left(
\y(a) z^*(i,a)\leq 0 \wedge \bigvee_{v_j\in S} (\lW_{ij}\neq 0 \wedge \y(a)\neq \hy_j(a))
\right)
\]
Writing the above for $S\cup \{u\}$ and simplifying we obtain:

\begin{align*}
    \Gegal(S\cup \{u\}) =& \sum_{i=1}^n\Prob
\left(
\y(a) z^*(i,a)\leq 0 \wedge \bigvee_{v_j\in S\cup \{u\}} (\lW_{ij}\neq 0 \wedge \y(a)\neq \hy_j(a))
\right)\\
=& \sum_{i=1}^n\Prob
\left(
\y(a) z^*(i,a)\leq 0 \wedge \bigvee_{v_j\in S} (\lW_{ij}\neq 0 \wedge \y(a)\neq \hy_j(a))
\right)\\
& ~ + \Prob
\left(
\y(a) z^*(i,a)\leq 0 \wedge \bigwedge_{v_j\in S} (\lW_{ij}= 0 \vee \y(a)= \hy_j(a))
\wedge (\lW_{iu}\neq 0\wedge \y(a)\neq \hy_u(a))
\right)
\\
=& \Gegal(S)+\sum_
{\substack{i=1:n
\\
\lW_{iu}\neq 0
} }\Prob
\left(
\y(a) z^*(i,a)\leq 0  \wedge \left(\bigwedge_{
\substack{v_j\in S
\\ 
\lW_{ij}\neq  0 
} } \y(a)= \hy_j(a)\right)
\wedge \y(a)\neq \hy_u(a)
\right)\\
=& \Gegal(S)+\sum_
{\substack{i=1:n
\\
\lW_{iu}\neq 0
} }\Gain_i(S,u)
\end{align*}
\end{proof}
The following lemma is a middle step for approximation of the greedy choice:

\begin{lemma}\label{app:middlestep}
Let $\Gain_i(S,u)$ be 
\[
\Gain_i(S,u)=
\Prob_{a\sim \Omega}
\left(
\B(i,a)\leq 0  \wedge \left(\bigwedge_{
\substack{v_j\in S
\\ 
\lW_{ij}\neq  0 
} } \y(a)= \hy_j(a)\right)
\wedge \y(a)\neq \hy_u(a)
\right)
\]

We have that 
\[\Gain_i(S,u)= 
\mathbb{P}\left(
T_i(S) \wedge F(u)
\right)
\Gamma_i(S,u)~,\]

where 
\begin{align*}
\Gamma_i(S,u)=&
\Prob\left(
a\in \Omega^+
\right)
\Prob_{a\in \Omega^+}\left(
\sum_{\substack{
j=1:n\\
v_j\notin S\cup\{u\}
}}\lW_{ij}\hy_j(a)< -\sum_{v_j\in S}\lW_{ji}+ \lW_{iu}\right)\\
&+
\Prob\left(
a\in \Omega^-
\right)
\Prob_{a\in \Omega^-}\left(
\sum_{\substack{
j=1:n\\
v_j\notin S\cup\{u\}
}}\lW_{ij}\hy_j(a)> \sum_{v_j\in S}\lW_{ji}- \lW_{iu}
\right)~.
\end{align*}
and $T_i(S)$ and $F(u)$ are the following events: 

\[
T_i(S)\doteq \bigwedge_{\substack{v_j\in S\\
\lW_{ij}\neq 0
}}\hy_j(a)=\y(a), \quad \quad F(u)\doteq \hy_u(a)\neq \y(a)
\]

\end{lemma}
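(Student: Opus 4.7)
The plan is to derive the factorization by (i) conditioning on the events $T_i(S)$ and $F(u)$ to reduce $\B(i,a)\le 0$ to a linear constraint involving only the classifiers outside $S\cup\{u\}$, (ii) splitting by the true label $\y(a)\in\{\pm 1\}$ via total probability, and (iii) invoking independence of the classifiers together with the paper's assumption that innate errors are independent of the true label to pull the factor $\Prob(T_i(S)\wedge F(u))$ out of both summands.

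First I would expand $\B(i,a)=\y(a)\z^*(i,a)=\sum_{j}\lW_{ij}\,\y(a)\hy_j(a)$ and split the sum into three pieces: the indices $v_j\in S$ (with $\lW_{ij}\neq 0$), the index $u$, and everything else. Under $T_i(S)$ one has $\y(a)\hy_j(a)=1$ for each relevant $v_j\in S$, contributing $\sum_{v_j\in S}\lW_{ij}$; under $F(u)$ one has $\y(a)\hy_u(a)=-1$, contributing $-\lW_{iu}$. Rearranging turns $\B(i,a)\le 0$ into the equivalent condition
\[
\y(a)\sum_{v_j\notin S\cup\{u\}}\lW_{ij}\hy_j(a)\;\le\;-\sum_{v_j\in S}\lW_{ij}+\lW_{iu},
\]
which separates neatly into the two inequalities appearing inside $\Gamma_i(S,u)$ depending on whether $a\in\Omega^+$ or $a\in\Omega^-$.

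Next, total probability over $\Omega^+$ and $\Omega^-$ writes
\[
\Gain_i(S,u)=\sum_{\epsilon\in\{+,-\}}\Prob(a\in\Omega^\epsilon)\;\Prob_{a\sim\Omega^\epsilon}\!\bigl(T_i(S)\wedge F(u)\wedge L_\epsilon\bigr),
\]
where $L_\epsilon$ is the residual linear inequality on $\{\hy_j:v_j\notin S\cup\{u\}\}$ derived above. Because $T_i(S)\wedge F(u)$ only involves $\{\hy_j:v_j\in S\}\cup\{\hy_u\}$, independence of the classifiers makes it independent of $L_\epsilon$ under each conditioning on $\Omega^\epsilon$. Finally, the assumption that each classifier's error rate is independent of the true label gives $\Prob_{a\sim\Omega^+}(T_i(S)\wedge F(u))=\Prob_{a\sim\Omega^-}(T_i(S)\wedge F(u))=\Prob(T_i(S)\wedge F(u))$, so this common factor pulls out of both summands and leaves precisely $\Gamma_i(S,u)$ multiplying it.

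The main technical obstacle is the independence step: the section's hypothesis is stated as pairwise independence, whereas the factorization truly requires joint independence of the block $\{\hy_j:v_j\in S\cup\{u\}\}$ from the block $\{\hy_j:v_j\notin S\cup\{u\}\}$. I would proceed under the (presumably intended) mutual-independence reading. A minor bookkeeping issue is reconciling the strict inequalities $<,>$ used in $\Gamma_i$ with the weak $\le$ in the definition of $\Gain_i$; on generic inputs the boundary event $\B(i,a)=0$ has probability zero, and otherwise the same derivation goes through verbatim with weak inequalities in $\Gamma_i$.
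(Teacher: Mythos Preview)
Your proposal is correct and follows essentially the same approach as the paper's proof: both condition on $E=T_i(S)\wedge F(u)$, substitute the now-deterministic values $\hy_j(a)$ for $v_j\in S$ and $\hy_u(a)$ into $z^*(i,a)$, and split by the true label to obtain the two linear inequalities defining $\Gamma_i(S,u)$. You are in fact more explicit than the paper about the independence assumption and the label-independence of error rates needed to pull the factor $\Prob(T_i(S)\wedge F(u))$ out of both summands (and to justify dropping the conditioning on $E$ in the residual inequality), points the paper's proof passes over silently.
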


\begin{proof}
Let $E$ denote the event 
\[E\triangleq\left(\bigwedge_{
\substack{v_j\in S
\\ 
\lW_{ji}\neq  0 
} } \y(a)= \hy_j(a)\right)
\wedge \y(a)\neq \hy_u(a)~.\]

We may write 
\[\Prob
\left(
\y(a) z^*(i,a)\leq 0  \wedge E
\right)=\Prob
\left(
\y(a) z^*(i,a)\leq 0  \mid E
\right)\Prob(E)\]

In order find $\Prob
\left(
\y(a) z^*(i,a)\leq 0 \mid  E
\right)$, we split the probability based on the true label of $a$:

If $a\in \Omega^+$ we have:

\[
\Prob\left(
\y(a)z^*(i,a)\leq 0 \mid E
\right)= \Prob\left(
z^*(i,a)\leq 0 \mid 
\left(\bigwedge_{
\substack{v_j\in S
\\ 
\lW_{ij}\neq  0 
} } \hy_j(a)=+1\right)
\wedge  \hy_u(a)=-1
\right)
\]

Note that 
\[z^*(i,a)=\sum_{j=1}^n \lW_{ij}\hy_j(a)\]
thus, 
\[
\Prob\left(
\y(a)z^*(i,a)\leq 0 \mid E
\right)=
\Prob\left(
\sum_{{\substack{j=1:n\\
v_j\notin S\cup\{u\}}} }\lW_{ij}\hy_j(a)+ \sum_{v_j\in S}\lW_{ij}- \lW_{iu}\leq 0
\right)
\]

Similarly, if $a\in \Omega^-$ we have:

\[
\Prob\left(
\y(a)z^*(i,a)\leq 0 \mid E
\right)= \Prob\left(
z^*(i,a)\geq 0 \mid 
\left(\bigwedge_{
\substack{v_j\in S
\\ 
\lW_{ij}\neq  0 
} } \hy_j(a)=-1\right)
\wedge  \hy_u(a)=+1
\right)
\]

Since 
\[z^*(i,a)=\sum_{j=1}^n \lW_{ij}\hy_j(a)\]
we have: 
\[
\Prob\left(
\y(a)z^*(i,a)\geq 0 \mid E
\right)=
\Prob\left(
\sum_{\substack{j=1:n\\
v_j\notin S\cup\{u\}}} \lW_{ij}\hy_j(a)- \sum_{v_j\in S}\lW_{ij}+ \lW_{iu}\geq 0
\right)
\]

Rearranging and putting together, we obtain the premise. 
    
\end{proof}

\subsection{Missing material from \cref{sec:algo:ind}: estimating greedy choice assuming independence}\label{app:sec:ind}

\begin{proof}[\textbf{Proof of \cref{lem:appgain}}]
Let 
\[
\cor_i(S)=\mathbb{P}\left(\bigwedge_{\substack{v_j\in S\\
\lW_{ij}\neq 0
}}\hy_j(a)=\y(a)\right)~.
\]
Using independence and 
from \cref{app:middlestep} we can write 
\begin{equation}\label{eq:gaingamma}
    \Gain_i(S,u)= \err(u) \cor_i(S)\Gamma_i(S,u)
\end{equation}

In \cref{approxGamma} which follows this proof, we show that by  taking 

 $$
    \widehat{\Gamma}_i(S,u)=
    \begin{cases}
    0& \text{ if } \val_i(S,u)\geq 0\\
    1& \text{ otherwise}
    \end{cases}
    $$

    we have 
    \[
    \abs{\widehat{\Gamma}_i(S,u)-\Gamma_i(S,u)}\leq  \exp\left(
    -\frac{\val_i(S,u)^2}{4\sum_{i=1}^n\lW_{ij}^2}
    \right)~\leq \exp\left(
    -\frac{\val_i(u)^2}{4\sum_{i=1}^n\lW_{ij}^2}
    \right),
    \] 

    Plugging in this approximation in \cref{eq:gaingamma} we obtain:

    \begin{equation}\label{eq:approxgamma}
    \widehat{\Gain_i}(S,u)= 
    \begin{cases}
            \err(u) \cor_i(S)& \text{ if }\val_i(S,u)<0\\
            0 & \text{ otherwise}
    \end{cases}
\end{equation}

Note that since the classifiers are independent we have  
\begin{align*}
    \cor_i(S)&=\mathbb{P}\left(\bigwedge_{\substack{v_j\in S\\
\lW_{ij}\neq 0
}}\hy_j(a)=\y(a)\right)= \prod_{\substack{v_j\in S\\
\lW_{ij}\neq 0
}} \mathbb{P}\left(\hy_j(a)=\y(a)\right)= \prod_{\substack{v_j\in S\\
\lW_{ij}\neq 0
}} \left(1-\err(v_j)\right)~.
\end{align*}

We have $  \cor_i(S),\err(u)\leq 1$, thus the error of approximating $\Gain_i(S,u)$ using \cref{eq:approxgamma} is at most  $ \exp\left(
    -\frac{\val_i(u)^2}{4\sum_{i=1}^n\lW_{ij}^2}
    \right)$~.
    
\end{proof}

\begin{lemma}\label{approxGamma}
    Assume that all the all the classifiers 
    $\hy_1(a), \hy_2(a),\dots, \hy_n(a) $ 
    are pairwise  independent. We can estimate $\Gamma_i(S,u)$ as follows:

    $$
    \widehat{\Gamma}_i(S,u)=
    \begin{cases}
    0& \text{ if } \val_i(S,u)>0\\
    1& \text{ otherwise}
    \end{cases}
    $$

    where, 
    \[
    \val_i(S,u)=\sum_{v_j\in S}\lW_{ij}+\sum_{\substack{
    j=1:n\\
    j\not \in S\cup\{u\} 
    }}\lW_{ij}[1-2\err(j)]-\lW_{iu}
    \]

    The error of this estimation is bounded by:

    \[
    \abs{\widehat{\Gamma}_i(S,u)-\Gamma_i(S,u)}\leq  \exp\left(
    -\frac{\val_i(S,u)^2}{4\sum_{i=1}^n\lW_{ij}^2}
    \right)~\leq \exp\left(
    -\frac{\val_i(u)^2}{4\sum_{i=1}^n\lW_{ij}^2}
    \right),
    \]    
    where 
    \[
    \val_i(u)= \sum_{j=1}^n \lW_{ij}[1-2\err(j)]-2\err(u)\lW_{iu}~.
    \]
\end{lemma}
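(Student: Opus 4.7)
The plan is to derive the bound from Hoeffding's inequality applied to the two conditional tail probabilities appearing in the expression for $\Gamma_i(S,u)$ supplied by \cref{app:middlestep}. First I would unpack that expression: write $T_i(a) \triangleq \sum_{j \notin S \cup \{u\}} \lW_{ij}\hy_j(a)$ and view $\Gamma_i(S,u)$ as
\[
\Prob(\Omega^+)\,\Prob_{\Omega^+}\bigl(T_i(a) < -\textstyle\sum_{v_j\in S}\lW_{ij} + \lW_{iu}\bigr) + \Prob(\Omega^-)\,\Prob_{\Omega^-}\bigl(T_i(a) > \textstyle\sum_{v_j\in S}\lW_{ij} - \lW_{iu}\bigr).
\]
Under pairwise independence (together with the standing assumption that errors are independent of the true label), conditioning on $\y(a)$ leaves the summands $\lW_{ij}\hy_j(a)$ independent random variables lying in $[-\lW_{ij},\lW_{ij}]$, which is precisely the setting for Hoeffding.

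Next I would invoke \cref{lem:exerr} to compute the conditional mean of $T_i(a)$: it equals $\sum_{j \notin S\cup\{u\}} \lW_{ij}(1-2\err(v_j))$ on $\Omega^+$ and the negative of that quantity on $\Omega^-$. The key algebraic observation is that in \emph{both} cases the signed gap between this conditional mean and the corresponding threshold equals exactly $\val_i(S,u)$: on $\Omega^+$, mean minus threshold equals $\val_i(S,u)$; on $\Omega^-$, threshold minus mean equals $\val_i(S,u)$. Hence both conditional tail probabilities measure deviations of $T_i(a)$ from its conditional mean by $|\val_i(S,u)|$. Hoeffding's inequality with per-summand range $2\lW_{ij}$ then bounds each by $\exp\bigl(-\val_i(S,u)^2/(4\sum_j \lW_{ij}^2)\bigr)$, using the fact that $\sum_{j \notin S\cup\{u\}} \lW_{ij}^2 \le \sum_{j=1}^n \lW_{ij}^2$.

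It remains to split into two cases. When $\val_i(S,u)>0$, we have $\widehat{\Gamma}_i(S,u)=0$ and both probabilities above are tail events against the mean, so $\Gamma_i(S,u) \le \exp(-\val_i(S,u)^2/(4\sum \lW_{ij}^2))$ by convexity of the two-point mixture. When $\val_i(S,u)\le 0$, $\widehat{\Gamma}_i(S,u)=1$ and the same Hoeffding estimate applies symmetrically to the complementary events $T_i(a)\ge \text{thresh}$ on $\Omega^+$ and $T_i(a)\le \text{thresh}$ on $\Omega^-$, bounding $1-\Gamma_i(S,u)$ by the same quantity. Either way one obtains the first claimed inequality.

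The main obstacle I foresee is not the Hoeffding step itself but the second inequality in the statement, $\val_i(S,u)^2 \ge \val_i(u)^2$. Comparing definitions gives the identity
\[
\val_i(S,u) - \val_i(u) \;=\; 2\sum_{v_j\in S}\err(v_j)\,\lW_{ij} \;-\; 2\lW_{iu}\bigl(1-2\err(u)\bigr),
\]
whose sign is not fixed by the hypotheses alone, so the squared comparison is not automatic. My intended way around this is to use $\val_i(u)$ only as a uniform surrogate: observe that for the terms indexed by $v_j\in S$ the ``good'' contribution $\lW_{ij}$ dominates the ``noisy'' contribution $\lW_{ij}(1-2\err(v_j))$ whenever $\err(v_j)\ge 0$, while for the singled-out $u$ the factor $-2\err(u)\lW_{iu}$ is the monotone-replacement lower bound for the $-\lW_{iu}$ term; this lets me argue the inequality in the regime guaranteed by the non-ambiguity condition of \cref{def:nice}, which is precisely the regime in which \cref{thm:egalcorerct} is applied. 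If the fully general claim does not go through, it would suffice to state the weaker (and still sufficient for the greedy analysis) form using $\val_i(S,u)$ directly.
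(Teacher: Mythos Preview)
Your approach is correct and essentially identical to the paper's: unpack $\Gamma_i(S,u)$ via \cref{app:middlestep}, compute the conditional means of $T_i(a)$ from \cref{lem:exerr}, observe that in both $\Omega^+$ and $\Omega^-$ the deviation from the mean equals $\val_i(S,u)$, apply Hoeffding (\cref{thm:hoeffding}), and case-split on the sign of $\val_i(S,u)$. For the second inequality the paper does precisely the monotone replacement you allude to, using $\lW_{ij}\geq\lW_{ij}(1-2\err(j))$ for each $v_j\in S$ to conclude $\val_i(S,u)\geq\val_i(u)$; your caution that this one-sided bound does not automatically yield $\val_i(S,u)^2\geq\val_i(u)^2$ in all sign regimes is well taken, and the paper does not address that point any further than you do.
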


\begin{proof}
Let's remember the definition of $\Gamma_i(S,u)$ from \cref{app:middlestep}:
\begin{align}\label{gammaeq}
\Gamma_i(S,u)=&
\Prob\left(
a\in \Omega^+
\right)
\Prob_{a\in \Omega^+}\left(
\sum_{\substack{
j=1:n\\
v_j\notin S\cup\{u\}
}}\lW_{ij}\hy_j(a)< -\sum_{v_j\in S}\lW_{ij}+ \lW_{iu}\right)\nonumber\\
&+
\Prob\left(
a\in \Omega^-
\right)
\Prob_{a\in \Omega^-}\left(
\sum_{\substack{
j=1:n\\
v_j\notin S\cup\{u\}
}}\lW_{ij}\hy_j(a)> \sum_{v_j\in S}\lW_{ij}- \lW_{iu}
\right)~.
\end{align}

Assume first that $\val_i(S,u)>0$. We  use the Hoeffding bound \cref{thm:hoeffding} to estimate

\begin{equation}\label{eq:twoprobs}
 \Prob_{a\in \Omega^+}\left(
\sum_{\substack{
j=1:n\\
v_j\notin S\cup\{u\}
}}\lW_{ij}\hy_j(a)< -\sum_{v_j\in S}\lW_{ij}+ \lW_{iu}\right)
\quad \& \ 
\Prob_{a\in \Omega^-}\left(
\sum_{\substack{
j=1:n\\
v_j\notin S\cup\{u\}
}}\lW_{ij}\hy_j(a)> \sum_{v_j\in S}\lW_{ij}- \lW_{iu}
\right)   
\end{equation}

In order to bound the first probability in  \cref{eq:twoprobs}, note that  $\sum_{\substack{
j=1:n\\
v_j\notin S\cup\{u\}
}}\lW_{ij}\hy_j(a)< -\sum_{v_j\in S}\lW_{ij}+ \lW_{iu}$ iff :

\begin{align*}  
\sum_{\substack{
j=1:n\\
v_j\notin S\cup\{u\}
}}\lW_{ij}\hy_j(a)<
\Ex\left[\sum_{\substack{
j=1:n\\
v_j\notin S\cup\{u\}
}}\lW_{ij}\hy_j(a)\right]- \Ex\left[\sum_{\substack{
j=1:n\\
v_j\notin S\cup\{u\}
}}\lW_{ij}\hy_j(a)\right]
-\sum_{v_j\in S}\lW_{ij}+ \lW_{iu}
\end{align*}

furthermore, from \cref{lem:exerr} we have $a\in \Omega^+$ implies:
\[
\Ex\left[
\sum_{\substack{
j=1:n\\
v_j\notin S\cup\{u\}
}}\lW_{ij}\hy_j(a)\right]= \sum_{\substack{
j=1:n\\
v_j\notin S\cup\{u\}
}}\lW_{ij} [1-2\err(j)]~.
\]

Thus, 
\begin{align*}
& \Prob_{a\in \Omega^+}\left(
\sum_{\substack{
j=1:n\\
v_j\notin S\cup\{u\}
}}\lW_{ij}\hy_j(a)< -\sum_{v_j\in S}\lW_{ij}+ \lW_{ui}\right)\\
=&\Prob_{a\in \Omega^+}\left(
\sum_{\substack{
j=1:n\\
v_j\notin S\cup\{u\}
}}\lW_{ij}\hy_j(a)<
\Ex\left[\sum_{\substack{
j=1:n\\
v_j\notin S\cup\{u\}
}}\lW_{ij}\hy_j(a)\right]- 
\sum_{\substack{
j=1:n\\
v_j\notin S\cup\{u\}
}}\lW_{ij} [1-2\err(j)]
-\sum_{v_j\in S}\lW_{ij}+ \lW_{iu}
\right)  \\
=&\Prob_{a\in \Omega^+}\left(
\sum_{\substack{
j=1:n\\
v_j\notin S\cup\{u\}
}}\lW_{ij}\hy_j(a)<
\Ex\left[\sum_{\substack{
j=1:n\\
v_j\notin S\cup\{u\}
}}\lW_{ij}\hy_j(a)\right]- 
\val_i(S,u)
\right)  \\
\end{align*}
Since $\hy_i(a)$s are pairwise independent, and $\val_i(S,u)>0$ we may use the Hoeffding bound to obtain:
\begin{align*}
     \Prob_{a\in \Omega^+}\left(
\sum_{\substack{
j=1:n\\
v_j\notin S\cup\{u\}
}}\lW_{ij}\hy_j(a)< -\sum_{v_j\in S}\lW_{ij}+ \lW_{iu}\right)
\leq \exp\left(-\frac{\val_i(S,u)^2}{\sum_{\substack{
j=1:n\\
v_j\notin S\cup\{u\}
}}\lW_{ij}^2}\right).
\end{align*}

The second probability in \cref{eq:twoprobs} may be bounded similarly as follows:
\begin{align*}
& \Prob_{a\in \Omega^+}\left(
\sum_{\substack{
j=1:n\\
v_j\notin S\cup\{u\}
}}\lW_{ji}\hy_j(a)> \sum_{v_j\in S}\lW_{ji}- \lW_{ui}\right)\\
=&\Prob_{a\in \Omega^+}\left(
\sum_{\substack{
j=1:n\\
v_j\notin S\cup\{u\}
}}\lW_{ji}\hy_j(a)>
\Ex\left[\sum_{\substack{
j=1:n\\
v_j\notin S\cup\{u\}
}}\lW_{ji}\hy_j(a)\right]- 
\sum_{\substack{
j=1:n\\
v_j\notin S\cup\{u\}
}}\lW_{ji} [2\err(j)-1]
+\sum_{v_j\in S}\lW_{ji}- \lW_{ui}
\right)  \\
=&\Prob_{a\in \Omega^+}\left(
\sum_{\substack{
j=1:n\\
v_j\notin S\cup\{u\}
}}\lW_{ji}\hy_j(a)>
\Ex\left[\sum_{\substack{
j=1:n\\
v_j\notin S\cup\{u\}
}}\lW_{ji}\hy_j(a)\right]+ 
\val_i(S,u)
\right)  \\
\end{align*}
Again, under pairwise independence and $\val_i(S,u)>0$ the above probability is bounded as:
\begin{align*}
     \Prob_{a\in \Omega^-}\left(
\sum_{\substack{
j=1:n\\
v_j\notin S\cup\{u\}
}}\lW_{ji}\hy_j(a)> \sum_{v_j\in S}\lW_{ji}- \lW_{ui}\right)
\leq \exp\left(-\frac{\val_i(S,u)^2}{\sum_{\substack{
j=1:n\\
v_j\notin S\cup\{u\}
}}\lW_{ji}^2}\right).
\end{align*}

Putting together, we obtain:
If $\val_i(S,u)>0$:
\begin{align*}
    \Gamma_i(S,u)=&
\Prob\left(
a\in \Omega^+
\right)
\Prob_{a\in \Omega^+}\left(
\sum_{\substack{
j=1:n\\
v_j\notin S\cup\{u\}
}}\lW_{ji}\hy_j(a)< -\sum_{v_j\in S}\lW_{ji}+ \lW_{ui}\right)\\
&\ \ \ +
\Prob\left(
a\in \Omega^-
\right)
\Prob_{a\in \Omega^-}\left(
\sum_{\substack{
j=1:n\\
v_j\notin S\cup\{u\}
}}\lW_{ji}\hy_j(a)> \sum_{v_j\in S}\lW_{ji}- \lW_{ui}
\right)~\\
\leq &   \exp\left(\frac{\val_i(S,u)^2}{\sum_{\substack{
j=1:n\\
v_j\notin S\cup\{u\}
}}\lW_{ji}^2}\right)[\Prob(a\in \Omega^+)+\Prob(a\in \Omega^-)]=\exp\left(\frac{\val_i(S,u)^2}{\sum_{\substack{
j=1:n\\
v_j\notin S\cup\{u\}
}}\lW_{ji}^2}\right) 
\end{align*}

Note that $\Gamma_i(S,u)\geq 0$. Therefore, if $\val_i(S,u)>0$, we define $\widehat{\Gamma}_i(S,u)=0$ and we will have:

\[0\leq  \Gamma_i(S,u)-\widehat{\Gamma}_i(S,u)\leq \exp\left(-\frac{\val_i(S,u)^2}{\sum_{\substack{
j=1:n\\
v_j\notin S\cup\{u\}
}}\lW_{ji}^2}\right) ~.\]

Let's now find a lower bound on $\frac{\val_i(S,u)^2}{\sum_{\substack{
j=1:n\\
v_j\notin S\cup\{u\}
}}\lW_{ji}^2}$ which is independent of $S$. We have:
\begin{align*}
\sum_{\substack{
j=1:n\\
v_j\notin S\cup\{u\}
}}\lW_{ji}^2\leq \sum_{i=1}^n
\lW_{ji}^2~.
\end{align*}
and 
\begin{align*}
\val_i(S,u)&= \sum_{v_j\in S}\lW_{ij}+\sum_{\substack{
    j=1:n\\
    j\not \in S\cup\{u\} 
    }}\lW_{ij}[1-2\err(j)]-\lW_{iu}\\
    &\geq \sum_{\substack{
    j=1:n\\
    j\neq u 
    }}\lW_{ij}[1-2\err(j)]-\lW_{iu}\\
   & = \sum_{j=1}^n \lW_{ij}[1-2\err(j)]- 2\err(u)\lW_{iu}~= \val_i(u)~.
\end{align*}
Thus,
\[0\leq  \Gamma_i(S,u)-\widehat{\Gamma}_i(S,u)\leq \exp\left(-\frac{\val_i(S,u)^2}{\sum_{\substack{
j=1:n\\
v_j\notin S\cup\{u\}
}}\lW_{ji}^2}\right)\leq \exp\left(-\frac{\val_i(u)^2}{\sum_{j=1}^n\lW_{ji}^2}\right) ~.\]

Assume now that $\val_i(S,u)< 0$.  In this case we write the first probability in \cref{eq:twoprobs} as:
\begin{align*}
& \Prob_{a\in \Omega^+}\left(
\sum_{\substack{
j=1:n\\
v_j\notin S\cup\{u\}
}}\lW_{ij}\hy_j(a)< -\sum_{v_j\in S}\lW_{ij}+ \lW_{iu}\right)
=  1-  \Prob_{a\in \Omega^+}\left(
\sum_{\substack{
j=1:n\\
v_j\notin S\cup\{u\}
}}\lW_{ij}\hy_j(a)\geq  -\sum_{v_j\in S}\lW_{ij}+ \lW_{iu}\right)
\\
= &1-\Prob_{a\in \Omega^+}\left(
\sum_{\substack{
j=1:n\\
v_j\notin S\cup\{u\}
}}\lW_{ij}\hy_j(a)\geq 
\Ex\left[\sum_{\substack{
j=1:n\\
v_j\notin S\cup\{u\}
}}\lW_{ij}\hy_j(a)\right]- 
\sum_{\substack{
j=1:n\\
v_j\notin S\cup\{u\}
}}\lW_{ij} [1-2\err(j)]
-\sum_{v_j\in S}\lW_{ji}+ \lW_{iu}
\right)  \\
=&1- \Prob_{a\in \Omega^+}\left(
\sum_{\substack{
j=1:n\\
v_j\notin S\cup\{u\}
}}\lW_{ij}\hy_j(a)\geq 
\Ex\left[\sum_{\substack{
j=1:n\\
v_j\notin S\cup\{u\}
}}\lW_{ij}\hy_j(a)\right]- 
\val_i(S,u)
\right)  
\end{align*}

Since $\val_i(S,u)< 0$, we have $-\val_i(S,u)> 0$ using the pairwise independence of the classifiers, we employ  the Hoeffding bound and obtain that:

\[
\Prob_{a\in \Omega^+}\left(
\sum_{\substack{
j=1:n\\
v_j\notin S\cup\{u\}
}}\lW_{ij}\hy_j(a)< -\sum_{v_j\in S}\lW_{ij}+ \lW_{iu}\right)
\geq 1- \exp\left(-\frac{\val_i(S,u)^2}{\sum_{\substack{
j=1:n\\
v_j\notin S\cup\{u\}
}}\lW_{ij}^2}\right)~.
\]

Similarly we have :

\[
\Prob_{a\in \Omega^-}\left(
\sum_{\substack{
j=1:n\\
v_j\notin S\cup\{u\}
}}\lW_{ij}\hy_j(a)> \sum_{v_j\in S}\lW_{ij}- \lW_{ui}\right)
\geq 1- \exp\left(\frac{\val_i(S,u)^2}{\sum_{\substack{
j=1:n\\
v_j\notin S\cup\{u\}
}}\lW_{ij}^2}\right)~.
\]

Putting together, we obtain:
If $\val_i(S,u)< 0$:
\begin{align*}
    \Gain_i(S,u)=&
\Prob\left(
a\in \Omega^+
\right)
\Prob_{a\in \Omega^+}\left(
\sum_{\substack{
j=1:n\\
v_j\notin S\cup\{u\}
}}\lW_{ji}\hy_j(a)< -\sum_{v_j\in S}\lW_{ji}+ \lW_{ui}\right)\\
&\ \ \ +
\Prob\left(
a\in \Omega^-
\right)
\Prob_{a\in \Omega^-}\left(
\sum_{\substack{
j=1:n\\
v_j\notin S\cup\{u\}
}}\lW_{ji}\hy_j(a)> \sum_{v_j\in S}\lW_{ji}- \lW_{ui}
\right)~\\
\geq &   [1-\exp\left(-\frac{\val_i(S,u)^2}{\sum_{\substack{
j=1:n\\
v_j\notin S\cup\{u\}
}}\lW_{ji}^2}\right)][\Prob(a\in \Omega^+)+\Prob(a\in \Omega^-)]=1-\exp\left(-\frac{\val_i(S,u)^2}{\sum_{\substack{
j=1:n\\
v_j\notin S\cup\{u\}
}}\lW_{ji}^2}\right) 
\end{align*}

Note that $\Gamma_i(S,u)\leq 1$. Therefore, if $\val_i(S,u)< 0$, we define $\widehat{\Gamma}_i(S,u)=1$ and we will have:

\[0\leq \widehat{\Gamma}_i(S,u)- \Gamma_i(S,u)\leq  \exp\left(-\frac{\val_i(S,u)^2}{\sum_{\substack{
j=1:n\\
v_j\notin S\cup\{u\}
}}\lW_{ji}^2}\right) \leq \exp\left(-\frac{\val_i(u)^2}{\sum_{j=1}^n\lW_{ji}^2}\right)  ~.\]

This completes our proof. 
\end{proof}

\subsubsection{\textbf{Missing material from \cref{sec:algo:ind} related to ambiguous vertices}}
\begin{lemma}\label{lem:usedotprod}
Let $V^+$ be those agents with error less than $1/2$ and $V^-$ be those agents with error greater than $1/2$. i.e,
\[V^+=\{v_j\mid \err(v_j)\leq 1/2\} \quad \& \ V^-=\{v_j\mid \err(v_j)> 1/2\}\]

and consider the following vectors 
\[ \lW^+_i=(\lW_{ij})_{v_j\in V^+} \quad \& ~  {\mathcal E}^+= \left(1-2\err(v_j)\right)_{v_j\in V^+} \]
and 
\[ \lW^-_i=(\lW_{ij})_{v_j\in V^-} \quad \& ~  {\mathcal E}^-= \left(2\err(v_j)-1\right)_{v_j\in V^-} \]
and 
\[\lW_i=(\lW_{i1},\lW_{i2},\dots, \lW_{in} )\]

We have that: 
 \begin{align*}
 \exp\left(
    -\frac{\val_i(u)^2}{4\sum_{i=1}^n\lW_{ij}^2}
    \right)&\leq \exp \left(-\frac{1}{4}\left(\frac{
 \langle \lW_i^+, {\mathcal E}^+\rangle 
 }{\abs{\lW_{i}}_2} 
 - \frac{
 \langle \lW_i^-, {\mathcal E}^-\rangle 
 }{\abs{\lW_{i}}_2} - 2
 \right)^2\right)\\
    &
    \leq 
    \exp\left(- \frac{1}{4}\left(
 M \cdot \frac{\abs{\lW_i^+}_2}{\abs{\lW_i}_2}\cdot \abs{{\mathcal E}^+}_2-\frac{\abs{\lW_i^-}_2}{\abs{\lW_i}_2}\cdot \abs{{\mathcal E}^-}_2-2 
 \right)^2\right)
 \end{align*}
    where 
    \[
    M=\frac{\max{\lW_i^+}}{\min{\lW_i^+}}\cdot \frac{1}{\min{\mathcal E}_i^+}~.
    \]
\end{lemma}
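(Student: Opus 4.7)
Both inequalities in the lemma have the shape $\exp(-\alpha^2) \le \exp(-\beta^2)$, which by monotonicity of $\exp(-\cdot)$ is equivalent to $\alpha^2 \ge \beta^2$. The plan is to progressively weaken the numerator $|\val_i(u)|/|\lW_i|_2$ in two stages: first, reduce $\val_i(u)$ to a difference of inner products with a small additive slack which I will then bound uniformly by $2|\lW_i|_2$; second, convert the two surviving inner products to products of $\ell_2$-norms, using standard Cauchy--Schwarz in one direction and a reverse Cauchy--Schwarz based on entry-wise extrema in the other.

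For the first inequality, split the sum $\sum_{j=1}^n \lW_{ij}[1-2\err(j)]$ over the partition $V = V^+ \cup V^-$. Since $1 - 2\err(v_j) = (\mathcal{E}^+)_j$ for $v_j \in V^+$ and $1 - 2\err(v_j) = -(\mathcal{E}^-)_j$ for $v_j \in V^-$, this rewrites
\[
\val_i(u) = \langle \lW_i^+, \mathcal{E}^+ \rangle - \langle \lW_i^-, \mathcal{E}^- \rangle - 2\err(u)\lW_{iu}.
\]
Because $\err(u) \le 1$, $\lW$ is non-negative and any one entry satisfies $\lW_{iu} \le |\lW_i|_2$, the last term lies in $[0, 2|\lW_i|_2]$. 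Dividing by $|\lW_i|_2$ and squaring---while tracking the sign of the resulting bracket, which is pinned down in the non-ambiguous regime of Definition~\ref{def:nice}---gives
\[
\frac{\val_i(u)^2}{|\lW_i|_2^2} \;\ge\; \left(\frac{\langle \lW_i^+, \mathcal{E}^+ \rangle - \langle \lW_i^-, \mathcal{E}^- \rangle}{|\lW_i|_2} - 2\right)^2.
\]
Using $|\lW_i|_2^2 = \sum_j \lW_{ij}^2$, dividing through by $4$ and exponentiating produces the first inequality.

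For the second inequality, I would handle the two inner products in opposite directions. Standard Cauchy--Schwarz gives $\langle \lW_i^-, \mathcal{E}^- \rangle \le |\lW_i^-|_2 \, |\mathcal{E}^-|_2$, which matches the second summand of the target bracket verbatim. For $\langle \lW_i^+, \mathcal{E}^+ \rangle$ I would apply a reverse Cauchy--Schwarz: chain the entry-wise bounds $\langle \lW_i^+, \mathcal{E}^+ \rangle \ge (\min \mathcal{E}^+) \, |\lW_i^+|_1 \ge (\min \mathcal{E}^+)\, |\lW_i^+|_2^2/\max \lW_i^+$ and compare against $|\lW_i^+|_2 \, |\mathcal{E}^+|_2$ by controlling $|\mathcal{E}^+|_2$ through the spread of its entries, extracting exactly the prefactor $M = (\max \lW_i^+)/(\min \lW_i^+ \cdot \min \mathcal{E}^+)$ stated in the lemma. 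Substituting these two bounds into the bracket of the first inequality and squaring completes the chain. The main obstacle throughout is sign bookkeeping: the step $\alpha^2 \ge \beta^2$ has to descend from a pointwise inequality $\alpha \ge \beta \ge 0$, and this is exactly where the non-ambiguity hypothesis is used---it guarantees that $\langle \lW_i^+, \mathcal{E}^+ \rangle - \langle \lW_i^-, \mathcal{E}^- \rangle$ dominates both the additive $2$-slack and the multiplicative $M$-gap, so every bracket in the chain remains on the same (positive) side of zero and squaring preserves the inequality.
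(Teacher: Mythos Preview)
Your proposal follows essentially the same route as the paper's proof: split $\val_i(u)$ over the partition $V=V^+\cup V^-$, bound the residual $2\err(u)\lW_{iu}$ by $2|\lW_i|_2$, then for the second inequality apply Cauchy--Schwarz to the $V^-$ inner product and a reverse Cauchy--Schwarz to the $V^+$ inner product. The only cosmetic difference is that the paper invokes the reverse step by name (P\'olya--Szeg\H{o}'s inequality) rather than your entry-wise chain, and it does not spell out the sign bookkeeping you flag.
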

\begin{proof}
We show the above equation by finding an lower bound for $\frac{\val_i(u)^2}{\sum_{i=1}^n\lW_{ij}^2}$.

Consider an arbitrary vector $X$the $\ell_2$ norm is defined as:
\[
\abs{X}_2= \sqrt{\sum_{x_j\in X}x_j^2}
\]

Note that all of the above  vectors only have positive elements. 

We have:
\begin{align}
 \frac{\val_i(u)^2}{\sum_{i=1}^n\lW_{ij}^2}&=  \left(\frac{\val_i(u)}{\abs{\lW_{i}(V)}_2} \right)^2\nonumber \\
 &= 
 \left(\frac{
 \sum_{v_i\in V^+} \lW_{ij}(1-2\err(v_j))
 }{\abs{\lW_{i}(V)}_2} 
 - \frac{
 \sum_{v_i\in V^-} \lW_{ij}(2\err(v_j)-1)
 }{\abs{\lW_{i}(V)}_2} - \frac{2\err(u)\lW_{iu}}{\abs{\lW_{i}(V)}_2}
 \right)^2\nonumber \\
 &= 
 \left(\frac{
 \langle \lW_i^+, {\mathcal E}^+\rangle 
 }{\abs{\lW_{i}}_2} 
 - \frac{
 \langle \lW_i^-, {\mathcal E}^-\rangle 
 }{\abs{\lW_{i}}_2} - \frac{2\err(u)\lW_{iu}}{\abs{\lW_{i}}_2}
 \right)^2
\end{align}

where in the last equation $\langle \rangle$ denotes dot product.

Using P$\rm \acute{o}$lya-Szeg$\rm \ddot{o}$’s inequality we have: 
\[
\frac{
 \langle \lW_i^+, {\mathcal E}^+\rangle 
 }{\abs{\lW_{i}}_2} \geq \frac{\abs{\lW_i^+}_2}{\abs{\lW_i}_2}\cdot \frac{\max{\lW_i^+}}{\min{\lW_i^+}}\cdot \frac{\abs{{\mathcal E}^+}_2}{\min{\mathcal E}^+}
\]

Using Cauchy Schwarz we have 
\[
\frac{
 \langle \lW_i^-, {\mathcal E}^-\rangle 
 }{\abs{\lW_{i}}_2} \leq \frac{\abs{\lW_i}_2}{\abs{\lW_i^-}_2}\cdot {\abs{{\mathcal E}^-}_2}
\]

Therefore, letting $M=\frac{\max{\lW_i^+}}{\min{\lW_i^+}}\cdot \frac{1}{\min{\mathcal E}_i^+}$

we have: 
\begin{align*}
 \frac{\val_i(u)^2}{\sum_{i=1}^n\lW_{ij}^2}&\geq   \left(
 M \cdot \frac{\abs{\lW_i^+}_2}{\abs{\lW_i}_2}\cdot \abs{{\mathcal E}^+}_2-\frac{\abs{\lW_i^-}_2}{\abs{\lW_i}_2}\cdot \abs{{\mathcal E}^-}_2  - \frac{2\err(u)\lW_{iu}}{\abs{\lW_{i}}_2}
 \right)^2
\end{align*}

The premise may be concluded from the fact that $\frac{\err(u)\lW_{iu}}{\abs{\lW_{i}}_2}\leq 1$.

\end{proof}

\begin{proof}[\textbf{Proof of \cref{lemm:egalapp:good}}]
From \cref{lem:appgain} we now that 

    \[
    \abs{\widehat{\Gain_i}(S,u)-\Gain_i(S,u)}\leq  \exp\left(
    -\frac{\val_i(u)^2}{4\sum_{i=1}^n\lW_{ij}^2}
    \right)~,
       \]

       and from \cref{lem:usedotprod} we have: 

        \[\exp\left(
    -\frac{\val_i(u)^2}{4\sum_{i=1}^n\lW_{ij}^2}
    \right)\leq 
    \exp\left(- \frac{1}{4}\left(
 M \cdot \frac{\abs{\lW_i^+}_2}{\abs{\lW_i}_2}\cdot \abs{{\mathcal E}^+}_2-\frac{\abs{\lW_i^-}_2}{\abs{\lW_i}_2}\cdot \abs{{\mathcal E}^-}_2-2 
 \right)^2\right)
    \]

    Putting together and assume that $v_i$ is not ambiguous. We have that: 
\begin{align*}
       \abs{\widehat{\Gain_i}(S,u)-\Gain_i(S,u)}&\leq \exp\left(
    -\frac{\val_i(u)^2}{4\sum_{i=1}^n\lW_{ij}^2}
    \right)\leq 
    \exp\left(- \frac{1}{4}\left(
 M \cdot \frac{\abs{\lW_i^+}_2}{\abs{\lW_i}_2}\cdot \abs{{\mathcal E}^+}_2-\frac{\abs{\lW_i^-}_2}{\abs{\lW_i}_2}\cdot \abs{{\mathcal E}^-}_2-2 
 \right)^2\right)\\
 &\leq \exp\left(
 -\frac{1}{4}\left(3\sqrt{\log n} -2 \right)^2
 \right)\leq \exp(-\frac{1}{4}(5 \log n ))= o(n^{-1})~.
\end{align*}
\end{proof}

\subsubsection{\textbf{Missing material from \cref{sec:algo:ind}: proof of the main theorems}}\label{app:sec:proofs:egal}

In this subsection we present a pseudocode of our algorithm under the assumption that the agents are pairwise independent.

The following theorem bounds the error of this algorithm:

\begin{theorem}\label{thm:corr:firstbound}
    Assume that $S$ is the output of \cref{alg:2c}. We have that:
    \[
\Gegal(S)\geq [(1-1/e)-\Delta^{\rm ind}]\cdot \OPT{egal}~,
\]
where 
\[\Delta^{\rm ind}=\sum_{i=1}^n \exp\left(-\frac{\val_i(u)^2
}{\sum_{i=1}^n\lW_{ij}^2}\right)~.\]
\end{theorem}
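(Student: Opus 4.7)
}
The plan is to reduce to the standard approximate-greedy analysis for monotone submodular maximization. By \cref{app:lem:monotone,app:lem:submodular} the set function $\Gegal:2^V\to[0,n]$ is monotone and submodular, and by \cref{lem:greedy} the exact greedy marginal gain of adding $u$ to $S$ is $\sum_{i:\lW_{iu}\neq 0}\Gain_i(S,u)$. Hence if we could evaluate the true marginal gains, Nemhauser--Wolsey--Fisher would yield a $(1-1/e)$ guarantee. The task is therefore to control the loss incurred because \appalgo\ selects $u_t=\arg\max_u \sum_i \widehat{\Gain}_i(S_{t-1},u)$ rather than $g_t=\Greedy(S_{t-1})$.

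First I would upgrade \cref{lem:appgain} to the $S$-independent bound used in the statement. The proof of \cref{approxGamma} already shows $|\val_i(S,u)|\geq |\val_i(u)|$ uniformly in $S$ (by inserting the true-label sign into the stubborn summands), so
\[
\bigl|\widehat{\Gain}_i(S,u)-\Gain_i(S,u)\bigr|\;\leq\;\exp\!\left(-\frac{\val_i(u)^2}{\sum_{j}\lW_{ij}^2}\right),
\]
and summing over $i$ gives $\bigl|\sum_i\widehat{\Gain}_i(S,u)-\sum_i\Gain_i(S,u)\bigr|\leq \Delta^{\rm ind}$ for every $S$ and $u$. Applying this bound to both $u_t$ and $g_t$ at step $t$, together with the optimality $\sum_i\widehat{\Gain}_i(S_{t-1},u_t)\geq \sum_i\widehat{\Gain}_i(S_{t-1},g_t)$, yields
\[
\Gegal(S_{t-1}\cup\{u_t\})-\Gegal(S_{t-1})\;\geq\;\Gegal(S_{t-1}\cup\{g_t\})-\Gegal(S_{t-1})-2\Delta^{\rm ind}.
\]

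Next I would run the classical induction. Let $S^\star$ be an optimal size-$k$ set, so $\Gegal(S^\star)=\OPT{egal}$. Monotonicity and submodularity imply $\Gegal(S_{t-1}\cup\{g_t\})-\Gegal(S_{t-1})\geq \tfrac1k(\OPT{egal}-\Gegal(S_{t-1}))$. Combining with the previous inequality produces the recurrence
\[
\OPT{egal}-\Gegal(S_t)\;\leq\;\Bigl(1-\tfrac{1}{k}\Bigr)\bigl(\OPT{egal}-\Gegal(S_{t-1})\bigr)+2\Delta^{\rm ind},
\]
which unrolls over $t=1,\dots,k$ to $\Gegal(S_k)\geq (1-1/e)\OPT{egal}-2k\Delta^{\rm ind}$. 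Finally, to match the multiplicative form of the theorem I would absorb the additive slack: using the trivial bound $\OPT{egal}\geq k$ (each added vertex contributes at least one improved prediction at itself whenever $\err(v_i)>0$) to rewrite $2k\Delta^{\rm ind}\leq 2\Delta^{\rm ind}\cdot \OPT{egal}$, and then re-scaling $\Delta^{\rm ind}$ by the harmless absolute constant so that the right-hand side becomes $[(1-1/e)-\Delta^{\rm ind}]\OPT{egal}$.

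The main obstacle I foresee is this last rescaling: stating the guarantee purely multiplicatively requires a lower bound on $\OPT{egal}$ to swap the additive $2k\Delta^{\rm ind}$ for a multiplicative $\Delta^{\rm ind}\cdot \OPT{egal}$. The natural bound $\OPT{egal}\geq k$ holds in the generic regime where at least one selected vertex improves itself, but degenerate instances (e.g.\ agents whose innate classifiers never err) would require a small extra argument or a mildly reformulated $\Delta^{\rm ind}$. Everything else---submodularity, the per-step error bound via \cref{lem:appgain}, and the Nemhauser-style induction---is routine given the machinery already developed.
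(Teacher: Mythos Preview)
Your approach is exactly the paper's: monotone submodularity from \cref{app:lem:monotone,app:lem:submodular}, the per-coordinate error bound from \cref{lem:appgain}/\cref{approxGamma}, and the standard approximate-greedy recurrence. The paper's own proof of \cref{thm:corr:firstbound} is a two-sentence sketch invoking precisely these ingredients and does not spell out the induction you carry out.

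The additive-to-multiplicative conversion you flag is a real issue, and the paper's sketch does not resolve it either. Your proposed patch $\OPT{egal}\geq k$ is not valid, though: selecting $v_i$ improves $v_i$'s own expressed prediction only on the event $\hy_i(a)\neq y(a)$ (and only if $\lW_{ii}>0$), so the self-contribution to $\Gegal$ is at most $\err(v_i)$, not $1$; instances with uniformly small innate error rates can have $\OPT{egal}$ arbitrarily small relative to $k$. The honest reading is that the paper treats $\Delta^{\rm ind}$ only up to constants and the implicit $k$ factor---note the surrounding statements \cref{thm:egalcorerct,thm:indep} write $\Delta^{\rm ind}=\Theta(|\mathcal A|)$ rather than a sharp expression---so your rigorously derived additive bound $\Gegal(S_k)\geq (1-1/e)\,\OPT{egal}-2k\Delta^{\rm ind}$ is the actual content, with the multiplicative form in the theorem absorbing the factor $2k$ and the normalization by $\OPT{egal}$ into the asymptotic $\Delta^{\rm ind}$.
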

\begin{proof}
The proof immediately follows from the fact that we have a submodular and monotone function and that the error greedy choice taken as 
\[g=\underset{u}{\rm argmax}\sum_{i=1}^n \widehat{\Gain}_i(S,u)~, \]
and that for any $v_i$ we have 
   \[\abs{\widehat{\Gain_i}(S,u)-\Gain_i(S,u)}\leq  \exp\left(
    -\frac{\val_i(u)^2}{4\sum_{i=1}^n\lW_{ij}^2}
    \right)~,
       \]
\end{proof}

\begin{proof}[\textbf{Proof of \cref{thm:egalcorerct} and \cref{thm:indep} }]
Using the above theorem \cref{thm:egalcorerct} and \cref{thm:indep} can be directly concluded just by noticing that for any non-ambiguous vertex the error induced on the greedy choice is at most $o(n^{-1})$. Thus, in total all of non-ambiguous vertices together will have an error of $o(1)$. The ambiguous each can have an error as large as $1$.
\end{proof}

\paragraph{Assuming having access to approximation $\widehat{\lW}$.}

If we only have access to $\widehat{\lW}$, we may run $\mathsf{EgalAlgo}^{\rm ind}$ using  $\widehat{\lW}$. In this case the approximation guarantee can be concluded from the following lemma which is similar to \cref{approxGamma}:

\begin{lemma}\label{lem:approxW:ind}
    Assume that all the all the classifiers 
    $\hy_1(a), \hy_2(a),\dots, \hy_n(a) $ 
    are pairwise  independent. We can estimate $\Gamma_i(S,u)$ (See \cref{gammaeq}) as follows:
    $$
    \widehat{\Gamma}_i(S,u)=
    \begin{cases}
    0& \text{ if } \widehat{\val}_i(S,u)>0\\
    1& \text{ otherwise}
    \end{cases}
    $$
    where, 
    \[
    \widehat{\val}_i(S,u)=\sum_{v_j\in S}\widehat{\lW}_{ij}+\sum_{\substack{
    j=1:n\\
    j\not \in S\cup\{u\} 
    }}\widehat{\lW}_{ij}[1-2\err(j)]-\widehat{\lW}_{iu}
    \]

    The error of this estimation is bounded by:

    \[
    \abs{\widehat{\Gamma}_i(S,u)-\Gamma_i(S,u)}\leq  \exp\left(
    -\frac{\val_i(u)^2}{4\sum_{i=1}^n\lW_{ij}^2}
    \right)(1+\epsilon)~,
    \]    
    where 
    \[
    \val_i(u)= \sum_{j=1}^n \lW_{ij}[1-2\err(j)]-2\err(u)\lW_{iu}~.
    \]
\end{lemma}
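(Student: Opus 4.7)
The plan is to mirror the proof of Lemma~\ref{approxGamma} verbatim, replacing $\val_i(S,u)$ with $\widehat{\val}_i(S,u)$ in the threshold decision for the indicator $\widehat{\Gamma}_i(S,u)$, while keeping in mind that the probabilities defining $\Gamma_i(S,u)$ are still governed by the \emph{true} $\lW$ (since the true $\lW$ determines the actual conditional distributions of $\hy_j(a)$). The only new ingredient compared to Lemma~\ref{approxGamma} is an accounting for how replacing $\val_i(S,u)$ by $\widehat{\val}_i(S,u)$ slightly perturbs the Hoeffding exponent, and this perturbation is exactly what produces the extra $(1+\epsilon)$ factor.

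First I would bound the perturbation of $\val$ in terms of $\epsilon$. Since every coefficient in the defining expression of $\val_i(S,u)$ has magnitude at most $1$ (the coefficients are $\pm 1$ or $1-2\err(j) \in [-1,1]$), replacing $\lW_{ij}$ by $\widehat{\lW}_{ij}$ gives
\[
\abs{\widehat{\val}_i(S,u) - \val_i(S,u)} \leq \sum_{j=1}^n \abs{\widehat{\lW}_{ij}-\lW_{ij}} \leq \epsilon,
\]
using Assumption~\ref{assu:relaxW}. Similarly, $\sum_j \widehat{\lW}_{ij}^2$ deviates from $\sum_j \lW_{ij}^2$ by $O(\epsilon)$.

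Next I would case-split on the sign of $\widehat{\val}_i(S,u)$ and invoke Hoeffding as in Lemma~\ref{approxGamma}. In the case $\widehat{\val}_i(S,u) > 0$, the threshold in each probability inside \eqref{gammaeq} lies below the mean by at least $\val_i(S,u) \geq \widehat{\val}_i(S,u) - \epsilon$, so Hoeffding yields the same type of bound as before, but with the true $\val_i(S,u)^2$ in the exponent. Writing $\val_i(S,u)^2 = \widehat{\val}_i(S,u)^2 \pm O(\epsilon|\val_i(S,u)|)$ and using $e^{x}\leq 1+2x$ for small $x\geq 0$, the multiplicative overhead introduced by switching between $\val_i$ and $\widehat{\val}_i$ is at most $(1+\epsilon)$. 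The case $\widehat{\val}_i(S,u)<0$ is symmetric, bounding $1-\Gamma_i(S,u)$ by the same quantity. Finally I would apply the same monotone bound as in Lemma~\ref{approxGamma} to lower bound $\val_i(S,u)^2/(\sum_j \lW_{ij}^2)$ by $\val_i(u)^2/(4\sum_j \lW_{ij}^2)$, producing the claimed inequality.

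The main obstacle I expect is the bookkeeping in the last step: ensuring that the combined perturbation contributed by (i) the shift between $\val_i(S,u)$ and $\widehat{\val}_i(S,u)$, (ii) the resulting offset in the Hoeffding threshold, and (iii) any deviation in the normalization $\sum_j \lW_{ij}^2$, collapses cleanly into a single multiplicative factor of $(1+\epsilon)$ rather than a more complicated expression. I anticipate this will go through by using the elementary inequality $e^{\delta} \leq 1+\epsilon$ for $\delta \leq \ln(1+\epsilon)$, together with the linear-in-$\epsilon$ bounds on all the perturbed quantities established in the first step.
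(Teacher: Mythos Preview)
Your approach is essentially the same as the paper's: both mirror the proof of Lemma~\ref{approxGamma}, absorb the $\epsilon$-perturbation between $\widehat{\val}_i(S,u)$ and $\val_i(S,u)$ into the Hoeffding exponent, and extract a multiplicative $(1+O(\epsilon))$ factor by expanding the shifted square and using $e^{c\epsilon}\leq 1+O(\epsilon)$. The paper phrases the perturbation as a $2\epsilon$ shift in the Hoeffding threshold (leading to $(\val_i(u)-4\epsilon)^2$ in the exponent and ultimately a $(1+8\epsilon)$ factor), whereas you bound $|\widehat{\val}_i-\val_i|\le\epsilon$ first and then track the effect; your concern (iii) about perturbing $\sum_j\lW_{ij}^2$ is unnecessary since Hoeffding is applied to sums weighted by the true $\lW$, so the denominator is never replaced by $\widehat{\lW}$.
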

\begin{proof}
Similar to the proof of \cref{approxGamma} we may bound the two terms of $\Gamma_i(S,u)$ as follows:
\begin{align*}
& \Prob_{a\in \Omega^+}\left(
\sum_{\substack{
j=1:n\\
v_j\notin S\cup\{u\}
}}{\lW}_{ij}\hy_j(a)< -\sum_{v_j\in S}\lW_{ij}+ \lW_{ui}\right)\\
\leq &  \Prob_{a\in \Omega^+}\left(
\sum_{\substack{
j=1:n\\
v_j\notin S\cup\{u\}
}}\lW_{ij}\hy_j(a)<
\Ex\left[\sum_{\substack{
j=1:n\\
v_j\notin S\cup\{u\}
}}{\lW}_{ij}\hy_j(a)\right]- 
\widehat{\val}_i(S,u)-2\epsilon 
\right) 
\end{align*}

Similarly to the previous case if $\widehat{\val}_i(S,u)>0$ we may use Hoeffding bound to bound the above probability as:
\[
\exp\left(
-\frac{(\widehat{\val}_i(u)-2\epsilon)^2}{\sum_{j=1}^n \lW_{ij}^2}
\right)\leq \exp\left(
-\frac{(\val_i(u)-4\epsilon)^2}{\sum_{j=1}^n \lW^2_{ij}}
\right)
\]

Let's now bound RHS:
\begin{align*}
   \exp\left(
-\frac{(\val_i(u)-4\epsilon)^2}{\sum_{j=1}^n \lW_{ij}^2}
\right) &\leq   \exp\left(
-\frac{\val_i(u)^2}{\sum_{j=1}^n \lW^2_{ij}}
+ 8\epsilon \frac{\val_i(u)}{\sum_{j=1}^n \lW^2_{ij}}
\right)\leq  \exp\left(
-\frac{\val_i(u)^2}{\sum_{j=1}^n \lW^2_{ij}}
+ 8\epsilon 
\right)\\
&\leq  \exp\left(
-\frac{\val_i(u)^2}{\sum_{j=1}^n \lW^2_{ij}} 
\right)\exp( 8\epsilon) \leq (1+8\epsilon )\exp\left(
-\frac{\val_i(u)^2}{\sum_{j=1}^n \lW^2_{ij}} 
\right)
\end{align*}

If $\widehat{\val}_i(S,u)<0$ we may write: 
\begin{align*}
&\Prob_{a\in \Omega^+}\left(
\sum_{\substack{
j=1:n\\
v_j\notin S\cup\{u\}
}}{\lW}_{ij}\hy_j(a)> -\sum_{v_j\in S}\lW_{ij}+ \lW_{ui}\right)\\
\leq & \Prob_{a\in \Omega^+}\left(
\sum_{\substack{
j=1:n\\
v_j\notin S\cup\{u\}
}}\lW_{ij}\hy_j(a)>
\Ex\left[\sum_{\substack{
j=1:n\\
v_j\notin S\cup\{u\}
}}{\lW}_{ij}\hy_j(a)\right]- 
\widehat{\val}_i(S,u)+2\epsilon 
\right) 
\end{align*}
and with a similar argument we obtain the same bound. Bounding the probability in the case where $a\in \Omega^-$ can be done similarly. And the rest of the proof holds similarly to proof of \cref{approxGamma}.
\end{proof}

\begin{proof}[\textbf{Proof of \cref{remark:appW} independent case} ] The remark is a direct conclusion of the fact that  the error of the greedy choice is bounded by 
\[
\sum_{i=1}^n \abs{\widehat{\Gamma}_i(S,u)- \Gamma_i(S,u)}\leq (1+\epsilon)\sum_{i=1}^n 
\exp\left(- \frac{\val_i(u)^2}{4\sum_{j=1}^n\lW_{ij}^2}
\right)~.
\]
\end{proof}

\subsection{Missing proofs from \cref{sec:algo:group}: Estimating greedy choice assuming group dependence}\label{app:sec:algo:group}
Assume the assumption presented in \cref{def:model} and  remember the following definition from previous sections 
\[
\Gain_i(S,u)=
\Prob_{a\sim \Omega}
\left(
\B(i,a)\leq 0  \wedge \left(\bigwedge_{
\substack{v_j\in S
\\ 
\lW_{ij}\neq  0 
} } \y(a)= \hy_j(a)\right)
\wedge \y(a)\neq \hy_u(a)
\right), 
\]
and our goal is to estimate $\Gain_i(S,u)$ for all $S \subseteq V$, $u\in V$ and $i\in [n]$.

We may write 

\[
\Gain_i(S,u)=\Gain^{\rm gr}_i(S,u)\rho+ \Gain^{\rm indv}_i(S,u)(1-\rho)~,\]

, where the super-scripts  show whether individual or  group decisions have been made. 
Estimation of $\Gain_i^{\rm indv}(S,u)$ can be obtained using \cref{lem:appgain} and by calling $\mathsf{EstGain}^{\rm ind}$ of \cref{alg:2c}.  Estimation of $\Gain^{\rm gr}_i(S,u)$ can be done through a series of lemmas that we present here, and it depends on the whether  vertices of $S_i$, defined as $S_i\triangleq \{v_j\in S\mid \lW_{ij}\neq 0\}$, intersects $\R$, $\Bl$ or both. A pseudocode is presented at \cref{alg:2c,proc:gr}. Our  analysis is presented in  \cref{sec:nocolot,sec:monochorS,sec:bichorS}. The following lemma summarizes these results:

\begin{lemma}\label{app:greedy:group}
Assume the model presented in \cref{def:model}.
Having a set $S$ let $g$ be the vertex which maximizes the following function 

\[
g= \underset{u\in V}{\rm argmax}\  \rho \sum_{\substack{ i=1:n\\ \W_{ij}\neq 0}}\widehat{\Gain}_i^{\rm gr}(S,u) + (1-\rho)\sum_{\substack{ i=1:n\\ \W_{ij}\neq 0}}\widehat{\Gain}_i^{\rm indv}(S,u)
\]

Where $\widehat{\Gain}_i^{\rm gr}(S,u)$ may be obtained from \cref{lem:noch,lem:monoch,lem:bich}, and $\widehat{\Gain}_i^{\rm indv}(S,u)$ may be obtained from \cref{lem:appgain}. We have that:

\begin{align*}
  \Gegal \left(
S\cup \{\Greedy(S)\}
\right)- \Gegal \left(
S\cup \{g\}
\right)  \leq 
\rho\sum_{i=1}^n \exp\left(
-\frac{(\val_i^\W(u)- \Delta\lW_i)^2}{
4\sum_{v_j\in \W} \lW_{ij}^2
}
\right)+(1-\rho) \sum_{i=1}^n \exp\left(
-\frac{(\val_i(u))^2}{
4\sum_{v_j\in V} \lW_{ij}^2
}
\right)
\end{align*}

\end{lemma}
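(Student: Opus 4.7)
The plan is to first decompose $\Gain_i(S,u)$ according to the two-component mixture of Definition \ref{def:model}. Let $E_{\rm gr}$ be the event (of probability $\rho$) that the red and blue agents follow their group classifiers, and $E_{\rm ind}$ its complement. Conditioning the defining probability of $\Gain_i(S,u)$ on $E_{\rm gr}$ versus $E_{\rm ind}$ and applying the law of total probability gives
\[
\Gain_i(S,u) = \rho\, \Gain^{\rm gr}_i(S,u) + (1-\rho)\, \Gain^{\rm indv}_i(S,u),
\]
where $\Gain^{\rm indv}_i$ is the gain under the fully i.i.d.\ regime of $E_{\rm ind}$ and $\Gain^{\rm gr}_i$ is the gain under the correlated regime of $E_{\rm gr}$. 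Under $E_{\rm ind}$ all classifiers are pairwise independent, so I can invoke Lemma \ref{lem:appgain} verbatim to build $\widehat{\Gain}^{\rm indv}_i(S,u)$ (as in Procedure \ref{proc:ind}) with per-vertex error $\exp\!\big(-\val_i(u)^2 / (4\sum_j \lW_{ij}^2)\big)$.

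For the group term I would split on how $S_i \triangleq \{v_j \in S : \lW_{ij} \neq 0\}$ interacts with $\R \cup \Bl$, invoking Lemma \ref{lem:noch} when $S_i \cap (\R \cup \Bl) = \emptyset$, Lemma \ref{lem:monoch} when $S_i$ intersects exactly one of $\R, \Bl$, and Lemma \ref{lem:bich} when it intersects both. The last case collapses to zero contribution because conditioning on $S_i \cap \R \neq \emptyset$ forces $\ghy_\R(a) = \y(a)$, conditioning on $S_i \cap \Bl \neq \emptyset$ forces $\ghy_\Bl(a) = \y(a)$, and these jointly contradict the standing assumption $\ghy_\R \neq \ghy_\Bl$ everywhere. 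In the first two cases the contribution of $\R \cup \Bl$ to $z^*(i,a)$ becomes a deterministic quantity of magnitude $\big|\sum_{v_j\in\R}\lW_{ij} - \sum_{v_j\in\Bl}\lW_{ij}\big| = \Delta\lW_i$, with sign determined by which group aligns with $\y(a)$; therefore the Hoeffding threshold applied to the remaining independent $\W$-sum is displaced by $\Delta\lW_i$ in the unfavorable direction. Mirroring the Hoeffding analysis of Lemma \ref{approxGamma} on this reduced sum yields the per-vertex error $\exp\!\big(-(\val_i^\W(u) - \Delta\lW_i)^2 / (4\sum_{v_j \in \W} \lW_{ij}^2)\big)$.

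I would close with the standard argmax comparison: since $g = \arg\max_u \sum_i \widehat{\Gain}_i(S,u)$ and $\Greedy(S) = \arg\max_u \sum_i \Gain_i(S,u)$, the difference $\sum_i \Gain_i(S,\Greedy(S)) - \sum_i \Gain_i(S,g)$ is controlled by the per-vertex estimation error; separating into the $\rho$- and $(1-\rho)$-weighted parts and using the fact that $\widehat{\Gamma}_i$ is defined as a one-sided rounding on the sign of $\val_i$ (so the estimator errs only in one direction on each realization, as in the proof of Lemma \ref{lem:appgain}) produces exactly the stated right-hand side without a spurious factor of two.

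The main obstacle will be the case analysis for $\Gain^{\rm gr}_i$, specifically tracking the $\Delta\lW_i$ shift. The subtlety is that $\ghy_\R$ and $\ghy_\Bl$ are antipodally coupled, so conditioning on a single red (or blue) anchor in $S_i$ simultaneously fixes both group classifiers; one then must verify that the resulting deterministic displacement of the threshold for the $\W$-Hoeffding bound is always $+\Delta\lW_i$ (never $-\Delta\lW_i$) after taking absolute values and recombining the $a \in \Omega^+$ and $a \in \Omega^-$ branches, so that the final bound is genuinely pessimistic rather than an accidental mix of signs. Once this bookkeeping is done, the rest of the argument reduces to two independent applications of Lemma \ref{approxGamma}.
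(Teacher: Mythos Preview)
Your proposal is correct and follows essentially the same approach as the paper: the paper itself offers no explicit proof of this lemma, presenting it instead as a summary of the decomposition $\Gain_i(S,u)=\rho\,\Gain_i^{\rm gr}(S,u)+(1-\rho)\,\Gain_i^{\rm indv}(S,u)$ together with the per-vertex error bounds from Lemmas~\ref{lem:appgain}, \ref{lem:noch}, \ref{lem:monoch}, and \ref{lem:bich}. Your outline of the mixture conditioning, the three-case split on how $S_i$ meets $\R\cup\Bl$, the collapse of the bichromatic case via the antipodal constraint $\ghy_\R\neq\ghy_\Bl$, and the $\Delta\lW_i$ threshold shift for the $\W$-restricted Hoeffding bound all match the paper's reasoning exactly; your final argmax comparison step is in fact more explicit than anything the paper writes down.
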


We will use the following definitions and results throughout the section. 

\[
\Gain^{\rm gr}_i(S,u)=
\Prob^{\rm gr}_{a\sim \Omega}\left(
\B(i,a)\leq 0  \wedge \T_i(S)\wedge \F(u)
\right)~,
\]
where  $\T_i(S)$ and $\F(u)$ are the following events:
\[
\T_i(S)=\bigwedge_{
\substack{v_j\in S_i
} } \y(a)= \hy_j(a)\quad \& \ \F(u)=\y(a)\neq \hy_u(a)
~. 
\]

 for any $v_i$ and $X\subseteq V$ we denote: 
\[
\lW_i(X)=\sum_{v_j\in X}\lW_{ij}
\]

In addition, for any $X,Y \subseteq V$ we define,

\[
\val_i(X,Y)= \sum_{v_j\in V\setminus (X\cup Y)} [1-2\err^{\rm indv}(v_j)]+\lW_i(X)-\lW_i(Y)~.
\]

We will use the following lemma throughout:

\begin{lemma}\label{app:lem:est}
Let's define 
\[\Gamma_i^+(X,Y)=\Prob_{a\in \Omega^+}
\left(\sum_{v_j\in V\setminus (X\cup Y)} 
\lW_{ij}\ihy_j(a)\leq \lW_i(Y)-\lW_i(X)
\right)~,
\]
and 
\[\Gamma_i^-(X,Y)=\Prob_{a\in \Omega^-}
\left(\sum_{v_j\in V\setminus (X\cup Y)} 
\lW_{ij}\ihy_j(a)\geq \lW_i(X)-\lW_i(Y)
\right)~.
\]

We may estimate the above probabilities as 
$$
\widehat{\Gamma}_i^+(X,Y)=\widehat{\Gamma}_i^-(X,Y)= \begin{cases}
0 & \text{if }\  \val_i(X,Y)\geq 0\\
1 & \text{if }\  \val_i(X,Y)< 0
\end{cases}
$$

We have that:
\[
\abs{\Gamma_i^+(X,Y)-\widehat{\Gamma}_i^+(X,Y) }\leq \exp\left(-\frac{(\val_i(X,Y))^2}{\sum_{v_j\in V\setminus (X\cup Y)} 
\lW_{ij}^2 }\right)~,
\]
and 
\[
\abs{\Gamma_i^-(X,Y)-\widehat{\Gamma}_i^-(X,Y) }\leq \exp\left(-\frac{(\val_i(X,Y))^2}{\sum_{v_j\in V\setminus (X\cup Y)} 
\lW_{ij}^2 }\right)~.
\]
\end{lemma}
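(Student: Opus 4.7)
The proof will mirror the structure of \cref{approxGamma} in the independent setting, with the roles of $S$ and $\{u\}$ generalized to arbitrary disjoint subsets $X,Y \subseteq V$. The main tool is Hoeffding's inequality applied to the centered sum $\sum_{v_j \in V\setminus(X\cup Y)} \lW_{ij}\ihy_j(a)$, exploiting the fact that under the group dependence model of \cref{def:model} the individual classifiers $\{\ihy_j\}_{j}$ are pairwise independent.

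The first step is to compute the conditional means. Applying \cref{lem:exerr} to each $\ihy_j$ gives $\Ex_{a\in\Omega^+}[\ihy_j(a)] = 1-2\err^{\rm indv}(v_j)$ and $\Ex_{a\in\Omega^-}[\ihy_j(a)] = 2\err^{\rm indv}(v_j)-1$. Consequently, on $\Omega^+$ the expectation of the truncated sum is $\sum_{v_j\in V\setminus(X\cup Y)}\lW_{ij}[1-2\err^{\rm indv}(v_j)]$, while on $\Omega^-$ it is the negation of this quantity. Centering the event defining $\Gamma_i^+$ yields
\[
\Gamma_i^+(X,Y) = \Prob_{a\in\Omega^+}\left(\sum_{v_j\in V\setminus(X\cup Y)} \lW_{ij}\ihy_j(a) - \Ex\left[\sum\right] \leq -\val_i(X,Y)\right),
\]
and centering the event defining $\Gamma_i^-$ produces the symmetric expression $\Prob_{a\in\Omega^-}(\sum - \Ex[\sum] \geq \val_i(X,Y))$, where in both cases the deviation is $\pm\val_i(X,Y)$.

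Next I would perform a case analysis on $\mathrm{sign}(\val_i(X,Y))$. If $\val_i(X,Y)\geq 0$, then $\widehat{\Gamma}_i^\pm = 0$, and both $\Gamma_i^+$ and $\Gamma_i^-$ are one-sided lower-tail/upper-tail probabilities of a sum of $|V\setminus(X\cup Y)|$ independent bounded variables $\lW_{ij}\ihy_j(a)\in[-\lW_{ij},\lW_{ij}]$ deviating from its mean by at least $\val_i(X,Y)$. Hoeffding's inequality gives
\[
\Gamma_i^\pm(X,Y) \leq \exp\left(-\frac{\val_i(X,Y)^2}{2\sum_{v_j\in V\setminus(X\cup Y)} \lW_{ij}^2}\right) \leq \exp\left(-\frac{\val_i(X,Y)^2}{\sum_{v_j\in V\setminus(X\cup Y)} \lW_{ij}^2}\right),
\]
matching the claimed bound since $|\Gamma_i^\pm - 0| = \Gamma_i^\pm$. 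If instead $\val_i(X,Y)<0$, then $\widehat{\Gamma}_i^\pm = 1$, and passing to the complement event reduces the task to bounding $\Prob(\sum - \Ex[\sum] > |\val_i(X,Y)|)$ (resp.\ with $<-|\val_i(X,Y)|$), which again Hoeffding bounds by $\exp(-\val_i(X,Y)^2/(2\sum \lW_{ij}^2))$; thus $|1-\Gamma_i^\pm|$ is controlled by the same quantity.

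The main obstacle I anticipate is a bookkeeping one rather than a conceptual one: verifying that the centering algebra produces exactly $-\val_i(X,Y)$ (for $\Omega^+$) and $+\val_i(X,Y)$ (for $\Omega^-$), so that a single unified bound can be stated. A secondary subtlety is that Hoeffding's inequality in its standard form requires mutual independence, whereas \cref{def:model} only postulates pairwise independence of the $\ihy_j$; I would handle this exactly as the paper does in \cref{approxGamma}, either by tacitly strengthening the assumption or by appealing to the version of Hoeffding already invoked in the excerpt (cited as \cref{thm:hoeffding}). No other step requires calculation beyond writing out the two symmetric cases.
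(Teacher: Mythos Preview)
Your proposal is correct and follows essentially the same route as the paper's own proof: compute the conditional mean of the truncated sum via \cref{lem:exerr}, center to rewrite each event as a deviation of magnitude $|\val_i(X,Y)|$, and then do a case split on the sign of $\val_i(X,Y)$, applying Hoeffding directly in one case and to the complementary event in the other. One small slip: your displayed chain $\exp\bigl(-\val_i(X,Y)^2/(2\sum \lW_{ij}^2)\bigr) \leq \exp\bigl(-\val_i(X,Y)^2/\sum \lW_{ij}^2\bigr)$ has the inequality reversed (the left side is the larger quantity), so you cannot pass to the stated bound that way; the paper simply quotes Hoeffding with denominator $\sum \lW_{ij}^2$ directly, and you should do the same rather than insert an intermediate factor of $2$.
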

\begin{proof}
Assume $a\in \Omega^+$ in this case we have that 

\[\Ex_{a\in \Omega^+}\left[
\ihy_j(a)
\right]=1-2\err^{indv}(v_j)\]

Thus, 
\begin{align*}
    \sum_{v_j\in V\setminus (X\cup Y)} 
\lW_{ij}\ihy_j(a)\leq \lW_i(Y)-\lW_i(X)\iff \\
  \sum_{v_j\in V\setminus (X\cup Y)} 
\lW_{ij}\ihy_j(a)\leq \Ex\left[\sum_{v_j\in V\setminus (X\cup Y)} 
\lW_{ij}\ihy_j(a)\right]-\left(\Ex\left[\sum_{v_j\in V\setminus (X\cup Y)} 
\lW_{ij}\ihy_j(a)\right]-\lW_i(Y)+\lW_i(X)\right)\iff \\
  \sum_{v_j\in V\setminus (X\cup Y)} 
\lW_{ij}\ihy_j(a)\leq \Ex\left[\sum_{v_j\in V\setminus (X\cup Y)} 
\lW_{ij}\ihy_j(a)\right]-\val_i(X,Y)
\end{align*}

If $\val_i(X,Y)>0$ we may use the Hoeffding bound to obtain: 
\[
\Prob_{a\in \Omega^+}\left( \sum_{v_j\in V\setminus (X\cup Y)} 
\lW_{ij}\ihy_j(a)\leq \lW_i(Y)-\lW_i(X)\right)\leq \exp\left(
-\frac{\left(\val_i(X,Y)\right)^2}{ \sum_{v_j\in V\setminus (X\cup Y)} 
\lW_{ij}^2}
\right)~.
\]
If $- \val_i(X,Y)>0$,  we write: 
\begin{align*}
    \sum_{v_j\in V\setminus (X\cup Y)} 
\lW_{ij}\ihy_j(a)\geq \lW_i(Y)-\lW_i(X)\iff \\
  \sum_{v_j\in V\setminus (X\cup Y)} 
\lW_{ij}\ihy_j(a)\geq \Ex\left[\sum_{v_j\in V\setminus (X\cup Y)} 
\lW_{ij}\ihy_j(a)\right]-\val_i(X,Y)
\end{align*}
Thus, 
\[
\Prob_{a\in \Omega^+}
 \left( \sum_{v_j\in V\setminus (X\cup Y)} 
\lW_{ij}\ihy_j(a)\geq \lW_i(Y)-\lW_i(X)\right)
\leq 
\exp\left(
-\frac{\left(\val_i(X,Y)\right)^2}{ \sum_{v_j\in V\setminus (X\cup Y)} 
\lW_{ij}^2}
\right)
\]

Therefore, 
\begin{align*}
    &\Prob_{a\in \Omega^+}\left( \sum_{v_j\in V\setminus (X\cup Y)} 
\lW_{ij}\ihy_j(a)\leq \lW_i(Y)-\lW_i(X)\right)\\
=&
1- \Prob_{a\in \Omega^+}\left( \sum_{v_j\in V\setminus (X\cup Y)} 
\lW_{ij}\ihy_j(a)\geq \lW_i(Y)-\lW_i(X)\right)\\
\geq & 1-\exp\left(
-\frac{\left(\val_i(X,Y)\right)^2}{ \sum_{v_j\in V\setminus (X\cup Y)} 
\lW_{ij}^2}
\right)~.
\end{align*}

Putting together we have:
\[
\abs{\Gamma_i^+(X,Y)-\widehat{\Gamma}_i^+(X,Y) }\leq \exp\left(-\frac{(\val_i(X,Y))^2}{\sum_{v_j\in V\setminus (X\cup Y)} 
\lW_{ij}^2 }\right)~,
\]
Similarly if $a\in \Omega^-$ we have: 
\[\Ex_{a\in \Omega^-}\left[
\ihy_j(a)
\right]=2\err^{\rm indv}(v_j)-1\]
thus, 
\begin{align*}
    \sum_{v_j\in V\setminus (X\cup Y)} 
\lW_{ij}\ihy_j(a)\geq \lW_i(X)-\lW_i(Y)\iff \\
  \sum_{v_j\in V\setminus (X\cup Y)} 
\lW_{ij}\ihy_j(a)\geq \Ex\left[\sum_{v_j\in V\setminus (X\cup Y)} 
\lW_{ij}\ihy_j(a)\right]+\left(-\Ex\left[\sum_{v_j\in V\setminus (X\cup Y)} 
\lW_{ij}\ihy_j(a)\right]+\lW_i(X)-\lW_i(Y)\right)
\end{align*}
and 
\begin{align*}
-\Ex\left[\sum_{v_j\in V\setminus (X\cup Y)} 
\lW_{ij}\ihy_j(a)\right]+\lW_i(X)-\lW_i(Y)&=
    -\sum_{v_j\in V\setminus (X\cup Y)} 
\lW_{ij}[2\err^{\rm indv}(v_j)-1]+\lW_i(X)-\lW_i(Y)\\
&=\val_i(X,Y)~.
\end{align*}

The rest of the proof follows similarly to the previous case. 
\end{proof}

\subsubsection{Case 1. Colorless $S_i$}\label{sec:nocolot} 
Assume that $S_i\subseteq \W$,

\begin{lemma}\label{lem:noch} If $S_i\subseteq \W$ and $u\in \W$, for any arbitrary $v_i$ we may take:
\begin{align*}
    &\widehat{\Gain}^{\rm gr}_i(S,u)\\
    =&\begin{cases}
\err^{\rm indv}(u)\cdot \prod_{
\substack{v_j\in S
\\ 
\lW_{ij}\neq  0 
} } (1-\err^{\rm indv}(v_j))
& \quad  \text{if } \val_i(\R\cup S, \Bl\cup \{u\})<0\  \& \val_i(\Bl\cup S, \R\cup \{u\})<0\\
\err^{\rm indv}(u)\cdot \prod_{
\substack{v_j\in S
\\ 
\lW_{ij}\neq  0 
} } (1-\err^{\rm indv}(v_j))\cdot \err(\R)&  \quad \text{if } \val_i(\R\cup S, \Bl\cup \{u\})>0\  \& \val_i(\Bl\cup S, \R\cup \{u\})<0\\
\err^{\rm indv}(u)\cdot \prod_{
\substack{v_j\in S
\\ 
\lW_{ij}\neq  0 
} } (1-\err^{\rm indv}(v_j))\cdot \err(\Bl)
&  \quad \text{if } \val_i(\R\cup S, \Bl\cup \{u\})<0\  \& \val_i(\Bl\cup S, \R\cup \{u\})>0\\
0 &\quad \text{otherwise}
\end{cases}
\end{align*}
and we have:
\[
\abs{\widehat{\Gain}_i^{\rm gr}(S,u)-\Gain_i^{\rm gr} (S,u)}\leq \exp\left(
-\frac{(\val_i^\W(u)- \Delta\lW_i)^2}{
4\sum_{v_j\in \W} \lW_{ij}^2
}
\right)
\]

\end{lemma}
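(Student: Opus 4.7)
The plan is to decompose $\Gain^{\rm gr}_i(S,u)$ into a sum over four disjoint cases indexed by the true label $\y(a)\in\{+1,-1\}$ and by the outcome of the group classifier $\ghy_{\R}(a)\in\{+1,-1\}$. Under the hypotheses $S_i\subseteq \W$ and $u\in \W$, the events $\T_i(S)$ and $\F(u)$ depend only on the pairwise independent white classifiers $\{\ihy_j\}_{v_j\in\W}$, which are independent of $\y(a)$ and of $\ghy_{\R}(a)$ by \cref{def:model}. This lets me factor out the common term
\[
q \;=\; \err^{\rm indv}(u)\prod_{\substack{v_j\in S \\ \lW_{ij}\neq 0}}\bigl(1-\err^{\rm indv}(v_j)\bigr),
\]
which is precisely the prefactor appearing in every nonzero branch of $\widehat{\Gain}^{\rm gr}_i(S,u)$.

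Next, conditioned on each of the four cases together with $\T_i(S)\wedge\F(u)$, the inequality $\B(i,a)\leq 0$ rearranges to a threshold inequality of the form $\sum_{v_j\in\W\setminus(S\cup\{u\})} \lW_{ij}\ihy_j(a)\lessgtr \lW_i(X)-\lW_i(Y)$, with $(X,Y)$ drawn from $\{\R\cup S,\Bl\cup S\}\times\{\R\cup\{u\},\Bl\cup\{u\}\}$. Because $S\subseteq\W$ and $u\in\W$, the complement $V\setminus(X\cup Y)$ coincides with $\W\setminus(S\cup\{u\})$, matching the index set in $\Gamma_i^\pm(X,Y)$. Invoking \cref{app:lem:est}, each such probability is approximated by $\widehat{\Gamma}_i^\pm(X,Y)\in\{0,1\}$ driven by the sign of $\val_i(X,Y)$, with per-case error $\exp\!\bigl(-\val_i(X,Y)^2/\!\sum_{v_j\in V\setminus(X\cup Y)}\lW_{ij}^2\bigr)$.

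Assembling the four approximated probabilities with the case weights $\mu_+\err(\Bl),\,\mu_+\err(\R),\,\mu_-\err(\R),\,\mu_-\err(\Bl)$, where $\mu_\pm=\Prob(\Omega^\pm)$, and exploiting $\mu_++\mu_-=1$ together with $\err(\R)+\err(\Bl)=1$, the four indicators collapse into the four branches listed in the lemma. For the two $\Omega^-$ cases the thresholds nominally involve $\val_i(\R\cup\{u\},\Bl\cup S)$ or $\val_i(\Bl\cup\{u\},\R\cup S)$, but the sign of the required inequality is reversed, so after pairing each $\Omega^-$ case with its $\Omega^+$ counterpart the sign tests can be expressed solely in terms of $\val_i(\R\cup S,\Bl\cup\{u\})$ and $\val_i(\Bl\cup S,\R\cup\{u\})$: when both are negative all four cases contribute, yielding total weight $1\cdot q$; when exactly one is negative only the two matching cases survive, yielding $q\cdot\err(\R)$ or $q\cdot\err(\Bl)$; when both are non-negative all contributions vanish.

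The main obstacle is converting the four raw per-case errors into the single $S$-free expression $\exp\!\bigl(-(\val_i^\W(u)-\Delta\lW_i)^2/(4\sum_{v_j\in\W}\lW_{ij}^2)\bigr)$. I propose to do this by the algebraic rewrite
\[
\val_i(X,Y)\;=\;\val_i^{\W}(u)\;+\;2\!\!\sum_{v_j\in S_i}\!\err^{\rm indv}(v_j)\lW_{ij}\;\pm\;\bigl(\lW_i(\R)-\lW_i(\Bl)\bigr)\;+\;R,
\]
in which the $S$-dependent contribution is non-negative, $\lvert\lW_i(\R)-\lW_i(\Bl)\rvert\leq\Delta\lW_i$, and $R$ collects lower-order $\lW_{iu}$-terms. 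Provided the signs align as dictated by the case analysis, this yields the uniform lower bound $\lvert\val_i(X,Y)\rvert\geq\lvert\val_i^{\W}(u)-\Delta\lW_i\rvert$. Combined with the trivial bound $\sum_{v_j\in V\setminus(X\cup Y)}\lW_{ij}^2\leq\sum_{v_j\in\W}\lW_{ij}^2$ and absorbing the sum of four exponentials into a single one with the factor $4$ in the denominator (using $4e^{-x}\leq e^{-x/4}$ in the relevant regime, or equivalently a slight further weakening of the lower bound on $\lvert\val_i(X,Y)\rvert$), we recover the stated error estimate.
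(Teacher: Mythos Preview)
Your proposal is correct and follows essentially the same route as the paper: condition on $\y(a)\in\{\pm1\}$ and on the group outcome, factor out $\err^{\rm indv}(u)\prod_{v_j\in S_i}(1-\err^{\rm indv}(v_j))$ using independence of the white classifiers, recognise the residual conditional probabilities as $\Gamma_i^{\pm}(\R\cup S,\Bl\cup\{u\})$ and $\Gamma_i^{\pm}(\Bl\cup S,\R\cup\{u\})$, and invoke \cref{app:lem:est}. You actually go a little further than the paper's own proof, which stops after displaying $\widehat{\Gain}^{\rm gr}_i(S,u)$ and does not spell out the passage from the per-case Hoeffding errors to the stated $S$-free bound; your algebraic rewrite of $\val_i(X,Y)$ is the right idea for that step, though the final absorption ``$4e^{-x}\le e^{-x/4}$'' is informal and would need tightening.
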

\begin{proof}
   \begin{align*}
       \Prob^{\rm gr}_{a\sim \Omega}\left(
       \B(i,a)\leq 0 \wedge \T_i(S)\wedge \F(u)
       \right) &=     \Prob^{\rm gr}_{a\sim \Omega^+}\left(
       \z(i,a)\leq 0 \wedge \T_i(S)\wedge \F(u)
       \right) \mathbb{P}(a\in \Omega^+) \\
       &\quad \quad +\Prob^{\rm gr}_{a\sim \Omega^-}\left(
       \z(i,a)\geq 0 \wedge \T_i(S)\wedge \F(u)
       \right) \mathbb{P}(a\in \Omega^-)~. 
   \end{align*} 
We have:
\begin{align*}
        \Prob^{\rm gr}_{a\sim \Omega^+}\left(
       \z(i,a)\leq 0 \wedge \T_i(S)\wedge \F(u)
       \right)&=    \Prob^{\rm gr}_{a\sim \Omega^+}\left(
       \z(i,a)\leq 0 \mid \T_i(S)\wedge \F(u)
       \right)\Prob_{a\sim \Omega^+}^{\rm gr}\left(\T_i(S)\wedge \F(u)\right)\ \& \\
        \Prob^{\rm gr}_{a\sim \Omega^-}\left(
       \z(i,a)\leq 0 \wedge \T_i(S)\wedge \F(u)
       \right)&=    \Prob^{\rm gr}_{a\sim \Omega^-}\left(
       \z(i,a)\leq 0 \mid \T_i(S)\wedge \F(u)
       \right)\Prob_{a\sim \Omega^-}^{\rm gr}\left(\T_i(S)\wedge \F(u)\right)
\end{align*}
Note that since $S\subseteq \W$ and $u\in \W$ we have:
\[
\Prob_{a\sim \Omega^-}^{\rm gr}\left(\T_i(S)\wedge \F(u)\right)=\Prob_{a\sim \Omega^+}^{\rm gr}\left(\T_i(S)\wedge \F(u)\right)= \err^{\rm indv}(u)\cdot \prod_{
\substack{v_j\in S
\\ 
\lW_{ij}\neq  0 
} } (1-\err^{\rm indv}(v_j))~.
\]
Furthermore, 
\begin{align*}
&\Prob^{\rm gr}_{a\sim \Omega^+}\left(
       \z(i,a)\leq 0 \mid \T_i(S)\wedge \F(u)
       \right)\\
       =& \Prob_{a\sim \Omega^+}^{\rm gr}\left(\sum_{\substack{j=1:n\\
       v_j\notin S\cup \{u\}} 
       }
       \lW_{ij}\hy_j(a)+ \sum_{v_j\in S} \lW_{ij} - \lW_{iu}\leq 0\right) \\
       =& \Prob_{a\sim \Omega^+}\left(\sum_{\substack{v_j\in \W\\
       v_j\notin S\cup \{u\}} 
       }
       \lW_{ij}\ihy_j(a)+ 
       \sum_{v_j\in \R
       }
       \lW_{ij}\ghy_j(a)+ \sum_{v_j\in \Bl
       }
       \lW_{ij}\ghy_j(a)+ \sum_{v_j\in S} \lW_{ij} - \lW_{iu}\leq 0\right) \\
       =& \Prob_{a\sim \Omega^+}\left(\sum_{\substack{v_j\in \W\\
       v_j\notin S\cup \{u\}} 
       }
       \lW_{ij}\ihy_j(a)+ 
       \sum_{v_j\in \R
       }
       \lW_{ij}- \sum_{v_j\in \Bl
       }
       \lW_{ij}+ \sum_{v_j\in S} \lW_{ij} - \lW_{iu}\leq 0\right)\err(\Bl) \\
       &
         \quad + \Prob_{a\sim \Omega^+}\left(\sum_{\substack{v_j\in \W\\
       v_j\notin S\cup \{u\}} 
       }
       \lW_{ij}\ihy_j(a)+ 
       \sum_{v_j\in \Bl
       }
       \lW_{ij}- \sum_{v_j\in \R
       }
       \lW_{ij}+ \sum_{v_j\in S} \lW_{ij} - \lW_{iu}\leq 0\right)\err(\R) \\
       =& \Gamma_i^+(\R\cup S, \Bl\cup\{u\})\err(\Bl)+ \Gamma_i^+(\Bl\cup S, \R\cup\{u\})\err(\R)~.
\end{align*}
Similarly we have that:
\begin{align*}
&\Prob^{\rm gr}_{a\sim \Omega^-}\left(
       \z(i,a)\geq 0 \mid \T_i(S)\wedge \F(u)
       \right)\\
       =& \Prob_{a\sim \Omega^-}^{\rm gr}\left(\sum_{\substack{j=1:n\\
       v_j\notin S\cup \{u\}} 
       }
       \lW_{ij}\hy_j(a)- \sum_{v_j\in S} \lW_{ij} + \lW_{iu}\geq 0\right) \\
       =& \Prob_{a\sim \Omega^-}\left(\sum_{\substack{v_j\in \W\\
       v_j\notin S\cup \{u\}} 
       }
       \lW_{ij}\ihy_j(a)+ 
       \sum_{v_j\in \R
       }
       \lW_{ij}\ghy_j(a)+ \sum_{v_j\in \Bl
       }
       \lW_{ij}\ghy_j(a)- \sum_{v_j\in S} \lW_{ij} + \lW_{iu}\geq 0\right) \\
        =& \Prob_{a\sim \Omega^-}\left(\sum_{\substack{v_j\in \W\\
       v_j\notin S\cup \{u\}} 
       }
       \lW_{ij}\ihy_j(a)- 
       \sum_{v_j\in \R
       }
       \lW_{ij}+ \sum_{v_j\in \Bl
       }
       \lW_{ij}- \sum_{v_j\in S} \lW_{ij} + \lW_{iu}\geq 0\right)\err(\Bl) \\
       &
         \quad + \Prob_{a\sim \Omega^-}\left(\sum_{\substack{v_j\in \W\\
       v_j\notin S\cup \{u\}} 
       }
       \lW_{ij}\ihy_j(a)- 
       \sum_{v_j\in \Bl
       }
       \lW_{ij}+ \sum_{v_j\in \R
       }
       \lW_{ij}- \sum_{v_j\in S} \lW_{ij} + \lW_{iu}\geq 0\right)\err(\R) \\
       =& \Gamma_i^-(\R\cup S,\Bl\cup\{u\})\err(\Bl)+ \Gamma_i^-(\Bl\cup S,\R\cup\{u\})\err(\R)~.
\end{align*}

Therefore, $\Prob^{\rm gr}_{a\sim \Omega}\left(
       \B(i,a)\leq 0 \wedge \T_i(S)\wedge \F(u)
       \right)$ is equal to

    \begin{align*}
         \err^{\rm indv}(u)\cdot \prod_{
\substack{v_j\in S
\\ 
\lW_{ij}\neq  0 
} } (1-\err^{\rm indv}(v_j))\cdot (
    [\Gamma_i^+(\R\cup S,\Bl\cup\{u\})\err(\Bl)+ \Gamma_i^+(\Bl\cup S,\R\cup\{u\})\err(\R)]\Prob(a\in \Omega^+)\\
    \quad +\  [
    \Gamma_i^-(\R\cup S,\Bl\cup\{u\})\err(\Bl)+ \Gamma_i^-(\Bl\cup S,\R\cup\{u\})\err(\R)]\Prob(a\in \Omega^-)
    )
    \end{align*}

We may now employ \cref{app:lem:est} to estimate the above probabilities. 

By replacing the estimations $\widehat{\Gamma}_i^+$ and $\widehat{\Gamma}_i^-$ in the above formula we obtain:
\begin{align*}
& \widehat{\Gain}^{\rm gr}_i(S,u)
\\
=& \begin{cases}
\err^{\rm indv}(u)\cdot \prod_{
\substack{v_j\in S
\\ 
\lW_{ij}\neq  0 
} } (1-\err^{\rm indv}(v_j))
& \quad  \text{if } \val_i(\R\cup S, \Bl\cup \{u\})<0\  \& \val_i(\Bl\cup S, \R\cup \{u\})<0\\
\err^{\rm indv}(u)\cdot \prod_{
\substack{v_j\in S
\\ 
\lW_{ij}\neq  0 
} } (1-\err^{\rm indv}(v_j))\cdot \err(\R)&  \quad \text{if } \val_i(\R\cup S, \Bl\cup \{u\})>0\  \& \val_i(\Bl\cup S, \R\cup \{u\})<0\\
\err^{\rm indv}(u)\cdot \prod_{
\substack{v_j\in S
\\ 
\lW_{ij}\neq  0 
} } (1-\err^{\rm indv}(v_j))\cdot \err(\Bl)
&  \quad \text{if } \val_i(\R\cup S, \Bl\cup \{u\})<0\  \& \val_i(\Bl\cup S, \R\cup \{u\})>0\\
0 &\quad \text{otherwise}
\end{cases}
\end{align*}
\end{proof}

\begin{lemma} If $S_i\subseteq \W$ and $u\in \R$, for any arbitrary $v_i$ we may take:

$$
\widehat{\Gain}^{\rm gr}_i(S,u)=\begin{cases}
\err^{\rm indv}(u)\cdot \prod_{
\substack{v_j\in S
\\ 
\lW_{ij}\neq  0 
} } (1-\err^{\rm indv}(v_j))
& \quad  \text{if } \val_i(\Bl\cup S, \R\cup \{u\})<0 \\
0 &\quad \text{otherwise}
\end{cases}
$$

If $u\in \Bl$, for any arbitrary $v_i$ we may take:

$$
\widehat{\Gain}^{\rm gr}_i(S,u)=\begin{cases}
\err^{\rm indv}(u)\cdot \prod_{
\substack{v_j\in S
\\ 
\lW_{ij}\neq  0 
} } (1-\err^{\rm indv}(v_j))
& \quad  \text{if } \val_i(\R\cup S, \Bl\cup \{u\})<0 \\
0 &\quad \text{otherwise}
\end{cases}
$$

and in both cases we have:
\[
\abs{\widehat{\Gain}_i^{\rm gr}(S,u)-\Gain_i^{\rm gr} (S,u)}\leq \exp\left(
-\frac{(\val_i^\W(u)- \Delta\lW_i)^2}{
4\sum_{v_j\in \W} \lW_{ij}^2
}
\right)~.
\]

\end{lemma}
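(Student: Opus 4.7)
The plan is to parallel the argument for the previous lemma (the $u\in\W$ case of Case 1), but with the crucial simplification that fixing $u\in\R$ (or $u\in\Bl$) under the conditioning $\F(u)$ pins down the realised group classifiers deterministically, which eliminates the two-way branching on whether the red or the blue block answered correctly. I treat $u\in\R$; the blue case follows by swapping $\R\leftrightarrow\Bl$.

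First, I decompose
\[
\Gain^{\rm gr}_i(S,u)=\Prob^{\rm gr}_{a\sim\Omega^+}(z^*(i,a)\leq 0\wedge \T_i(S)\wedge \F(u))\Prob(\Omega^+)+\Prob^{\rm gr}_{a\sim\Omega^-}(z^*(i,a)\geq 0\wedge \T_i(S)\wedge \F(u))\Prob(\Omega^-),
\]
and factor each summand as a conditional probability times an unconditional one. Because $u\in\R$ means $\hy_u=\ghy_\R$, the event $\F(u)$ is simply $\ghy_\R(a)\neq \y(a)$, whose probability is $\err(\R)$. The event $\T_i(S)$ involves only white agents (since $S_i\subseteq\W$) whose individual classifiers are independent of the group classifier, and so
\[
\Prob^{\rm gr}(\T_i(S)\wedge \F(u))=\err(\R)\prod_{\substack{v_j\in S\\ \lW_{ij}\neq 0}}(1-\err^{\rm indv}(v_j)).
\]

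Next, for $a\in\Omega^+$ the conditioning fixes $\ghy_\R(a)=-1$, $\ghy_\Bl(a)=+1$, and $\ihy_j(a)=+1$ for every $v_j\in S_i$. Substituting these values into $z^*(i,a)=\sum_j\lW_{ij}\hy_j(a)$ reduces the event $z^*(i,a)\leq 0$ to
\[
\sum_{v_j\in\W\setminus S_i}\lW_{ij}\ihy_j(a)\leq \lW_i(\R)-\lW_i(\Bl\cup S_i),
\]
which is exactly the event defining $\Gamma_i^+(\Bl\cup S,\R\cup\{u\})$. The $a\in\Omega^-$ case is the mirror image and produces $\Gamma_i^-(\Bl\cup S,\R\cup\{u\})$. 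Invoking Lemma~\ref{app:lem:est}, both $\Gamma_i^{\pm}$ are replaced by the common indicator $\mathbf{1}[\val_i(\Bl\cup S,\R\cup\{u\})<0]$; the weights $\Prob(\Omega^+)+\Prob(\Omega^-)=1$ collapse, producing the claimed $\widehat{\Gain}^{\rm gr}_i(S,u)$, and the error bound is inherited directly from Lemma~\ref{app:lem:est} applied with $(X,Y)=(\Bl\cup S,\R\cup\{u\})$.

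The main obstacle is purely bookkeeping: one must be careful to push all three deterministic contributions (the red sum, the blue sum, and the $S_i$ sum) onto the correct side of the threshold so that the resulting inequality matches the $\Gamma_i^\pm$ template, and to verify that $u\in\R$ is already absorbed into the $Y$-side (making $\R\cup\{u\}=\R$). Conceptually this case is strictly easier than the $u\in\W$ case, since there is no residual average over $\{\err(\R),\err(\Bl)\}$: the condition $\F(u)$ has already selected one branch of the group behaviour, so only a single $\Gamma_i^\pm$ expression (rather than two weighted by $\err(\R)$ and $\err(\Bl)$) appears in the formula.
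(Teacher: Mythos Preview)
Your argument is correct and mirrors the paper's: decompose over $\Omega^\pm$, condition on $\T_i(S)\wedge\F(u)$, observe that for $u\in\R$ this deterministically fixes both group classifiers so only a single $\Gamma_i^\pm(\Bl\cup S,\R\cup\{u\})$ term appears (in contrast to the two-branch formula when $u\in\W$), and then invoke Lemma~\ref{app:lem:est}. One minor bookkeeping remark: your derivation produces the prefactor $\err(\R)\prod_{v_j\in S_i}(1-\err^{\rm indv}(v_j))$ rather than the $\err^{\rm indv}(u)$ shown in the displayed estimator; the paper's own proof does not compute this factor explicitly, and the analogous Case-2 lemma (Lemma~\ref{lem:monoch}) uses the group error $\err(\bar\C)$ for colored $u$, so your version is the internally consistent one.
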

\begin{proof}
Like previous lemma we have:
\begin{align*}
&\Prob^{\rm gr}_{a\sim \Omega^+}\left(
       \z(i,a)\leq 0 \mid \T_i(S)\wedge \F(u)
       \right)\\
       =& \Prob_{a\sim \Omega^+}\left(\sum_{\substack{v_j\in \W\\
       v_j\notin S\cup \{u\}} 
       }
       \lW_{ij}\ihy_j(a)+ 
       \sum_{v_j\in \R
       }
       \lW_{ij}\ghy_j(a)+ \sum_{v_j\in \Bl
       }
       \lW_{ij}\ghy_j(a)+ \sum_{v_j\in S} \lW_{ij} - \lW_{iu}\leq 0\right)
       \end{align*}
       If $u\in \R$ since we are conditioning on $\F(u)$ the above probability is equal to:
       \begin{align*}
\Prob^{\rm gr}_{a\sim \Omega^+}\left(
       \z(i,a)\leq 0 \mid \T_i(S)\wedge \F(u)
       \right)
       &= \Gamma_i^+(\Bl\cup S, \R\cup\{u\})~.
       \end{align*}
       Similarly 
       \begin{align*}
\Prob^{\rm gr}_{a\sim \Omega^+}\left(
       \z(i,a)\geq 0 \mid \T_i(S)\wedge \F(u)
       \right)
       &= \Gamma_i^-(\Bl\cup S, \R\cup\{u\})~.       \end{align*}
          If $u\in \Bl$ we have 

                 \begin{align*}
\Prob^{\rm gr}_{a\sim \Omega^+}\left(
       \z(i,a)\leq 0 \mid \T_i(S)\wedge \F(u)
       \right)
       &= \Gamma_i^+(\R\cup S, \Bl\cup\{u\})~.
       \end{align*}
       Similarly 
       \begin{align*}
\Prob^{\rm gr}_{a\sim \Omega^+}\left(
       \z(i,a)\geq 0 \mid \T_i(S)\wedge \F(u)
       \right)
       &= \Gamma_i^-(\R\cup S, \Bl\cup\{u\})~.       \end{align*}

       Putting together, and employing \cref{app:lem:est} we get the premise.
       
\end{proof}

\subsubsection{Case 2. monochromatic $S_i$}\label{sec:monochorS}

\begin{lemma}\label{lem:monoch}
Let $\C$ be either $\R$ or $\Bl$ and $\bC$ be the other color. 
    Assume $S_i\cap \C \neq \emptyset$ and $S_i\cap \bC = \emptyset$. In this case, for any $u\in \C$, we have $\forall i, \Gain^{\rm gr}_i(S,u)=0$. 
    
    For $u\in \bC$ we may use the following estimation:

    \[
    \widehat{\Gain}_i^{\rm gr}\left( S,u\right)=\begin{cases}
    \err(\bC)\prod_{v_j\in S_i\cap \W}(1-\err^{\rm indv}(v_j)) & \text{ if }\quad \val_i(\C \cup S,\bC )<0\\
    0 &\text{ otherwise ~.}
    \end{cases}
    \]

    and for $u\in \W$ we may use the following estimation: 

        \[
    \widehat{\Gain}_i^{\rm gr}\left( S,u\right)=\begin{cases}
    \err(u)\err(\bC)\prod_{v_j\in S_i\cap \W}(1-\err^{\rm indv}(v_j)) & \text{ if }\quad \val_i(\C \cup S,\bC \cup\{u\})<0\\
    0 &\text{ otherwise ~.}
    \end{cases}
    \]

    The above estimations satisfy:

    \[
    \abs{\widehat{\Gain}^{\rm gr}_i(S,u)-\Gain^{\rm gr}_i(S,u)
    }
    \leq  \exp\left(
-\frac{(\val_i^\W(u)- \Delta\lW_i)^2}{
4\sum_{v_j\in \W} \lW_{ij}^2
}
\right)~.
    \]
    
\end{lemma}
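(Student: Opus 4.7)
I would argue by case analysis on the location of $u$. Let $\C$ denote the color present in $S_i$ and $\bC$ its opposite. In the group scenario every agent in $\C$ outputs $\ghy_\C(a)$, every agent in $\bC$ outputs $\ghy_\bC(a)=-\ghy_\C(a)$, and white agents use their independent classifiers $\ihy_j$. Since $S_i\cap\C\neq\emptyset$, the event $\T_i(S)$ forces $\ghy_\C(a)=y(a)$, which has probability $1-\err(\C)=\err(\bC)$, and independently forces $\ihy_j(a)=y(a)$ for each $v_j\in S_i\cap\W$, contributing $\prod_{v_j\in S_i\cap\W}(1-\err^{\rm indv}(v_j))$.

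\textbf{Case $u\in\C$.} Then $\hy_u=\ghy_\C$, so $\F(u)$ demands $\ghy_\C(a)\neq y(a)$, contradicting $\T_i(S)$. Hence $\Gain_i^{\rm gr}(S,u)=0$.

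\textbf{Cases $u\in\bC$ and $u\in\W$.} For $u\in\bC$, conditional on $\T_i(S)$ we have $\ghy_\bC(a)=-y(a)\neq y(a)$, so $\F(u)$ is automatic; thus $\Prob^{\rm gr}(\T_i(S)\wedge \F(u))=\err(\bC)\prod_{v_j\in S_i\cap\W}(1-\err^{\rm indv}(v_j))$. For $u\in\W$, $\F(u)$ is independent and adds a factor $\err^{\rm indv}(u)$. Conditional on $\T_i(S)\wedge \F(u)$, the deterministic part of $\z^*(i,a)=\sum_j\lW_{ij}\hy_j(a)$ equals $[\,\lW_i(\C)-\lW_i(\bC)+\lW_i(S\cap\W)-\lW_{iu}\cdot\mathbf{1}(u\in\W)\,]\,y(a)$, while the randomness comes only from $v_j\in\W\setminus(S\cup\{u\})$. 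Rearranging $y(a)\z^*(i,a)\leq 0$, the conditional probability becomes
\[
\Gamma_i^+(\C\cup S,\,\bC\cup\{u\})\Prob(\Omega^+)+\Gamma_i^-(\C\cup S,\,\bC\cup\{u\})\Prob(\Omega^-)
\]
when $u\in\W$, and the same with $\bC$ in place of $\bC\cup\{u\}$ when $u\in\bC$. Applying \cref{app:lem:est} with $X=\C\cup S$ and the appropriate $Y$ replaces each $\Gamma_i^\pm$ by the indicator $\mathbf{1}(\val_i(X,Y)<0)$, giving the stated $\widehat{\Gain}_i^{\rm gr}(S,u)$ as the product of the conditioning probability and this indicator.

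\textbf{Error bound and main obstacle.} \cref{app:lem:est} bounds each $|\Gamma_i^\pm-\widehat{\Gamma}_i^\pm|$ by $\exp\bigl(-\val_i(X,Y)^2/\sum_{v_j\in V\setminus(X\cup Y)}\lW_{ij}^2\bigr)$. Since $V\setminus(X\cup Y)\subseteq\W$, the denominator is at most $\sum_{v_j\in\W}\lW_{ij}^2$. The delicate step is comparing $\val_i(\C\cup S,\bC\cup\{u\})$ with the target quantity $\val_i^\W(u)-\Delta\lW_i$: writing out $\val_i(\C\cup S,\bC\cup\{u\})=\sum_{v_j\in\W\setminus(S\cup\{u\})}\lW_{ij}[1-2\err^{\rm indv}(v_j)]+\lW_i(\C)-\lW_i(\bC)+\lW_i(S\cap\W)-\lW_{iu}$, I note that $|\lW_i(\C)-\lW_i(\bC)|=\Delta\lW_i$, that $\lW_i(S\cap\W)\geq 0$ and so can be discarded when lower-bounding $|\val_i|$, and that the missing summands (over $S\cup\{u\}$) compared to $\val_i^\W(u)$ can be absorbed, exactly as in the analogous step of \cref{lem:appgain}, into the factor $4$ in the denominator of the exponent. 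Carrying out this bookkeeping carefully—and handling the sign to ensure the inequality goes the right way—is expected to be the main technical hurdle, but it is analogous to the independent case and yields the claimed bound.
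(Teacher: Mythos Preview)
Your proposal is correct and follows essentially the same route as the paper: case-split on the color of $u$, observe that $\T_i(S)\wedge\F(u)$ is impossible when $u\in\C$, compute $\Prob(\T_i(S)\wedge\F(u))$ in the other two cases, reduce the conditional probability $\Prob(\B(i,a)\le 0\mid\T_i(S)\wedge\F(u))$ to $\Gamma_i^{\pm}(\C\cup S,\,\cdot\,)$, and invoke \cref{app:lem:est}. In fact the paper's own proof is terser than yours on the last step---it simply says ``putting together and employing \cref{app:lem:est} we conclude''---so your explicit discussion of how $\val_i(\C\cup S,\bC\cup\{u\})$ relates to $\val_i^\W(u)-\Delta\lW_i$ (via $|\lW_i(\C)-\lW_i(\bC)|=\Delta\lW_i$ and absorption of the $S$-terms as in \cref{lem:appgain}) goes a bit beyond what the paper spells out.
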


\begin{proof}[Proof of  \cref{lem:monoch}]
Assume that $S_i$ intersects only with one color $\C$ and its intersection with the other color $\bC$ is empty. 

If $u\in \C$, then $\Prob(\T_i(S)\wedge \F(u))=0$. Thus, $\forall i\  \Gain^{\rm gr}_i(S,u)=0$.

If $u\in \bC$:

We have that $\Prob(\T_i(S)\wedge \F(u))=\err(\bC)\cdot \prod_{v_j\in S_i\cap \W}(1-\err^{\rm indv}(v_j))$ . Furthermore, 
\begin{align*}
    \Prob^{\rm gr}_{a\in \Omega^+}\left(
    \z^*(i,a)\leq 0 \mid \T_i(S)\wedge \F(u)
    \right)&=\Prob
    \left(
    \sum_{v_j\in  \W\setminus S_i} \lW_{ij}\ihy_j(a)+ \lW(\C)-\lW(\bC)+\lW(S\cap \W)\leq 0
    \right)\\
    &=\Gamma_i^+\left(
    \C\cup S ,\bC
    \right)~.
\end{align*}

Similarly, 
\begin{align*}
    \Prob^{\rm gr}_{a\in \Omega^-}\left(
    \z^*(i,a)\geq 0 \mid \T_i(S)\wedge \F(u)
    \right)=\Gamma_i^-\left(
    \C\cup S,\bC
    \right)~.
\end{align*}
Putting together and employing \cref{app:lem:est} we conclude the first part of the premise. 

If $u\in \W$: 
We have that $\Prob(\T_i(S)\wedge \F(u))=\err(u)\err(\bC)\cdot \prod_{v_j\in S_i\cap \W}(1-\err^{\rm indv}(v_j))$. Furthermore, 
\begin{align*}
    \Prob^{\rm gr}_{a\in \Omega^+}\left(
    \z^*(i,a)\leq 0 \mid \T_i(S)\wedge \F(u)
    \right)&=\Prob
    \left(
    \sum_{v_j\in  \W\setminus S_i} \lW_{ij}\ihy_j(a)+ \lW_i(\C)-\lW_i(\bC)+\lW_i(S\cap \W)-\lW_{iu}\leq 0
    \right)\\
    &=\Gamma_i^+\left(
    \C\cup S,\bC\cup \{u\}
    \right)~.
\end{align*}

Similarly, 
\begin{align*}
    \Prob^{\rm gr}_{a\in \Omega^-}\left(
    \z^*(i,a)\geq 0 \mid \T_i(S)\wedge \F(u)
    \right)
    &=\Gamma_i^-\left(
    \C\cup S,\bC\cup \{u\}
    \right)~.
\end{align*}

Putting together and employing \cref{app:lem:est} we conclude the second part of the premise. 
\end{proof}

\subsubsection{Case 3. bichromatic $S_i$}\label{sec:bichorS}

\begin{lemma}\label{lem:bich}
    If $S_i\cap \R \neq \emptyset$ and $S_i\cap \Bl \neq  \emptyset$, we have $\forall i\  \Gain^{\rm gr}_i(S,u)=0$.
\end{lemma}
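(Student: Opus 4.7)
The plan is to argue directly from \Cref{def:model} (the group mode) that the event $\T_i(S)$ has probability zero whenever $S_i$ contains vertices of both colors, and hence the conjunction $\B(i,a) \leq 0 \wedge \T_i(S) \wedge \F(u)$ also has probability zero under $\Prob^{\rm gr}$. The bulk of the machinery developed in \Cref{app:lem:est,lem:noch,lem:monoch} (Hoeffding-style estimation of $\Gamma_i^\pm$) is unnecessary here; the result is purely combinatorial and follows from the defining constraint $\ghy_{\R}(a) \neq \ghy_{\Bl}(a)$ for every $a \in \Omega$.

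The key step is as follows. Pick any $v_r \in S_i \cap \R$ and any $v_b \in S_i \cap \Bl$; by hypothesis both exist. Under the group branch, $\hy_r(a) = \ghy_\R(a)$ and $\hy_b(a) = \ghy_\Bl(a)$ deterministically, and by definition of the group classifiers $\ghy_\R(a) \neq \ghy_\Bl(a)$ for every $a$. Hence the two conjuncts $\y(a) = \hy_r(a)$ and $\y(a) = \hy_b(a)$ appearing in $\T_i(S)$ cannot simultaneously hold: one of them forces $\y(a) = \ghy_\R(a)$ and the other forces $\y(a) = \ghy_\Bl(a) = -\ghy_\R(a)$, a contradiction. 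Therefore
\[
\Prob^{\rm gr}_{a \sim \Omega}\bigl( \T_i(S) \bigr) \;=\; 0,
\]
and consequently
\[
\Gain^{\rm gr}_i(S,u) \;=\; \Prob^{\rm gr}_{a \sim \Omega}\bigl( \B(i,a) \leq 0 \wedge \T_i(S) \wedge \F(u) \bigr) \;\leq\; \Prob^{\rm gr}_{a \sim \Omega}\bigl( \T_i(S) \bigr) \;=\; 0.
\]

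No obstacle is anticipated; unlike Cases 1 and 2, there is no approximation to control, because the gain is exactly zero rather than small. The only minor subtlety worth stating explicitly in the write-up is that this holds for every $i$ and every choice of $u \in V$, so in the greedy estimation procedure $\mathsf{EstGain}^{\rm gr}$ (\Cref{proc:gr}) the contribution from any index $i$ with a bichromatic $S_i$ can be safely dropped, which is precisely the behavior the pseudocode encodes via the $Case3$ flag (no update to $\Gain(S,u)$ is performed in that branch).
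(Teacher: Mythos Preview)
Your proposal is correct and follows exactly the same approach as the paper's proof: the paper simply observes that $\Prob(\T_i(S))=0$ and concludes, while you supply the one-line justification (the conjuncts $\y(a)=\ghy_\R(a)$ and $\y(a)=\ghy_\Bl(a)$ are incompatible since $\ghy_\R(a)\neq\ghy_\Bl(a)$). There is no substantive difference.
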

\begin{proof}
Note that $\Prob(\T_i(S))=0$. Thus, we conclude the premise. 
\end{proof}

\subsubsection{\textbf{Missing material from \cref{sec:algo:group}:\ $\boldsymbol{\W}$-Ambiguous vertices.}}\label{app:sec:Wambig}

Let's first present the definition of $\W$-ambiguous vertices in detail:

Consider the following partitioning of  white vertices to low and high error parts: 
\[\W^+=\{v_j\mid \err(v_j)\leq 1/2\} \ \ \& \ \W^-=\{v_j\mid \err(v_j)> 1/2\}\]
with respect to this partition we define the following vectors:
\[
 {\mathcal E^\W}^+{=}\left(1-2\err(v_j)\right)_{v_j\in \W^+}\ \ \& \ \quad   {\mathcal E^\W}^-{=} \left(2\err(v_j)-1\right)_{v_j\in \W^-}
\]
\[
\text{ and }\ \quad \quad \quad \quad   
{\lW^{\W+}_i}=(\lW_{ij})_{v_j\in \W^+} \ \   \&  \quad   \ {\lW^{\W-}_i}=(\lW_{ij})_{v_j\in \W^-}
\]

\begin{definition}
    [ $\W$-Ambiguous vertices ]
Let $\lW^\W_i=(\lW_{ij})_{v_j\in \W}$, and $\abs{\cdot}_2$ be the $\ell_2$ norm and $\langle \cdot, \cdot \rangle$ be dot product.

    We call an agent $v_i\in V$, \emph{$\W$-ambiguous}  if it satisfies 
 \[\abs{
 \frac{
 \langle{\lW^{\W+}_i},  {\mathcal E^\W}^+ \rangle 
 }{\abs{\lW_i^\W}_2}-
  \frac{
 \langle{\lW^{\W-}_i},  {\mathcal E^\W}^- \rangle 
 }{\abs{\lW_i^\W}_2}
 }\leq 4\sqrt{\log n}+ \Delta\lW_i~,
    \]
    where
$\Delta\lW_i=\abs{\sum_{v_j\in \R}\lW_{ij} - \sum_{v_j\in \Bl}\lW_{ij}}~.
    $
    A \emph{network} is \emph{nicely colored} if no vertex is $\W$-ambiguous.
\end{definition}

We now show that if a vertex is \emph{not} ambiguous w.r.t. $\W$ the group gain associated to it can be estimated with high precision:

\begin{lemma}
Let  $\widehat{\Gain_i}^{\rm gr}(S,u)$  be as defined in \cref{lem:noch,lem:monoch,lem:bich}.
If a vertex is \emph{non-ambiguous} w.r.t $\W$ we have: 
\[
\abs{\Gain^{\rm gr}_i(S,u)-\widehat{\Gain}^{\rm gr}_i(S,u)}\leq o(n^{-1})
~.\]
    
\end{lemma}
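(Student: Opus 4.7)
The plan is to adapt the argument of Lemma~\ref{lemm:egalapp:good} (the analogous ``non-ambiguous'' bound in the pairwise-independent regime) to the group-dependent regime. By Lemmas~\ref{lem:noch} and~\ref{lem:monoch} (and the bichromatic case of Lemma~\ref{lem:bich}, in which the error vanishes identically), the pointwise estimation error satisfies
\[
\abs{\Gain_i^{\rm gr}(S,u)-\widehat{\Gain}_i^{\rm gr}(S,u)} \;\leq\; \exp\!\left(-\frac{(\val_i^\W(u)-\Delta\lW_i)^2}{4\sum_{v_j\in\W}\lW_{ij}^2}\right),
\]
so it is enough to show that for any non-$\W$-ambiguous $v_i$ the exponent is at least $2\log n$, which forces the error to be $\leq n^{-2}=o(n^{-1})$.

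The main computation is algebraic: recalling $\val_i^\W(u)=\sum_{v_j\in\W}\lW_{ij}[1-2\err(v_j)]-2\err(u)\lW_{iu}$, I split the sum along the partition $\W=\W^+\cup\W^-$ to write
\[
\val_i^\W(u) \;=\; \langle \lW_i^{\W+},\,{\mathcal E^\W}^+\rangle \;-\; \langle \lW_i^{\W-},\,{\mathcal E^\W}^-\rangle \;-\; 2\err(u)\lW_{iu}.
\]
Subtracting $\Delta\lW_i$ and dividing by $\abs{\lW_i^\W}_2=\sqrt{\sum_{v_j\in\W}\lW_{ij}^2}$ converts the exponent into the squared ``normalized gap'' appearing in Definition~\ref{def:nicecolor}, modulo a correction of size $\frac{2\err(u)\lW_{iu}+\Delta\lW_i}{\abs{\lW_i^\W}_2}$. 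By the non-$\W$-ambiguity hypothesis, the absolute value of this normalized gap exceeds $4\sqrt{\log n}+\Delta\lW_i$; combining this with the reverse triangle inequality and the Cauchy--Schwarz-type estimate $\err(u)\lW_{iu}\leq\abs{\lW_i^\W}_2$ (analogous to Lemma~\ref{lem:usedotprod}), the corrected quantity is bounded below by $4\sqrt{\log n}-O(1)$. Squaring and dividing by $4$ then yields an exponent of at least $(2\sqrt{\log n}-O(1))^2\geq 2\log n$ for $n$ sufficiently large, completing the proof.

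The main obstacle is careful bookkeeping of the correction term $\frac{2\err(u)\lW_{iu}+\Delta\lW_i}{\abs{\lW_i^\W}_2}$, which depends on whether $u\in\W$, $u\in\R$, or $u\in\Bl$, and therefore on which sub-case of Lemmas~\ref{lem:noch}/\ref{lem:monoch} supplies the relevant instance of $\val_i(X,Y)$ (that is, whether it involves $\R\cup S,\Bl\cup\{u\}$ or $\Bl\cup S,\R\cup\{u\}$). I will need to verify in every branch that the ``mixed'' threshold $4\sqrt{\log n}+\Delta\lW_i$ in Definition~\ref{def:nicecolor} is exactly what is required to absorb the worst-case value of $\Delta\lW_i$ appearing in the error bound; once that case analysis is completed, the estimate collapses to the same $\exp(-\Omega(\log n))$ calculation already used in the independent case.
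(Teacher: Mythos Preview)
Your proposal is correct and takes essentially the same route as the paper: invoke the uniform bound $\exp\bigl(-(\val_i^\W(u)-\Delta\lW_i)^2/(4\sum_{v_j\in\W}\lW_{ij}^2)\bigr)$ supplied by Lemmas~\ref{lem:noch}--\ref{lem:bich}, rewrite the exponent via the $\W^+/\W^-$ partition as a normalized gap, apply non-$\W$-ambiguity to lower-bound that gap, absorb the $2\err(u)\lW_{iu}/\lvert\lW_i^\W\rvert_2$ correction as an $O(1)$ term, and conclude the error is $n^{-\Omega(1)}=o(n^{-1})$. The paper does not carry out the per-branch case analysis you anticipate as the main obstacle; it works directly with the uniform error bound across all sub-cases.
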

\begin{proof}
Note that from \cref{lem:noch,lem:monoch,lem:bich} we have that for any $v_i$: 
\[
\abs{\Gain^{\rm gr}_i(S,u)-\widehat{\Gain}^{\rm gr}_i(S,u)}\leq \exp\left(- \frac{(\val_i^\W(u)-\Delta\lW_i)^2}{4\sum_{v_j\in\W}\lW_{ij}^2}\right)
~.\]

Note that we have:

\begin{align*}
 \frac{(\val_i^{\W}(u)-\Delta\lW_i)^2}{\sum_{v_j\in\W}\lW_{ij}^2}&=  \left(\frac{\val_i^{\W}(u)-\Delta\lW_i}{\abs{\lW_{i}^\W}_2} \right)^2\\
 &= 
 \left(\frac{
 \langle \lW_i^{\W+}, {\mathcal E}^{\W+}\rangle 
 }{\abs{\lW_{i}^\W}_2} 
 - \frac{
 \langle \lW_i^{\W-}, {\mathcal E}^{\W-}\rangle 
 }{\abs{\lW_{i}^\W}_2} - \frac{2\err(u)\lW_{iu}}{\abs{\lW_{i}^\W}_2}- \frac{\Delta\lW_i}{\abs{\lW_{i}^\W}_2}
 \right)^2\\
 &\geq  \left(\frac{
 \langle \lW_i^{\W+}, {\mathcal E}^{\W+}\rangle 
 }{\abs{\lW_{i}^\W}_2} 
 - \frac{
 \langle \lW_i^{\W-}, {\mathcal E}^{\W-}\rangle 
 }{\abs{\lW_{i}^\W}_2} - \frac{\Delta\lW_i}{\abs{\lW_{i}^\W}_2}-2
 \right)^2
\end{align*}

Assuming that the vertex is not ambiguous w.r.t. whites we have:
\[\abs{
 \frac{
 \langle{\lW^{\W+}_i},  {\mathcal E^\W}^+ \rangle 
 }{\abs{\lW_i^\W}_2}-
  \frac{
 \langle{\lW^{\W-}_i},  {\mathcal E^\W}^- \rangle 
 }{\abs{\lW_i^\W}_2}
 }> 4\sqrt{\log n}+ \Delta\lW_i\]

 which implies 
 \[
  \frac{(\val_i^{\W}(u)-\Delta\lW_i)^2}{\sum_{v_j\in\W}\lW_{ij}^2}=  \left(\frac{\val_i^{\W}(u)-\Delta\lW_i}{\abs{\lW_{i}^\W}_2} \right)^2\geq (3\sqrt{\log n})^2\geq 5 \log n
 \]
 From which we conclude that the error is bounded as:
 \[
 \exp\left(-
   \frac{(\val_i^{\W}(u)-\Delta\lW_i)^2}{\sum_{v_j\in\W}\lW_{ij}^2}
 \right)\leq \exp(-5/4 \log n)= o(n^{-1})~.
 \]
\end{proof}

\subsubsection{\textbf{Missing material from \cref{sec:algo:group}: proof of the main theorems}}\label{app:thm:proofs:group}

\begin{proof}[\textbf{Proof of \cref{thm:egalgroupcorrectness} and \cref{remark:appW}}]
Note that the error of the greedy choice approximation is either generated from approximation of $\sum_{i=1}^n\widehat{\Gain}^{\rm gr}_i(S,u)$ or
$\sum_{i=1}^n\widehat{\Gain}^{\rm indv}_i(S,u)$.
The error of $\widehat{\Gain}^{\rm indv}_i(S,u)$ may be bounded similar to the independent case. To see the bound the error of $\sum_{i=1}^n\widehat{\Gain}^{\rm gr}_i(S,u)$ note that the error induced by all \emph{non-ambiguous} vertices is $o(n\times n^{-1})=o(1) $. The error of each ambiguous vertex is at most one, therefore, we conclude the result. 

\paragraph{Approximation when having access $\widehat{\W}$.} Follows similarly as in \cref{lem:approxW:ind}
\end{proof}

\subsection{Concentration Bounds}
\begin{theorem}\label{thm:hoeffding}
Let $X_1,X_2,\dots X_k$ be independent random variables in range $[-1,1]$ with means $\mu_1,\dots ,\mu_k$ and let $w_i$s be weight coefficients  . 
Taking  $\mu=\sum_{i=1}^k w_i \mu_i$. We have that for any $\varepsilon>0$:
 \[
 \Prob\left(\sum_{i=1}^k w_i X_i \geq \mu +\varepsilon \right)\leq \exp\left({-\frac{  \varepsilon^2}{4\sum_{i=1}^k w_i^2}} \right),~
 \]
 and 
 \[
 \Prob\left(\sum_{i=1}^k w_i X_i\leq \mu- \varepsilon \right)\leq \exp\left({-\frac{  \varepsilon^2}{4\sum_{i=1}^k w_i^2}} \right)~.
 \]
\end{theorem}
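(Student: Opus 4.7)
The plan is the classical Chernoff–Cram\'er (moment generating function) argument. I will prove the upper-tail bound; the lower-tail inequality then follows at once by applying the same argument to the variables $-X_1,\dots,-X_k$, which again lie in $[-1,1]$ and have means $-\mu_1,\dots,-\mu_k$.

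Fix an arbitrary $s > 0$. Markov's inequality applied to the exponentiated centered sum, together with independence of $X_1,\dots,X_k$, gives
\[
\Prob\left(\sum_{i=1}^k w_i X_i \geq \mu + \varepsilon\right) = \Prob\left(e^{s\sum_i w_i(X_i-\mu_i)} \geq e^{s\varepsilon}\right) \leq e^{-s\varepsilon}\prod_{i=1}^k \Ex\left[e^{s w_i (X_i - \mu_i)}\right].
\]
The key analytic input is Hoeffding's lemma: if a random variable $Y$ has mean zero and takes values in an interval of length $L$, then $\Ex[e^{sY}] \leq \exp(s^2 L^2/8)$. Since $X_i \in [-1,1]$ and $\mu_i \in [-1,1]$, the centered variable $Y_i := w_i(X_i - \mu_i)$ has mean zero and is supported in an interval of length at most $2|w_i|$, so Hoeffding's lemma gives
\[
\Ex\left[e^{s w_i (X_i - \mu_i)}\right] \leq \exp\left(\frac{s^2 (2|w_i|)^2}{8}\right) = \exp\left(\frac{s^2 w_i^2}{2}\right).
\]
Combining these and minimizing the resulting upper bound $\exp\bigl(-s\varepsilon + \tfrac{s^2}{2}\sum_i w_i^2\bigr)$ in $s > 0$ at the choice $s = \varepsilon / \sum_i w_i^2$ yields the sharper inequality $\exp\bigl(-\varepsilon^2/(2\sum_i w_i^2)\bigr)$, which is strictly stronger than the claimed bound $\exp\bigl(-\varepsilon^2/(4\sum_i w_i^2)\bigr)$; the stated theorem follows immediately.

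The only nontrivial step is Hoeffding's lemma itself, which is the main obstacle. Its standard proof uses convexity of the exponential: for $Y \in [a,b]$ with $\Ex[Y] = 0$, convexity gives
\[
e^{sY} \leq \frac{b-Y}{b-a}e^{sa} + \frac{Y-a}{b-a}e^{sb},
\]
so $\Ex[e^{sY}] \leq e^{\phi(s)}$ where $\phi(s) := \log\bigl(\tfrac{b}{b-a}e^{sa} - \tfrac{a}{b-a}e^{sb}\bigr)$. A short calculation shows $\phi(0) = \phi'(0) = 0$ and $\phi''(s) \leq (b-a)^2/4$ uniformly in $s$, so Taylor's theorem gives $\phi(s) \leq s^2(b-a)^2/8$. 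This self-contained computation is the only place the boundedness of the $X_i$ enters quantitatively, and once it is in hand the rest of the proof is a mechanical Chernoff-style optimization.
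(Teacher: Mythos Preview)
Your proof is correct and complete; it is the standard Chernoff--Cram\'er argument via Hoeffding's lemma, and as you note it even yields the sharper constant $2$ in place of the paper's $4$. The paper itself does not prove this theorem at all: it is simply stated in the appendix as a known concentration bound and invoked as a black box in the proofs of Lemmas~\ref{approxGamma} and~\ref{app:lem:est}, so there is no competing argument to compare against.
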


\section{Additional  experiments and details of set up}\label{asec:exp_details}
Our proposed problem and methods in general do not assume any prior knowledge of the given datasets. Our algorithms are deterministic with no parameter to tune. It has potential to be applied to any problems as long as the objective function of the problems of interests related to our proposed objective function.  In the following, we include the parameter settings of the problems in our experiments. These parameters are not tuned for our methods. 
We just fix the parameters to make sure the problems are nontrivial enough. We use the same parameters for all methods in our experiments.

We let $\abs{\Omega}=3$ and for initial opinion $\mathbf{\hy}$, we randomly generate for each agent $v_i$ a random vector $(\hy_i(a): a\in \Omega)$ with each entry sampled from Bernoulli distribution with probability $p_i$ of $\hy_i(a)=+1$ sampled uniformly from $[0.3, 0.9]$.
For random graph generators, 
some of the key parameters are fixed as follows:
\begin{itemize}
    \item Number of nodes: 128
    \item Erd\H{o}s-R\'{e}nyi graph: Probability for edge creation $p=0.005$.
    \item Barab\'{a}si-Albert preferential attachment model: Number of edges to attach from a new node to existing nodes $m=5$
    \item Watts-Strogatz small-world graph: Each node is joined with its $k=5$ nearest neighbors in a ring topology; The probability of rewiring each edge $p=0.25$
\end{itemize}


In practice, we found that the infinite FJ model with weight matrix converging to $(I+L)^{-1}$ makes the problem less interesting than general case in practice. The reason is that all entries of the matrix $(I+L)^{-1}$ are positive and non-zero. With such weight matrix, we found even a random selection algorithm can converge very fast. To avoid such simple cases, here we apply a finite $t$-step FJ model. We fix $t=3$ to ensure the induced matrix $\bar{W}$ is sparse enough.

Besides randomly generated graphs from tree classical models (ER, PA and WS), we also test our algorithms on random generated matrix $\lW$ directly, denoted as RandomW. Each entry of $\lW$ is independently sampled from a uniform distribution on $[0,1]$. We let the sparsity of $\lW$ to be around $0.95$. Each row of $\lW$ is normalized thus sum up to be $1$. 

The statistics of real and synthetic datasets are summarized in~\cref{tb:dataset_stats}. Experiment results are illustrated in~\cref{fig:exp_appendix}.

We let the set of faulty prediction be 
$\B^{f}\triangleq\mathbb{E}_{a\in \Omega}[\sum_{i=1}^n\mathbf{1}\B(i,a)<0]$~, which serves as an upperbound on the egalitarian improvement. 
 We define the accuracy {\cvratio } as: 
\[\cvratio=\frac{\Gegal(S)}{{\B^f}}~.\]

We calculate all expected values by taking averages over $a\in \Omega$. 

\begin{table}[ht]
\centering
\caption{Table of statistics of datasets. Sparsity of $\lW$ represents the percentage of zero entries in $\lW$. }
\label{tb:dataset_stats}
\begin{tabular}{@{}lccc@{}}
\toprule
        & size & sparsity of $\lW$ & faulty prediction ${\B^{f}}$ \\ \midrule
ER      & 128    & 0.98            & 74.67               \\
PA      & 128    & 0.37            & 32.0               \\
WS      & 128    & 0.74            & 38.67               \\
BIO     & 297     & 0.45            & 55.33                \\
CSPK    & 39      & 0.75            & 14.67                 \\
FB      & 620     & 0.75            & 161.33            \\
WIKI    & 890     & 0.66            & 238.67               \\
RandomW & 128     & 0.95            & 37.67                \\ \bottomrule
\end{tabular}
\end{table}

\begin{figure}[th]
  \centering
    \begin{tabular}[]{lcr}
    \includegraphics[width=.47\linewidth]{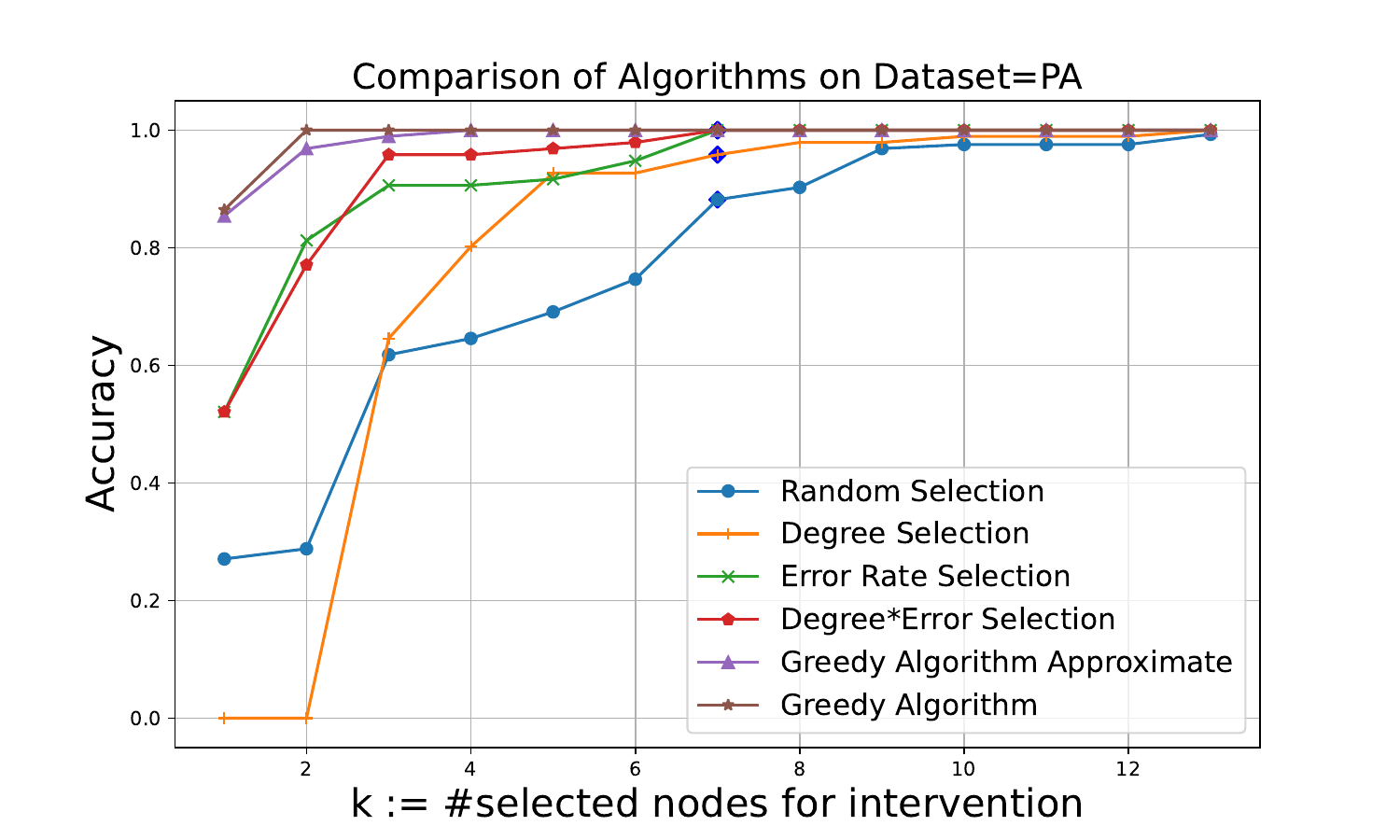}  & \includegraphics[width=.47\linewidth]{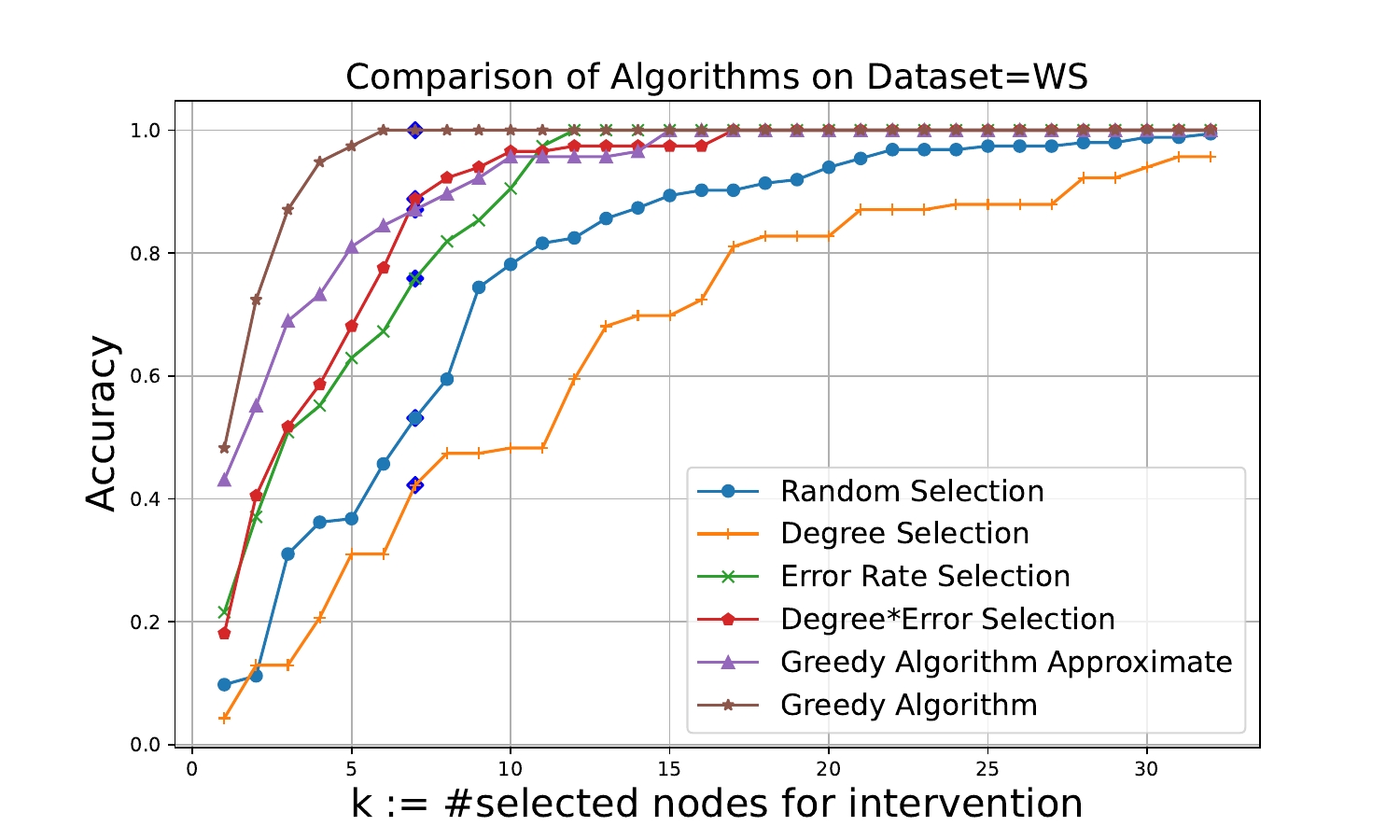} \\     \includegraphics[width=.47\linewidth]{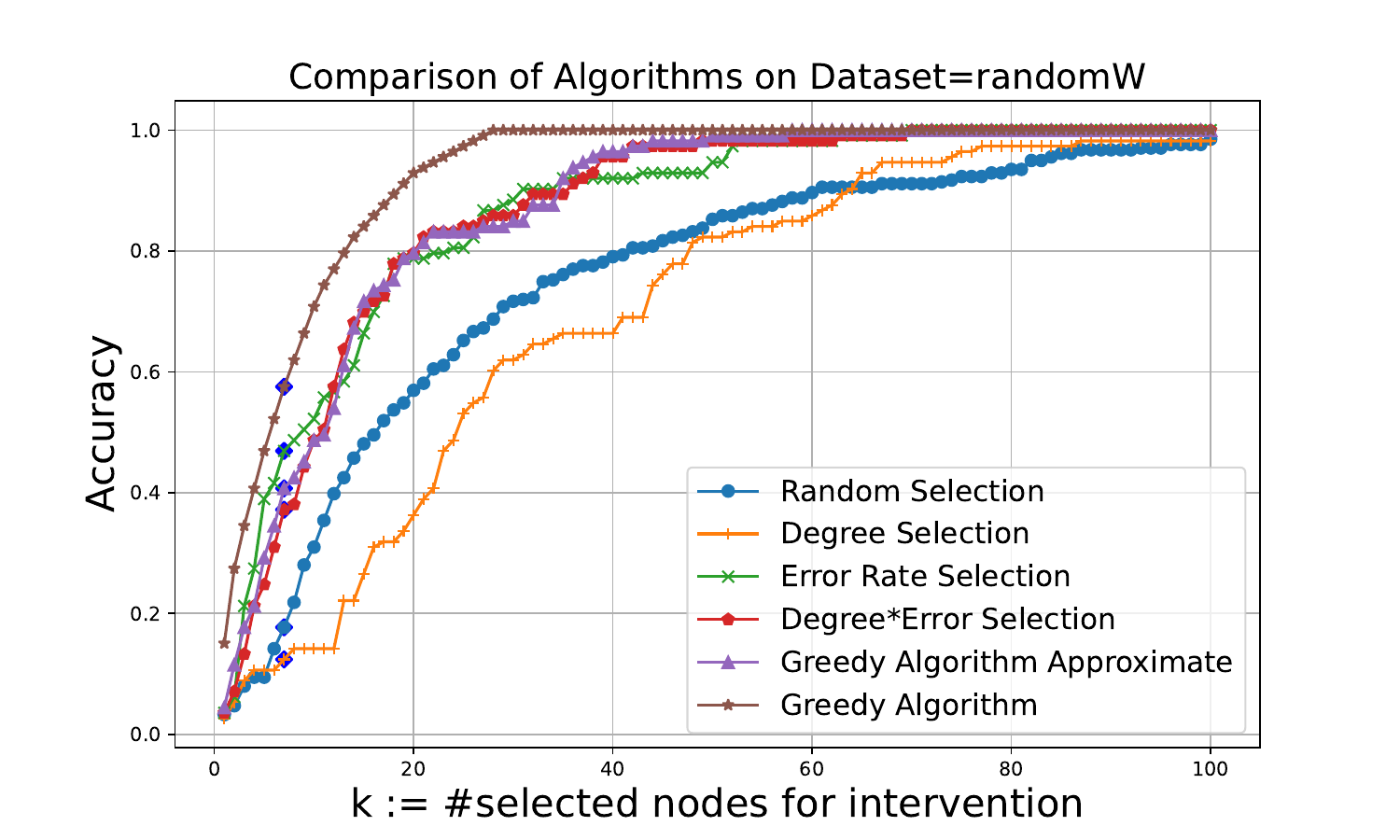} &
    \includegraphics[width=.47\linewidth]{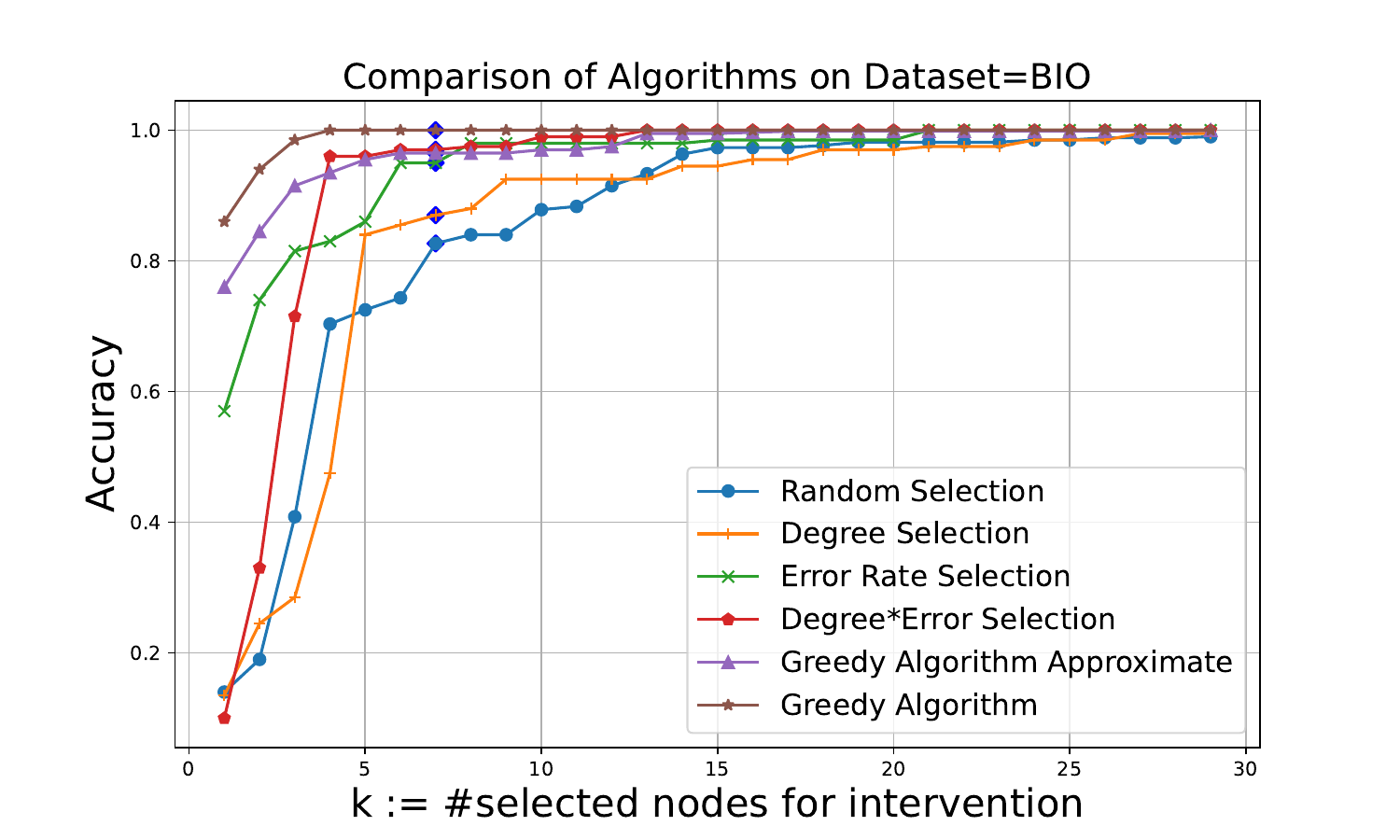} \\     \includegraphics[width=.47\linewidth]{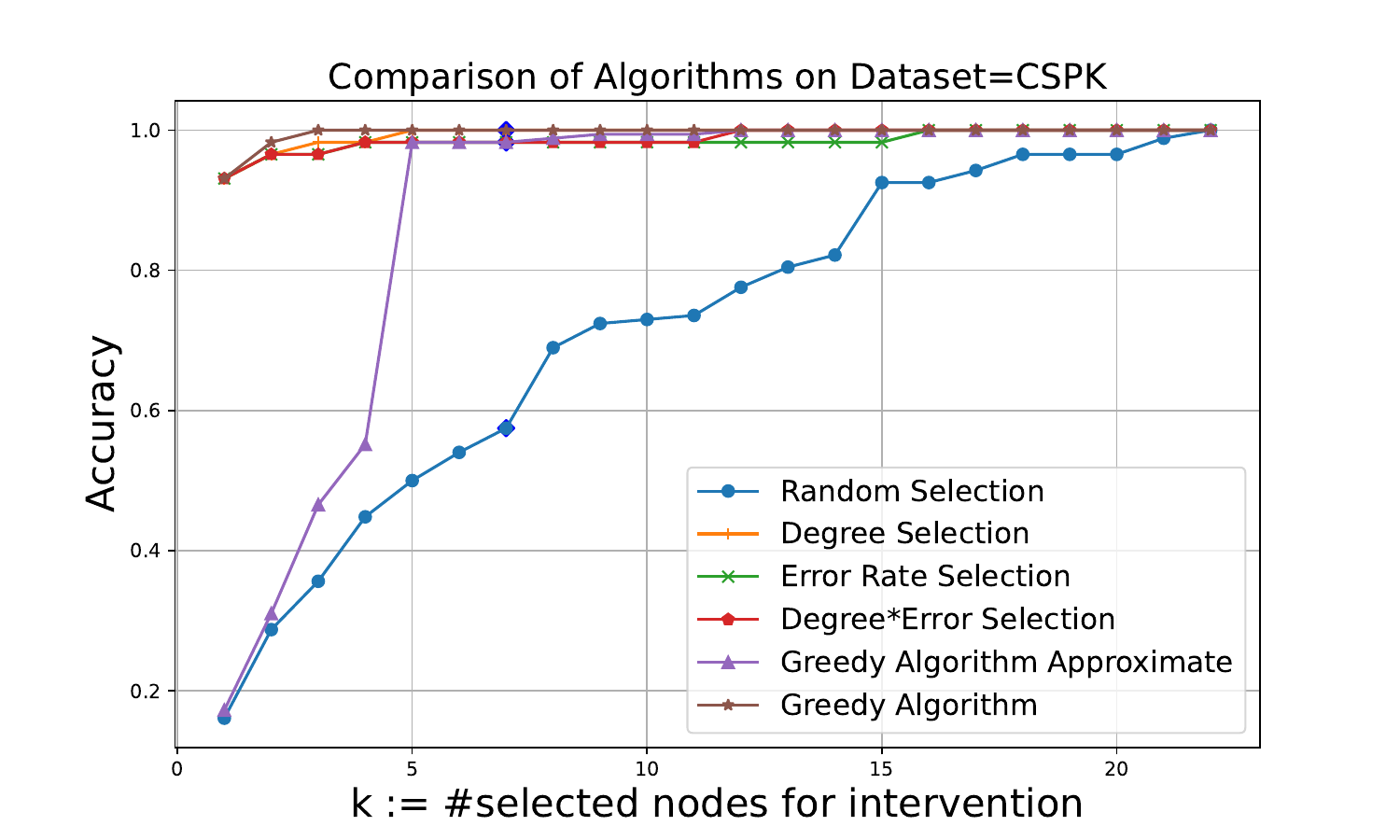} & \includegraphics[width=.47\linewidth]{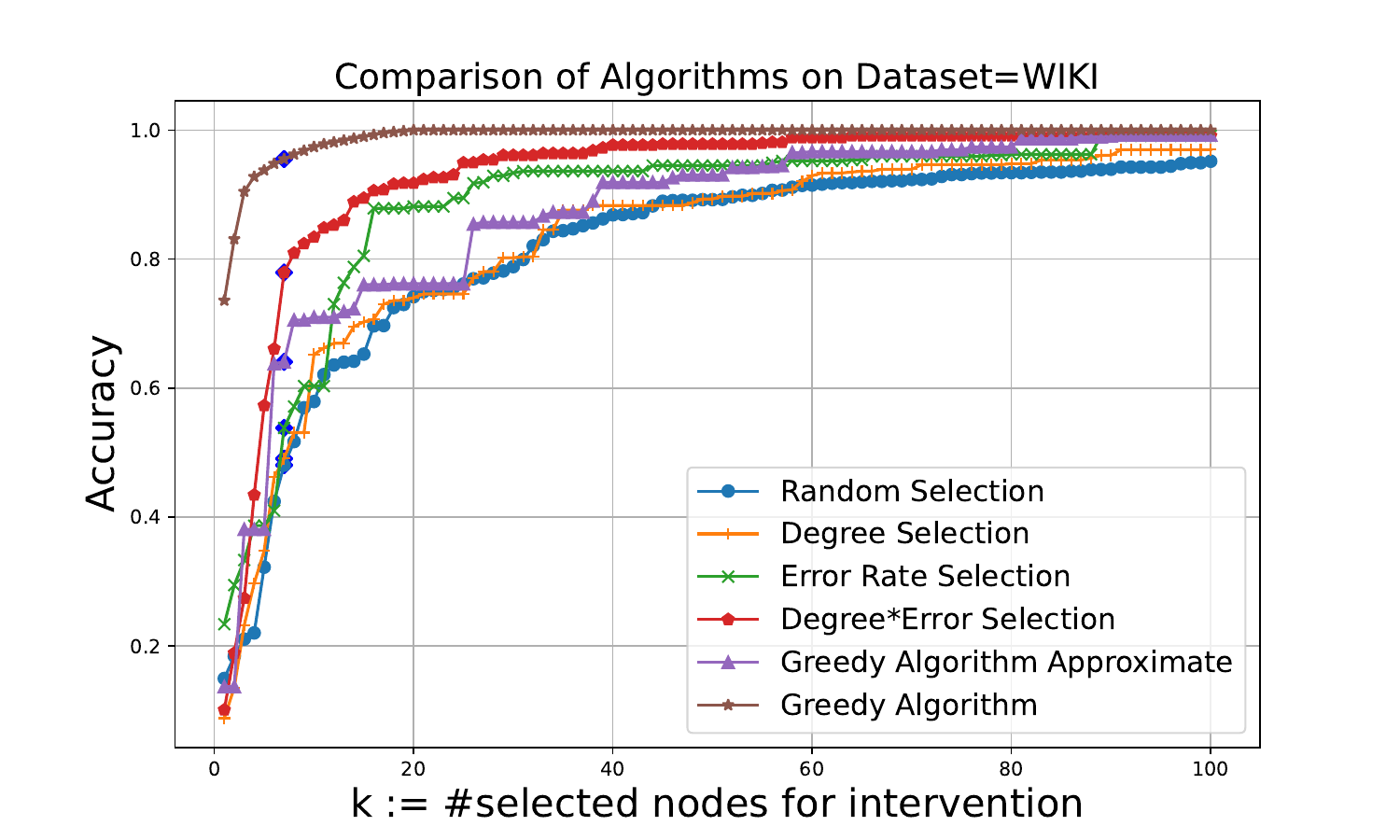}\\
    \end{tabular}
      \caption{More experimental results on different datasets.}
      \label{fig:exp_appendix}
\end{figure}



\end{document}